\newtheorem{lemma}{Lemma}
\newtheorem{remark}{Remark}
\newtheorem{theorem}{Theorem}
\newtheorem{corollary}{Corollary}
\newtheorem{proposition}{Proposition}
\newcommand{\ifConferenceVersion}{\iftrue}
\newcommand{\ifJournalVersion}{\iffalse}
\newcommand{\remove}[1]{}
 \newenvironment{proofof}[1]{\medskip\noindent {\bf Proof of #1 : }} { \qed \medskip }
\newcommand{\inst}[1]{$^{#1}$}
\newcommand{\probleme}[3]{\medskip\noindent\textbf{#1 problem}\\ \noindent \emph{Instance: }#2\\ \noindent \emph{Question: }#3\medskip}
\newcommand{\cA}{\ensuremath{\mathcal{S}\xspace}}
\newcommand\X{X_{LR}^b\xspace}
\newcommand\Y{Y_{RL}^b\xspace}
\newcommand\rclr{Reach_{LR}^c\xspace}
\newcommand\rcrl{Reach_{RL}^c\xspace}
\newcommand\rblr{Reach_{LR}^b\xspace}
\newcommand\rbrl{Reach_{RL}^b\xspace}
\newcommand\acrl{Act_{RL}^c\xspace}
\newcommand\aclr{Act_{LR}^c\xspace}
\newcommand\abrl{Act_{RL}^b\xspace}
\newcommand\ablr{Act_{LR}^b\xspace}
\newcommand\tclr{TH_{LR}^c\xspace}
\newcommand\tcrl{TH_{RL}^c\xspace}
\newcommand\sclr{S_{LR}^c\xspace}
\newcommand\scrl{S_{RL}^c\xspace}
\newcommand\Optbc{\mbox{{\tt OptimalBroadcast}}\xspace}
\newcommand\TwoAprTree{\mbox{{\tt UnknownTree}}\xspace}
\newcommand\CompThLR{\mbox{{\tt ThresholdLR}}\xspace}
\newcommand\CompOptimal{\mbox{{\tt ComputeOptimal}}\xspace}
\newcommand\CompOptimalPos{{{\tt Optimal\-At\-Index}}\xspace}
\newcommand\CompThRL{\mbox{{\tt ThresholdRL}}\xspace}
\newcommand\STree{\mbox{{\tt KnownGraph}}\xspace}
\newcommand\strat{\mbox{strategy}\xspace}
\newcommand\convergecast{convergecast\xspace}
\newcommand\broadcast{broadcast\xspace}
\newcommand\ccast{convergecast\xspace}
\newcommand\cccast{centralized convergecast\xspace}
\newcommand\Cccast{Centralized convergecast\xspace}
\newcommand\Cbcast{Centralized broadcast\xspace}
\newcommand\Dccast{Distributed convergecast\xspace}
\newcommand\subproblem{carry\xspace}
\newcommand\cS{\ensuremath{\mathcal{S}\xspace}}
\begin{document}
\sloppy

\SetAlFnt{\small\tt}
\SetAlCapFnt{\tt}
\SetProcArgSty{texttt}

\title{\bf {Convergecast and Broadcast by Power-Aware Mobile Agents\thanks{A preliminary version of this paper appeared in Proc. 26th International Symposium of Distributed Computing (DISC 2012).}}}


\author{
Julian Anaya\inst{1},
J\'{e}r\'{e}mie Chalopin\inst{2},
Jurek Czyzowicz\inst{1},\\
Arnaud Labourel\inst{2},
Andrzej Pelc\inst{1}$^,$\footnote{{Partially supported by NSERC discovery grant and by the Research Chair in Distributed Computing at the Universit\'e du Qu\'{e}bec en Outaouais.}} ,
Yann Vax\`es\inst{2}\\
\inst{1}Universit\'{e} du Qu\'{e}bec en Outaouais,
C.P. 1250, Gatineau, Qc. J8X 3X7
Canada.\\
E-mails: \url{ingjuliananaya@gmail.com},  \url{jurek@uqo.ca}, \url{pelc@uqo.ca}\\
\inst{2} LIF, CNRS \& Aix-Marseille University,
13288 Marseille, France.\\
E-mails: \url{{jeremie.chalopin,arnaud.labourel,yann.vaxes}@lif.univ-mrs.fr}
}

\maketitle

\begin{abstract}
A set of identical, mobile agents is deployed in a weighted network. Each agent has a battery -- a power source allowing it to move along network edges. An agent uses its battery proportionally to the distance traveled. We consider two tasks : {\em \convergecast}, in which at the beginning, each agent has some initial piece of information, and information of all agents has to be collected by some agent; and {\em \broadcast} in which information of one specified agent has to be made available to all other agents. In both tasks, the agents exchange the currently possessed information when they meet.

The objective of this paper is to investigate what is the minimal value of power, initially available to all agents, so that {\convergecast} or {\broadcast} can be achieved. We study this question in the centralized and the distributed settings. In the centralized setting, there is a central monitor that schedules the moves of all agents. In the distributed setting every agent has to perform an algorithm being unaware of the network. 

In the centralized setting, we give a linear-time algorithm to compute the optimal battery power and the strategy using it, both for \convergecast and for \broadcast, when agents are on the line. We also show that finding the optimal battery power for \convergecast or for \broadcast is NP-hard for the class of trees. On the other hand, we give a polynomial algorithm that finds a 2-approximation for \convergecast and a 4-approximation for \broadcast, for arbitrary graphs.

In the distributed setting, we give a 2-competitive 
algorithm for {\convergecast} in trees and a 4-competitive algorithm for \broadcast in trees. The competitive ratio of 2 is proved to be the best for the problem of \convergecast, even if we only consider line networks. Indeed, we show that there is no ($2-\epsilon$)-competitive algorithm for \convergecast or for \broadcast in the class of lines, for any $\epsilon>0$.
\end{abstract}

\section{Introduction}\label{sect:intro}

\subsection{The model and the problem}

A set of agents is deployed in a network represented by a weighted graph $G$. An edge weight is a positive real representing the length of the edge, i.e., the distance between its endpoints along the edge. The agents start simultaneously at different nodes of $G$.  Every agent has a battery: a power source allowing it to move in a continuous way along the network edges. An agent may stop at any point of a network edge (i.e. at any distance from the edge endpoints, up to the edge weight). The movements of an agent use its battery proportionally to the distance traveled. We assume that all agents move at the same speed that is equal to one, i.e., we can interchange the notions of the distance traveled and the time spent while traveling. In the beginning, the agents start with the same amount of power noted $P$, allowing all agents to travel the same distance $P$. 

We consider two tasks: {\em \convergecast}, in which at the beginning, each agent has some initial piece of information, and information of all agents has to be collected by some agent, not necessarily predetermined; and {\em \broadcast} in which information of one specified agent has to be made available to all other agents. In both tasks, agents notice when they meet (at a node or inside an edge) and they exchange the currently held information at every meeting.

 The task of \convergecast is important, e.g., when agents have partial information about the topology of the network and
the aggregate information can be used to construct a map of it, or when individual agents hold measurements
performed by sensors located at their initial positions and collected information serves to make some global decision  based on all measurements. The task of \broadcast is used, e.g., when a preselected leader has to share some information with others agents in order to organize their collaboration in future tasks.

Agents try to cooperate so that {\convergecast} (respectively \broadcast) is achieved with the smallest possible agent's initial battery power $P_{OPT}^c$ (respectively $P_{OPT}^b$), i.e., minimizing the maximum distance traveled by an agent. We investigate these two problems in two possible settings, centralized and distributed. 

In the centralized setting, the optimization problems must be solved by a central authority knowing the network and the initial positions of all the agents. We call \emph{\strat} a finite sequence of movements executed by the agents. During each movement, starting at a specific time, an agent walks between two points belonging to the same network edge. A {\strat} is a {\convergecast} {\strat} if the sequence of movements results in one agent getting the initial information of every agent. 
A {\strat} is a {\broadcast} {\strat} if the sequence of movements results in all agents getting the initial information of the source agent. We consider two different versions of the problem : the decision problem, i.e., deciding if there exists a   {\convergecast} {\strat} or a broadcast \strat using power $P$ (where $P$ is the input of the problem) and the optimization problem, i.e., computing the smallest amount of power that is sufficient to achieve \convergecast or \broadcast. 

In the distributed setting, the task of \convergecast or broadcast must be approached individually by each agent.
Each agent is unaware of the network, of its position in the network and of the positions (or even the presence) of any other agents. The agents are anonymous and they execute the same deterministic algorithm. Each agent has a very simple sensing device allowing it to detect the presence of other agents at its current location in the network. Each agent is also aware of the degree of the node at which it is located, as well as the port through which it enters a node, called an \emph{entry port}. We assume that the ports of a node of degree $d$ are represented by integers $1,2, \dots, d$. Agents can meet at a node or inside an edge. When two or more agents meet at a node, each of them is aware of the direction from which the other agent is coming, i.e., the last entry port of each agent.

 Since the measure of efficiency in this paper is the battery power (or the maximum distance traveled by an agent, which is proportional to the battery power used) we do not try to optimize the other resources (e.g. global execution time, local computation time, memory size of the agents, communication bandwidth, etc.). In particular, we conservatively suppose that, whenever two agents meet, they automatically exchange the entire information they hold (rather than the new information only). This information exchange procedure is never explicitly mentioned in our algorithms, supposing, by default, that it always takes place when a meeting occurs. The efficiency of a distributed solution is expressed by the {\em competitive ratio}, which is the worst-case ratio of the amount of power necessary to solve the {\convergecast} or the broadcast problem by the distributed algorithm with respect to the amount of power computed by the optimal centralized algorithm, which is executed for the same agents' initial positions. 

It is easy to see, that in the optimal centralized solution for the case of the line and the tree, the original network may be truncated by removing some portions and leaving only the connected part of it containing all the agents (this way all leaves of the remaining tree contain initial positions of agents). We make this assumption also in the distributed setting, since no finite competitive ratio is achievable if this condition is dropped. Indeed, two nearby anonymous agents inside a long line need to travel, in the worst case, a long distance to one of its endpoints in order to meet.

\subsection{Related work}\label{s:related}

Rapidly developing network and computer industry fueled the research interest in mobile agents computing. Mobile agents are often interpreted as 
software agents, i.e., programs migrating from host to host in a network, performing some specific tasks. However, the recent developments in computer technology bring up problems related to physical mobile devices. These include robots or motor vehicles and various wireless gadgets.
Examples of agents also include living beings: humans (e.g. soldiers in the battlefield or disaster relief personnel) or animals (e.g. birds, swarms of insects). 

In many applications the involved mobile agents are small and have to be produced at low cost in massive numbers. Consequently, in many papers, the computational power of mobile agents is assumed to be very limited and feasibility of some important distributed tasks for such collections of agents is investigated. For example \cite{AA06} introduced {\em population protocols}, modeling wireless sensor networks by extremely limited finite-state computational devices. The agents of population protocols move according to some mobility pattern totally out of their control and they interact randomly in pairs. This is called {\em passive mobility}, intended to model, e.g., some unstable environment, like a flow of water, chemical solution, human blood, wind or unpredictable mobility of agents' carriers (e.g. vehicles or flocks of birds). On the other hand,
\cite{SY} introduced anonymous, oblivious, asynchronous, mobile agents which cannot directly communicate, but they can occasionally observe the environment. Gathering and convergence  \cite{AOSY,CFPS,CP,C15}, as well as pattern formation \cite{DFSY,FPSW,SY,YS} were studied for such agents.

Apart from the feasibility questions for limited agents, the optimization problems related to the efficient usage of agents' resources have been also investigated. Energy management of (not necessarily mobile) computational devices has been a major concern in recent research papers (cf. \cite{Albers}). Fundamental techniques proposed to reduce power consumption of computer systems include power-down strategies (see \cite{Albers,AIS,ISG}) and speed scaling (introduced in \cite{YDS}). Several papers proposed centralized \cite{Bunde,SL,YDS} or distributed  \cite{Albers,Ambuhl,AIS,ISG} algorithms. However, most of this research on power efficiency concerned optimization of overall power used. Similar to our setting, assignment of charges to the system components in order to minimize the maximal charge has a flavor of another important optimization problem which is load balancing (cf. \cite{Azar}).

In wireless sensor and ad hoc networks the power awareness has been often related to the data communication via efficient routing protocols (e.g. \cite{Ambuhl,SL}. However in many applications of mobile agents (e.g. those involving actively mobile, physical agents) the agent's energy is mostly used for it's mobility purpose rather than communication, since active moving often requires running some mechanical components, while communication mostly involves (less energy-prone) electronic devices.
Consequently, in most tasks involving moving agents, like exploration, searching or pattern formation, the distance traveled is the main optimization criterion (cf. \cite{AH,AG,ABRS,BCR,BeRS,BlRS,DP,DKS,FGKP,MMS}). Single agent exploration of an unknown environment has been studied for graphs, e.g. \cite{AH,DP}, or geometric terrains, \cite{BCR,BlRS}. 

While a single agent cannot explore a graph of unknown size unless pebble (landmark) usage is permitted (see \cite{BFRSV}), a pair of robots are able to explore and map a directed graph of maximal degree $d$ in $O(d^2n^5)$ time with high probability (cf. \cite{BS}). In the case of a team of collaborating mobile agents, the challenge is to balance the workload among the agents so that the time to achieve the required goal is minimized. However this task is often hard (cf. \cite{FHK}), even in the case of two agents in a tree, \cite{AB}. On the other hand, the authors of \cite{FGKP} study the problem of agents exploring a tree, showing $O(k/ \log k)$ competitive ratio of their distributed algorithm provided that writing (and reading) at tree nodes is permitted. 

Assumptions similar to our paper have been made in \cite{ABRS,BlRS,DKS} where the mobile agents are constrained to travel a fixed distance to explore an unknown graph \cite{ABRS,BlRS}, or tree \cite{DKS}. In \cite{ABRS,BlRS} a mobile agent has to return to its home base to refuel (or recharge its battery) so that the same maximal distance may repeatedly be traversed. \cite{DKS} gives an 8-competitive distributed algorithm for a set of agents  with the same amount of power exploring the tree starting at the same node.

The {\convergecast} problem is sometimes viewed as a special case of the data aggregation question (e.g. \cite{KEW,RV}) and it has been studied mainly for wireless and sensor networks, where the battery power usage is an important issue (cf. \cite{KK,AGS}). Recently \cite{CJABL} considered the online and offline settings of the scheduling problem when data has to be delivered to mobile clients while they travel within the communication range of wireless stations. \cite{KK} presents a randomized distributed {\convergecast} algorithm for geometric ad-hoc networks and study the trade-off between the energy used and the latency of {\convergecast}. The \broadcast problem for stationary processors has been extensively studied both for the message passing model, see e.g. \cite{AGP}, and for the wireless model, see e.g. \cite{BGI}. To the best of our knowledge, the problem of the present paper, when the mobile agents perform {\convergecast} or broadcast by exchanging the held information when meeting, while optimizing the maximal power used by a mobile agent, has never been investigated before.

\subsection{Our results}\label{s:results}

In the centralized setting, we give a linear-time algorithm to compute the optimal battery power and the strategy using it, both for \convergecast and for \broadcast, when agents are on the line. We also show that finding the optimal battery power for \convergecast or for \broadcast is NP-hard for the class of trees. In fact, the respective decision problem is strongly NP-complete. On the other hand, we give a polynomial algorithm that finds a 2-approximation for \convergecast and a 4-approximation for \broadcast, for arbitrary graphs.

In the distributed setting, we give a 2-competitive 
algorithm for {\convergecast} in trees and a 4-competitive algorithm for \broadcast in trees. The competitive ratio of 2 is proved to be the best for the problem of \convergecast, even if we only consider line networks. Indeed, we show that there is no ($2-\epsilon$)-competitive algorithm for \convergecast or for \broadcast in the class of lines, for any $\epsilon>0$.

The following table gives the summary of our results.

\begin{table}[h]
\centering
\begin{tabular}{|c|m{6cm}|m{6cm}|}\hline
\diaghead{Problems Setting}%
{\large Setting}{\large Problems}&
\multicolumn{1}{c|}{Convergecast}&\multicolumn{1}{c|}{Broadcast}\\
\hline
\multirow{3}{*}[-0.22cm]{Centralized}&\multicolumn{2}{m{12cm}|}{$\bullet$ linear-time algorithm to compute optimal battery power and strategy on lines}\\
&\multicolumn{2}{l|}{$\bullet$ proof that the above problem is NP-hard on trees}\\\cline{2-3}
&$\bullet$ polynomial 2-approximation on arbitrary graphs&$\bullet$ polynomial 4-approximation on arbitrary graphs\\\hline
\multirow{3}{*}[0.22cm]{Distributed}&$\bullet$ 2-competitive algorithm for trees&$\bullet$ 4-competitive algorithm for trees\\\cline{2-3}
&\multicolumn{2}{m{8cm}|}{$\bullet$ proof that there is no $(2-\epsilon)$-competitive algorithm on lines, for any $\epsilon>0$}\\\hline
\end{tabular}
\caption{Summary of our results}
\end{table}

\noindent\textbf{Roadmap}

In Section 2, we show that we can restrict the search for the optimal strategy for convergecast or broadcast on the line to some smaller subclass of strategies called regular strategies. In Section 3, we present our centralized algorithms for convergecast and broadcast on lines. Section 4 is devoted to centralized convergecast and broadcast on trees and graphs. In Section 5, we investigate convergecast and broadcast in the distributed setting. Section 6 contains conclusions and open problems.

\section{Regular strategies for convergecast and broadcast on lines}

In this section, we show that if we are given a {\convergecast} (respectively \broadcast) strategy for some initial positions of agents in the line, then we can always modify it in order to get another {\convergecast} (respectively \broadcast) strategy, using the same amount of maximal power for every agent, satisfying some simple properties. Such strategies will be called \emph{regular}. These observations permit to restrict the search for the optimal strategy to some smaller and easier to handle subclass of
strategies.

We order agents according to their positions on the line. Hence we can assume w.l.o.g., that agent $a_i$, for $1 \leq i \leq n$ is initially positioned at point $Pos[i]$ of the line of length $\ell$ and that $0\leq Pos[1]< Pos[2] < \ldots < Pos[n]\leq\ell $. 
The set $Pos[1:n]$ will be called a \emph{configuration} for the line of length $\ell$.

\subsection{Regular {\subproblem} strategies}

Given a configuration $Pos[1:n]$, a starting point $s$, a
target point $t$ ($s < t$), and an amount of power $P$, we want to know if there
exists a strategy $\cS$ for the agents enabling them to move the
information from $s$ to $t$ so that the amount of power spent by
each agent is at most $P$. Strategies that move information from point $s$ to point $t$ will be called \emph{carry} strategies for $(Pos[1:n],s,t,P)$. We restrict attention to configurations $Pos[1:n]$ such that $|s -Pos[1]|<P$ and $|t-Pos[n]|<P$ because otherwise either $Pos[1]$ (respectively $Pos[n]$) is useless or it is impossible to carry information from $s$ to $t$. A \emph{regular} carry strategy for $(Pos[1:n],s,t,P)$ is the set of moves for agents $a_1,a_2,\dots, a_n$ defined as follows: agent $a_i$ first goes back to a point $b_i \leq Pos[i]$, getting there the information
from the previous agent (except $a_1$ that has to go to $s$), then it goes forward to a point $f_i \geq b_i$. Moreover, we require
that each agent travels the maximal possible
distance, i.e., it spends all its power.

\begin{lemma}\label{lem-subproblem}
%
%
%
If there exists a \subproblem strategy for  $(Pos[1:n],s,t,P)$, then there exist the following two regular \subproblem strategies. 

The pull strategy that can be computed iteratively (in linear time) starting with the
last agent:
\begin{enumerate}
\item $b_1 \leq s$, $f_n = t$,
\item $b_i = f_{i-1}, \forall 2 \leq i \leq n$,
\item $f_i = P + 2b_i - Pos[i] \geq b_i,  \forall 1 \leq i \leq n$.
\end{enumerate}

The push strategy that can be computed iteratively (in linear time)  starting with the
first agent:
\begin{enumerate}
\item $b_1 = \min\{Pos[1],s\}$, $f_n \geq t$,
\item $b_i = \min(f_{i-1},Pos[i]), \forall 2 \leq i \leq n$,
\item $f_i = P + 2b_i - Pos[i] \geq b_i,  \forall 1 \leq i \leq n$.
\end{enumerate}
\end{lemma}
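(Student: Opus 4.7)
My plan is to prove both claims by induction on the number of agents $n$, using the existence of an arbitrary \subproblem strategy $\cS$ with budget $P$ as the inductive hypothesis. The first observation is that the formula $f_i = P + 2b_i - Pos[i]$ is simply the equation $(Pos[i]-b_i)+(f_i-b_i) = P$, so any agent obeying it and the ordering $b_i \le Pos[i] \le f_i$ uses exactly power $P$. Consequently the only things left to verify are (i) the geometric ordering $b_i \le Pos[i] \le f_i$ at every step, (ii) $b_1 \le s$ for the pull case so that $a_1$ can actually fetch the information at $s$, and (iii) $f_n \ge t$ for the push case so that $a_n$ delivers it to $t$. The linear-time claim is immediate, as both recurrences do constant work per agent.

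For the pull strategy I would argue backward from $a_n$. An exchange argument on the line, justified by the fact that two adjacent agents cannot cross without meeting and therefore exchanging the information they hold, lets me reorganize $\cS$ into a strategy in which the information is carried by agents in the order $a_1,a_2,\ldots,a_n$, each performing a single back-then-forward move. In this reorganized schedule, $a_n$ receives the information from $a_{n-1}$ at some point $q$ and then travels right to $t$; the budget $(Pos[n]-q)+(t-q)\le P$ forces $q \ge (Pos[n]+t-P)/2$, which is exactly the value of $b_n$ produced by the pull recurrence. The constraints $b_n \le Pos[n]$ and $b_n \ge Pos[n]-P$ then follow from the standing assumption $|t-Pos[n]| < P$. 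The remaining task, carrying information from $s$ to $b_n = f_{n-1}$ with the first $n-1$ agents and power $P$, is a strictly smaller instance of the same problem, and the inductive hypothesis supplies the rest of the pull strategy together with $b_1 \le s$.

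For the push strategy I would argue symmetrically, forward from $a_1$: this agent first goes to $\min\{Pos[1],s\}$ to pick up the information and then uses all remaining power to reach $f_1 = P + 2b_1 - Pos[1]$. The invariant I would propagate is that $f_i$ is at least as large as the rightmost point $\phi_i$ reached by the information after being carried by agents $a_1,\ldots,a_i$ in any regular reorganization of $\cS$. The definition $b_i = \min(f_{i-1},Pos[i])$ guarantees $b_i \le Pos[i]$ (and equals $Pos[i]$ precisely when $a_i$ can wait at its initial position for the information), while the inductive step $f_i \ge \phi_i$ at $i = n$, combined with $\phi_n \ge t$, yields $f_n \ge t$.

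The principal obstacle I anticipate is making the exchange argument fully rigorous: transforming an arbitrary $\cS$, in which agents may oscillate arbitrarily and the information may be passed in an arbitrary order, into the canonical in-order back-then-forward form without exceeding the per-agent budget $P$. On the line this is intuitively compelling because the left-to-right ordering of agents is preserved through every meeting and the information necessarily flows along a well-defined chain of carriers, but turning that intuition into a formal trajectory-rewriting argument requires careful bookkeeping of where and when each handoff occurs.
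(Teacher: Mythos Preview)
Your high-level plan is sound, and you are right that once the in-order, back-then-forward reorganisation is in hand, everything else is bookkeeping. But as you yourself flag, that reorganisation \emph{is} the content of the lemma, and your proposal does not supply it. The fact that agents on a line cannot cross without meeting preserves their left-to-right order, but it does not force the information to be handed off in the order $a_1,a_2,\ldots,a_n$: nothing a priori prevents $a_3$ from being the first to reach $s$, handing the packet to $a_2$, who later hands it back. Turning ``order of agents is preserved'' into ``carriers act in index order, each with a single back-then-forward sweep'' is exactly the difficulty, and your backward induction on $n$ presupposes it---you need $a_n$ to be the last carrier before you can peel it off and apply the hypothesis to $(Pos[1{:}n{-}1],s,b_n,P)$.

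The paper avoids a global reorganisation via a minimal-counterexample argument: take the fewest agents, and then the smallest $s$, for which a carry strategy exists but the pull strategy fails. Let $a_i$ be the first agent to reach $s$; it can advance the information to at most $s'=2s+P-Pos[i]$, and the other $n-1$ agents then solve $(Pos[S\setminus\{i\}],s',t,P)$, so by minimality of $n$ they admit a pull strategy. If $i=1$ this already yields a pull strategy for the original instance. If $i>1$, one swaps the roles of $a_1$ and $a_i$---let $a_1$ fetch and hand to $a_i$, who then carries to $f_1$---and a two-line computation shows the new fetch point is $b_1' = s + (Pos[1]-Pos[i])/4 < s$, contradicting minimality of $s$. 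One explicit swap replaces your unspecified exchange argument.

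For the push strategy the paper does not run an independent forward induction but derives it from pull: the pull intervals for $i\le n-1$ solve $(Pos[1{:}n{-}1],s,b_n,P)$, so by induction there is a push strategy there with $f'_{n-1}\ge b_n$; setting $b'_n=\min\{Pos[n],f'_{n-1}\}$ gives $f'_n\ge f_n=t$. This is shorter than establishing a second invariant from scratch.
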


\begin{proof}
%
We first show that there exists a pull strategy. 
Consider $(Pos[1:n],s,t,P)$ with the minimum number of
agents such that there exists a \subproblem strategy, but no
pull strategy. We consider the smallest value $s$ such that $(Pos[1:n],s,t,P)$
admits a \subproblem strategy but no pull strategy.

If $Pos[1] < s$, then either $Pos[1]+P < s$, or $Pos[1]+P \geq s$. In the
first case, $a_1$ cannot move the information between $s$ and $t$, and
then $(Pos[2:n],s,t,P)$ admits a carry strategy but not a pull strategy and has fewer agents. In the
second case, $\cS$ is also a \subproblem strategy for $(Pos[1:n],Pos[1],t,P)$ and there is no pull strategy for $(Pos[1:n],Pos[1],t,P)$,
contradicting our choice of $s$.

Hence, we may suppose that $Pos[1]\geq s$. Since there exists a carry strategy $\cS$, let $a_i$ be the first agent that
reaches $s$.  The rightmost point where $a_i$ can move the information
from $s$ is $s'= 2s+P-Pos[i]$. Since $\cS$ is a \subproblem strategy,
when considering all the agents except $i$, $\cS$ is a \subproblem
strategy for $(Pos[S\setminus\{i\}],s',t,P)$. By minimality of the
number of agents, the pull strategy solves the subproblem on
$(Pos[S\setminus\{i\}],s',t,P)$.  Consequently, we can assume that
$\cS$ is a pull strategy on $(Pos[S\setminus\{i\}],s',t,P)$.  If $i =
1$, by minimality of $s$, we have $s' = b_2$ and thus $\cS$ is a pull
strategy which is a contradiction. Hence, suppose that $i>1$. Note that if $Pos[i]=Pos[1]$, we can exchange the roles of
$a_i$ and $a_1$ and we are in the previous case.
Hence, suppose that $Pos[i] > Pos[1]$ and let $[b_1,f_1]$ be the interval
that $a_1$ traverses with the information when $\cS$ is applied; by
minimality of $s$, $b_1 = s'$ and consequently we have $P = Pos[i] +
b_1 - 2s = Pos[1]+f_1-2b_1$, and thus $s = (2Pos[i]+Pos[1]+f_1-3P)/4$.
Consider now the strategy where we exchange the roles of $a_1$ and
$a_i$: $a_1$ gets the information from $s$, gives it to $a_i$, and
$a_i$ goes to $f_1$. More formally, let $f_i' = f_1$, $b_i' = (Pos[i]
+ f_i' - P)/2$, $f_1' = b_i'$ and $b_1' = (Pos[1] + f_1' - P)/2$. From
our definition of $f_1'$ and $s_1'$ and the first part of the proof,
there exists a \subproblem strategy for
$(Pos[1:n],b_1',t,P)$. However, $b_1' = (2Pos[1]+Pos[i]+f_1-3P)/4 = s
+ (Pos[1] - Pos[i])/4 < s$, contradicting the minimality of $s$.

Consequently, if there exists a carry strategy $\cS$ for $(Pos[1:n],s,t,P)$, then there exists
a pull strategy on
$(Pos[1:n],s,t,P)$. 

\medskip

Now suppose that $(Pos[1:n],s,t,P)$ admits a \subproblem strategy. 
From the first part of the proof, we know that it admits a pull strategy.
The push strategy for $(Pos[1:n],s,t,P)$ can
be obtained inductively from the pull strategy. 
Let $[b_i,f_i]$ for $i=1,...,n$ be the set 
of intervals that induces the pull strategy for 
$(Pos[1:n],s,t,P).$ Notice that $[b_i,f_i]$ for $i=1,...,n-1$ 
induces the pull strategy for $(Pos[1:n-1],s,b_n,P).$ 
By induction, there exists a set of intervals $[b'_i,f'_i]$ that 
induces a push strategy for $(Pos[1:n-1],s,b_n,P)$ 
with $f'_{n-1}\ge b_n.$ We define $b'_n = \min\{Pos[n],f'_{n-1}\}$ 
and $f'_n = P + 2b'_n - Pos[n].$ Since $b'_n\ge f'_{n-1} \ge b_n,$ 
we deduce that $f'_n \ge f_n \ge t$ and therefore the set of intervals
$[b'_i,f'_i]$ induces a push strategy for 
$(Pos[1:n],s,t,P)$.  
\end{proof}

\begin{remark}\label{rem-pull-push}
Note that the pull strategy is uniquely defined by a configuration
$Pos[1:n]$, a target point $t$, and an amount of power $P$ and enables
to compute the smallest $s$ such that $(Pos[1:n],s,t,P)$ admits a
\subproblem strategy.

Similarly, the push strategy is uniquely defined by a configuration
$Pos[1:n]$, a starting point $s$, and an amount of power $P$ and enables
to compute the largest $t$ such that $(Pos[1:n],s,t,P)$ admits a
\subproblem strategy.
\end{remark}

Note that carry strategies are defined for the target $t$ larger than the starting point $s$. A carry strategy will be called \emph{reverse} if the target $t$ is smaller than $s$ and all moves to the right are replaced by moves to the left and vice-versa.

\subsection{Regular convergecast strategies}

We now define the notion of a regular convergecast strategy for $Pos[1:n]$ on the segment $[0,\ell]$, using power at most $P$. 
Without loss of generality, we suppose that $Pos[1]=0$ and $Pos[n]=\ell$. Intuitively, a regular convergecast strategy divides the set of all agents into the set of left agents and the set of right agents such that left agents execute a push strategy from $Pos[1]$ and right agents execute a reverse push strategy from $Pos[n]$.

More formally, a \emph{regular} convergecast strategy is given by a partition of the agents into two sets $LR = \{a_i
\mid i \leq p\}$ and $RL = \{a_i \mid i > p\}$ for some $p$, and by two points $b_i, f_i$ of segment $[0, \ell]$
for each
agent $a_i$,
such that
\begin{enumerate}[(1)]
\item if $a_i \in LR$, $b_i = \min \{f_{i-1},Pos[i]\}$ and
  $f_i= 2b_i+P-Pos[i]$,
\item if $a_i \in RL$, $b_i = \max \{f_{i+1},Pos[i]\}$ and $f_i =
  2b_i-P-Pos[i]$, 
\item $F_{LR} = \max \{f_i \mid a_i \in LR\} \geq F_{RL} = \min\{f_i
  \mid a_i \in RL\}$.
\end{enumerate}

Suppose that we are given a partition of the agents into two
disjoint sets $LR$ and $RL$ and values $b_i, f_i$ for each
agent $a_i$ satisfying conditions (1)-(3). Then the following moves define a regular convergecast strategy: first, every agent $a_i \in LR \cup RL$ moves to $b_i$;
subsequently, every agent in $LR$ moves to $f_i$ once it learns the initial
information of $a_1$; then, every agent in $RL$ moves to $f_i$ once
it learns the initial information of $a_n$.  Let $a_k$ be an agent from $LR$
such that $f_k$ is maximum. Once $a_k$ has moved to $f_k$, it knows
the initial information of all the agents $a_i$ such that $b_i \leq
f_k$. If $f_k \geq \ell$, {\convergecast} is achieved. Otherwise,
since $f_k = \max \{f_i \mid a_i \in LR\} \geq \min\{f_i \mid a_i \in
RL\}$, we know that there exists an agent $a_j \in RL$ such that $f_j
\leq f_k < b_j$.  When $a_j$ reaches $f_k$ it knows the initial
information of all the agents such that $b_i \geq f_k$ and thus, $a_j$
and $a_k$ know the initial information of all agents, which accomplishes convergecast.

The following lemma shows that we can restrict attention to regular convergecast strategies.

\begin{lemma}\label{lem:regconv}
If there exists a {\convergecast} strategy for a configuration
$Pos[1:n]$ using power at most $P$ then there exists a regular convergecast strategy for the configuration
$Pos[1:n]$ using power at most $P$.
\end{lemma}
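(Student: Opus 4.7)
The plan is to convert any convergecast strategy $\cS$ using power at most $P$ into a regular one, by separating its information flow into a rightward half (carrying $a_1$'s information) and a leftward half (carrying $a_n$'s information), and regularizing each half using Lemma~\ref{lem-subproblem}.

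First, I exploit the first aggregation moment in $\cS$. Let $t^*$ be the earliest time at which a single agent holds both $a_1$'s and $a_n$'s initial information; by convergecast, $t^*$ exists. At $t^*$, two agents $a_k$ and $a_j$ meet at some position $x^*$, with $a_k$ bringing $a_1$'s information and $a_j$ bringing $a_n$'s. Let $I_1$ (resp.\ $I_2$) be the set of agents that at some time in $[0,t^*]$ hold $a_1$'s (resp.\ $a_n$'s) information. By minimality of $t^*$, no agent holds both before $t^*$, so $I_1\cap I_2=\emptyset$. Transfers of $a_1$'s information occur only between meetings of agents in $I_1$, so the restriction of $\cS$ to $I_1$ is a \subproblem strategy from $Pos[1]$ to $x^*$ with power $\leq P$ per agent; analogously, restricting to $I_2$ gives a reverse \subproblem from $Pos[n]$ to $x^*$.

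Second, I combine Lemma~\ref{lem-subproblem} with a monotonicity observation. By Remark~\ref{rem-pull-push}, the push strategy from $Pos[1]$ using $I_1$ reaches at least $x^*$, and the reverse push from $Pos[n]$ using $I_2$ reaches at most $x^*$. A direct comparison of the recursive formulas for $f_i$ in Lemma~\ref{lem-subproblem} shows that inserting additional agents into the LR-set (in positional order) can only extend the rightmost point reached by the push, since an added agent either productively extends the chain or is effectively idle; symmetric monotonicity holds on the RL side. Hence the push on any initial segment $\{a_1,\ldots,a_p\}\supseteq I_1$ still reaches $\geq x^*$, and the reverse push on any final segment $\{a_{p+1},\ldots,a_n\}\supseteq I_2$ still reaches $\leq x^*$.

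Third, I select the partition index $p$. In the clean case where $\max I_1<\min I_2$, I take any $p$ with $\max I_1\leq p<\min I_2$, so that $F_{LR}(p)\geq x^*\geq F_{RL}(p)$, confirming condition (3) of the regular convergecast definition while conditions (1) and (2) are built into the push / reverse push. For the interleaved case where some $a_j\in I_1$ has index greater than some $a_i\in I_2$, I will apply a swap argument patterned on the exchange reasoning in the proof of Lemma~\ref{lem-subproblem}: reassign the lower-indexed agent $a_i$ to carry $a_1$'s information and the higher-indexed agent $a_j$ to carry $a_n$'s information, verifying that each agent's $P$-budget is preserved because the required pickup and delivery points already lie within its reach by a careful reading of $\cS$'s timings. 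Iterating this swap eliminates all index inversions between $I_1$ and $I_2$ and reduces to the clean case.

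The main obstacle I anticipate is formalizing the swap argument for the interleaved case: showing rigorously that each exchange of roles between a high-indexed $I_1$-agent and a low-indexed $I_2$-agent can be performed within the common power budget $P$. I expect this to follow the same style of reasoning used in the proof of Lemma~\ref{lem-subproblem}, which already transposes the roles of two agents carrying information across overlapping segments. Combined with the monotonicity observation, this exchange step completes the reduction of $\cS$ to a regular convergecast strategy with the same power bound.
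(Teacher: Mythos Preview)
Your overall strategy matches the paper's: split the agents into a rightward-carrying and a leftward-carrying group, regularize each side via Lemma~\ref{lem-subproblem}, and then argue by an exchange that the split can be taken to be a prefix/suffix. There are two organizational differences worth noting.

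First, the paper's partition is total. Instead of your $I_1,I_2$ (agents that actually touch $a_1$'s or $a_n$'s information before the first aggregation, which may omit idle agents), the paper places every agent into $LR$ or $RL$ according to \emph{which endpoint's information it learns first}. A short argument shows that at the convergecast time every agent has learned at least one of the two, so nothing is left over. This removes your monotonicity step entirely: you never have to pad a push with extra agents and then worry about whether the reach can shrink or whether the padded push remains feasible for those added agents.

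Second, the paper performs the swap \emph{after} converting both sides to push strategies, not on the raw sets $I_1,I_2$. It takes an adjacent inversion ($a_i\in RL$, $a_{i+1}\in LR$), writes down explicit new values $b_i',f_i',b_{i+1}',f_{i+1}'$ using the push recursion, and verifies in three short cases that condition~(3) ($F_{LR}\ge F_{RL}$) is preserved. This is exactly the obstacle you anticipate; the paper's computation is self-contained and does not reuse the exchange inside Lemma~\ref{lem-subproblem}.

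Your plan would work, but the paper's choices---a total partition, and regularize-then-swap rather than swap-then-pad---make the exchange concrete and sidestep the feasibility bookkeeping that padding introduces.
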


\begin{proof}
Consider a {\convergecast} strategy $\cA$ for a configuration
$Pos[1:n]$ using power at most $P$. Suppose that convergecast occurred at time $t$ at some point $q$.
If an agent $a_i$ does not get the initial information of $a_1$, then at time $t$ it must have been in the segment 
$[q,Pos[n]]$. Hence, by time $t$, it must have learned the initial information of $a_n$. It follows that every
agent $a_i$, for $1 < i < n$, must learn either the initial
information of agent $a_1$ or of $a_n$. Therefore, we can partition the
set of agents performing a {\convergecast} strategy into two subsets
$LR$ and $RL$, such that each agent $a_i \in LR$ learns the initial
information of agent $a_1$ before learning the initial information of
agent $a_n$ (or not learning at all the information of $a_n$). All
other agents belong to $RL$. We denote by $[b_i,f_i]$ the interval of
all points visited by $a_i \in LR$ and by $[f_j,b_j]$ the interval of points
visited by $a_j \in RL$.

Let $F_{LR} = \max \{f_i \mid a_i \in LR\}$ and $F_{RL} = \min \{f_j
\mid a_j \in RL\}$. Since $\cA$ is a convergecast strategy, we have $F_{LR} >
F_{RL}$. Observe that the agents in $LR$ move the initial information of $a_1$ from
$Pos[1]$ to $F_{LR}$ and that the agents in $RL$ move the initial information of $a_n$
from $Pos[n]$ to $F_{RL}$. From Lemma~\ref{lem-subproblem}, we can
assume that the agents in $LR$ (resp. $RL$) execute a push strategy (resp. a reverse push strategy)
and thus conditions (1)-(3) hold.

Suppose now that there exists an agent $a_i \in RL$ such that $a_{i+1}
\in LR$.  Let $f_{RL}(i) = \min \{f_j \mid a_j \in RL, j>i\}$ and
$f_{LR}(i+1)=\max \{f_j \mid a_j \in LR, j<i\}$; note that $b_i = \max
\{f_{RL}(i), Pos[i]\}$ and $b_{i+1} = \min \{f_{LR}(i+1),
Pos[i+1]\}$. Consider the strategy where we exchange the roles of
$a_i$ and $a_{i+1}$, i.e., we put $a_i \in LR$ and $a_{i+1} \in
RL$. Let $b_i' = \min\{f_{LR}(i+1),Pos[i]\}$, $b'_{i+1} =
\max\{f_{RL}(i),Pos[i+1]\}$, $f_i' = 2b_i'+P-Pos[i]$ and $f_{i+1}' =
2b_{i+1}'-P-Pos[i+1]$.

If $f_{RL}(i) \leq Pos[i+1]$, then $f'_{i+1} = Pos[{i+1}]-P \leq
b_{i+1} \leq f_{LR}(i+1)$. If $f_{LR}(i+1) \geq Pos[i]$, then $f'_{i}
= Pos[i]+P \geq b_{i} \geq f_{RL}(i)$. In both cases, we still have a
{\convergecast} strategy.

If $f_{RL}(i) \geq Pos[i+1]$ and $f_{LR}(i+1) \leq Pos[i]$, then $f_i'
= 2 f_{LR}(i+1) + P -Pos[i] > 2 f_{LR}(i+1) + P -Pos[i+1] = f_{i+1}$,
and $f_{i+1}' = 2 f_{RL}(i) - P -Pos[i+1] < 2 f_{RL}(i) - P -Pos[i] =
f_{i}$. Consequently, we still have a {\convergecast} strategy.

Applying this exchange a finite number of times, we get a regular convergecast
strategy. 
\end{proof}

\subsection{Regular {\broadcast} strategies}

We now define the notion of a regular broadcast strategy for $Pos[1:n]$ where the source agent is $a_k$, on the segment $[0,\ell]$, using power at most $P$. Without loss of generality, we suppose that $Pos[1]=0$ and $Pos[n]=\ell$. Intuitively, a regular broadcast strategy divides the set of all agents into the set of left agents and the set of right agents such that left agents execute a reverse pull strategy from $Pos[k]$ and right agents execute a pull strategy from $Pos[k]$.

More formally, a \emph{regular} broadcast strategy is given by points $b_i, f_i$ of segment $[0, \ell]$ defined for each agent $a_i$ such that
\begin{enumerate}
\item $b_1 = f_1 = Pos[1]+P$, $b_n = f_n= Pos[n] -P$,
\item if $1 < i < k$, $f_i= b_{i-1}$ and $b_i = (f_i+Pos[i]+P)/2$,
\item if $k < i < n$, $f_i= b_{i+1}$ and $b_i = (f_i+Pos[i]-P)/2$,
\item $\{b_k,f_k\} = \{b_{k-1},b_{k+1}\}$ and $|2b_k - Pos[k] - f_k|
  \leq P$ 
\end{enumerate}
Suppose that we are given points $b_i, f_i$ for each agent $a_i$, satisfying conditions (1)-(4).
Then the following moves define a regular broadcast strategy:
 initially every agent
$a_i$ moves to $b_i$. Once $a_i$ learns the source information, $a_i$ moves to $f_i$. Since (1)-(4) hold, this is a broadcast strategy and the maximum amount of power spent is at most $P$. 

Before proving that it is enough to only consider regular broadcast strategies, we need to prove the following technical lemma. 

\begin{lemma}\label{lem-strat1-broadcast}
There exists a broadcast strategy $\cS$ for a configuration
$(Pos[1:n],k,P)$ if and only if for every $i$, there exist positions
$l_i, x_i, r_i$ such that
\begin{enumerate}[(1)]
\item  for each $i$,  $l_i \leq x_i \leq r_i$
\item $x_k = Pos[k]$;
\item for each $i$, $|x_i - Pos[i]| + \min(x_i+r_i-2l_i, 2r_i-x_i-l_i)
  \leq P$.
\item for each $i$, if $x_i < Pos[k]$ (resp. $x_i > Pos[k]$), there
  exists $j$ such that $x_i \in [l_j,r_j]$ and $x_j > x_i$ (resp. $x_j
  < x_i$).
\end{enumerate}
\end{lemma}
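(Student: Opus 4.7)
The plan is to prove the stated equivalence in both directions, interpreting each triple $(l_i, x_i, r_i)$ as encoding the behavior of agent $i$ in a broadcast strategy: $x_i$ is the position at which $i$ first receives the source information, and $[l_i, r_i]$ is the range of positions visited by $i$ while carrying that information.

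For the forward direction, given a broadcast strategy $\cS$ for $(Pos[1:n], k, P)$, I would set $x_i$ to be the first point at which agent $i$ receives the source information under $\cS$ (so $x_k = Pos[k]$) and $[l_i, r_i]$ to be the smallest interval containing every position visited by $i$ after that instant. Conditions (1) and (2) are then immediate. For condition (3), the distance traveled by $i$ is at least $|x_i - Pos[i]|$ (to reach $x_i$ from $Pos[i]$) plus the minimum length of a walk starting at $x_i$ that covers $[l_i, r_i]$, which equals $\min(x_i + r_i - 2l_i, 2r_i - x_i - l_i)$ depending on whether $l_i$ or $r_i$ is visited first; since $\cS$ uses power at most $P$, (3) follows. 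For condition (4) with $x_i < Pos[k]$, the source information had to be physically delivered at $x_i$ by some agent $j$ already carrying it, which forces $x_i \in [l_j, r_j]$; by following the chain of carriers one step further whenever multiple agents simultaneously meet at $x_i$, one can always select $j$ with $x_j > x_i$. The case $x_i > Pos[k]$ is symmetric.

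For the converse, given triples satisfying (1)--(4), the candidate strategy has each agent $i$ walk directly from $Pos[i]$ to $x_i$, wait at $x_i$ until it receives the source information, and then sweep $[l_i, r_i]$ starting from $x_i$ in the direction that realizes the minimum $\min(x_i + r_i - 2l_i, 2r_i - x_i - l_i)$. By condition (3), each agent uses at most $P$ units of power. To show that the source information actually reaches every agent, I would argue by induction using the predecessor relation supplied by (4): agent $k$ has the information at time $0$; for any other agent $i$, condition (4) yields a predecessor $j$ with $x_i \in [l_j, r_j]$, so once $j$ has received the information and begins its sweep, its trajectory passes through $x_i$ and delivers the information to the waiting $i$.

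The main obstacle is to verify that this induction is well-founded, because the predecessor relation of (4) can in principle admit cycles among agents whose carrying intervals mutually overlap. To close this gap I would either (a) systematically select, for each $i$, a predecessor $j$ that minimizes $|x_j - Pos[k]|$ and show that with this canonical choice the potential $|x_i - Pos[k]|$ decreases strictly along the chain, or (b) show that any cyclic system of triples can be locally modified (by pulling $x_i$ values toward $Pos[k]$ and shrinking the carrying intervals accordingly) into an acyclic system still satisfying (1)--(4) with the same bound $P$, after which the direct construction applies. Either route ultimately reduces broadcast propagation on each side of $Pos[k]$ to a one-sided pull strategy in the sense of Lemma~\ref{lem-subproblem}, which completes the proof.
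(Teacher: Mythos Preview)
Your forward direction follows the paper's approach. The paper sharpens your ``follow the chain one step further'' as follows: among the agents $a_j$ with $x_j \le x_i$ that learned the information before $a_i$, take the one that learned \emph{first}; its own informant $a_{j'}$ then necessarily has $x_{j'} > x_i$, and since $a_{j'}$ carried the information from $x_{j'}$ down to $x_j \le x_i$, we get $x_i \in [x_j, x_{j'}] \subseteq [l_{j'}, r_{j'}]$.

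For the converse you are in fact more careful than the paper, which simply asserts ``since (4) holds, this is a broadcast strategy'' without discussing termination. Your worry is justified, and it exposes a genuine gap in the constructed strategy rather than merely in its analysis. Take $Pos[k]=0$, $l_k=r_k=0$, and two further agents at $Pos[a]=-1$, $Pos[b]=1$ with $x_a=-1$, $[l_a,r_a]=[-1,1]$ and $x_b=1$, $[l_b,r_b]=[-1,1]$, and $P=2$. Conditions (1)--(4) hold (each of $a,b$ serves as the witness in (4) for the other), yet in the constructed strategy $a_k$ never moves and $a,b$ wait forever at $\pm 1$. This same example kills your route~(a): the only admissible predecessor of $a$ under (4) is $b$, and $|x_b-Pos[k]| = |x_a-Pos[k]| = 1$, so the potential cannot strictly decrease; more to the point, since the constructed strategy itself deadlocks, no selection of predecessors can rescue it. Your route~(b)---modifying the triples to break cycles---is the viable one (in the example, reset $x_a=0$ and $[l_a,r_a]=[0,1]$; then (1)--(4) still hold with $P=2$ and the chain now terminates at $k$). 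You should drop~(a), commit to~(b), and carry it out explicitly.
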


\begin{proof}
Consider a broadcast strategy $\cS$ where the maximum amount of power
spent is $P$.  For every agent $a_i$, let $x_i$ be the position where
$a_i$ learns the information that has to be broadcast, and let $l_i$
(resp. $r_i$) be the leftmost (resp. rightmost) position reached by
$a_i$ once it got the information.  By definition of $l_i, x_i, r_i$,
(1) and (2) hold. Since the maximum amount of power spent by an agent
is at most $P$, and since the agent has to go from $Pos[i]$ to $x_i$
and then to $r_i$ and $l_i$, (3) holds.  Since every agent learns the
information, for every agent $a_i$, either $x_i = Pos[k]$, or $a_i$ meets
an agent $a_j$ in $x_i$ such that $a_j$ already has the information.
Assume that $x_i < Pos[k]$ (the other case is symmetric). If $x_i <
x_j$, then (4) holds for $i$.  Suppose now that $x_j \leq x_i \leq
Pos[k]$ and let $A$ be the non-empty set of agents $a_j$ such that
$x_j \leq x_i$ and $a_j$ learns the information before $a_i$. Let $a_j
\in A$ be the agent that is first to learn the information. Since $x_j \leq
x_i < Pos[k]$, $a_j$ learns the information from an agent $a_{j'}$
that does not belong to $A$. Consequently, $x_{j'} > x_i \geq x_j$ and
thus $x_i \in [x_j,x_{j'}] \subseteq [l_{j'},r_{j'}]$. Thus (4) holds
for $i$. 

Conversely, if we are given values $x_i, l_i, r_i$ satisfying (1)-(4),
we can exhibit a strategy for broadcast: initially every agent
$a_i$ moves to $x_i$. Once $a_i$ learns the information, if
$x_i+r_i-2l_i \leq 2r_i-x_i-l_i$, then $a_i$ moves to $l_i$ and to
$r_i$ and if $x_i+r_i-2l_i > 2r_i-x_i-l_i$, then $a_i$ moves to $r_i$ and
to $l_i$. Since (4) holds, this is a broadcast strategy and since (3) holds, the maximum amount of power spent is at most $P$. 
\end{proof}

\medskip

The following lemma shows that we can restrict attention to regular broadcast strategies.

\begin{lemma}\label{lem-shape-algo-b}
If there exists a broadcast strategy for a configuration
$Pos[1:n]$ with source agent $a_k$, using power at most $P$, then there exists a regular broadcast strategy for the configuration
$Pos[1:n]$ with source agent $a_k$, using power at most $P$.
\end{lemma}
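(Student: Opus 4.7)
The plan is to decompose any broadcast into two independent carry subproblems---one on each side of the source $a_k$---and then regularize each side via Lemma \ref{lem-subproblem}. Start from the given broadcast strategy with power at most $P$ and apply Lemma \ref{lem-strat1-broadcast} to extract, for each agent $a_i$, points $l_i \leq x_i \leq r_i$ satisfying conditions (1)--(4) of that lemma. Classify each non-source agent as a \emph{left} agent if $x_i < Pos[k]$ and a \emph{right} agent if $x_i > Pos[k]$ (agents with $x_i = Pos[k]$ can be placed on either side). Using an exchange argument analogous to the one in the proof of Lemma \ref{lem:regconv}, I would show that we may assume all $a_i$ with $i < k$ are left agents and all $a_i$ with $i > k$ are right agents, and moreover that information flows monotonically through indices on each side: for $1 < i < k$, $a_i$ receives its information from $a_{i+1}$, and symmetrically on the right.

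With this monotone structure in place, the left side reduces to a reverse carry: agents $a_{k-1}, a_{k-2}, \ldots, a_2$ must transport the information westward from $Pos[k]$ to some point no greater than $Pos[1] + P$, where agent $a_1$---which simply walks distance $P$ to the right, so that $b_1 = f_1 = Pos[1] + P$---can pick it up. Applying the mirror image of Lemma \ref{lem-subproblem} (the pull version, reversed) to these $k-2$ carriers yields a regular carry whose values satisfy exactly the recurrence $f_i = b_{i-1}$ and $b_i = (f_i + Pos[i] + P)/2$ in condition (2) of a regular broadcast strategy. The symmetric argument on the right produces the values $b_i, f_i$ for $k < i < n$ together with $b_n = f_n = Pos[n] - P$, establishing condition (3) and the other half of condition (1). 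For the source agent itself, since in the original strategy $a_k$ must have (directly or transitively) delivered the information to both chains, it can reach the two handoff points $b_{k-1}$ and $b_{k+1}$ within its budget $P$; setting $\{b_k, f_k\} = \{b_{k-1}, b_{k+1}\}$ then satisfies condition (4).

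The main obstacle will be the exchange argument in the first step: one has to show that any ``crossing'' in the information chain---an agent $a_j$ with $j < k$ that actually picks up information to the right of $Pos[k]$, or two agents on the same side whose propagation order disagrees with their indices---can be undone by swapping the roles of two consecutive agents without ever increasing the power used by any agent. This is analogous to the swapping argument in Lemma \ref{lem:regconv}, but more delicate here because each agent in a broadcast strategy is described by three points $(l_i, x_i, r_i)$ rather than a single interval, so after each swap the triple of the affected agents must be recomputed in a way that still respects conditions (1)--(4) of Lemma \ref{lem-strat1-broadcast}.
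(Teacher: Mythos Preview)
Your decomposition into two carry subproblems is sound, and applying Lemma~\ref{lem-subproblem} to each side does yield the values $b_i,f_i$ satisfying conditions (1)--(3). The gap is in your last paragraph of the plan: the sentence ``since in the original strategy $a_k$ must have (directly or transitively) delivered the information to both chains, it can reach the two handoff points $b_{k-1}$ and $b_{k+1}$ within its budget $P$'' is precisely the non-trivial part, and it does not follow from what you have done. In the original strategy the delivery to, say, the right chain may have been carried out \emph{with the help of agents $a_j$ with $j<k$}; once your exchange argument forces every such $a_j$ to be a left agent, that help is gone, and there is no reason $a_k$ alone can still reach $b_{k+1}$. So condition~(4) cannot simply be read off---it is the heart of the lemma.

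The paper organises the proof differently and in a way that isolates this difficulty. It does not first regularise the given strategy. Instead it \emph{defines} $b_i,f_i$ for $i\neq k$ directly from the pull-strategy recurrences (so conditions (1)--(3) hold by construction, and by Remark~\ref{rem-pull-push} the points $b_{k-1},b_{k+1}$ are the extremal pickup points for the two carries). All the work then goes into condition~(4): assuming $a_k$ cannot reach both $b_{k-1}$ and $b_{k+1}$, one considers, among all broadcast strategies, one minimising the number of ``crossing'' agents (agents $a_i$ with $i>k$ that reach $b_{k-1}$, or $i<k$ that reach $b_{k+1}$), invokes Lemma~\ref{lem-strat1-broadcast}, and then performs an explicit swap between $a_k$ itself and such a crossing agent $a_i$, exhibiting new triples $(x',l',r')$ that still satisfy (1)--(4) of Lemma~\ref{lem-strat1-broadcast} while strictly shrinking the crossing set---a contradiction. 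Your outline gestures at an exchange argument but between pairs of non-source agents, analogous to Lemma~\ref{lem:regconv}; the key exchange here must involve $a_k$, and you would need to supply the concrete new triples and verify the power bound, as the paper does.
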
 

\begin{proof}
%
Suppose that there exists a \broadcast strategy for
$(Pos[1:n],k,P)$.  For every agent $a_i$, $i \neq k$ we define $b_i,
f_i$ as in the definition of a regular broadcast strategy. Note that the agents $\{a_i \mid 1 < i < k\}$
execute a reverse pull strategy between $b_{k-1}$ and
$Pos[1]+P$. Similarly, the agents $\{a_i \mid k < i < n\}$ execute a
pull strategy between $b_{k+1}$ and $Pos[n]-P$.  By
Remark~\ref{rem-pull-push}, it means that there exists $i > k$
(resp. $i < k$) such that $a_i$ reaches $b_{k-1}$ (resp. $b_{k+1}$)
with the information from $a_k$.  Moreover, since the agents execute either a reverse pull 
strategy or a pull strategy, we have $Pos[k-1] \leq b_{k-1}$, and $Pos[k+1]\geq
b_{k+1}$.

Suppose the lemma does not hold. This means that $2b_{k+1} - Pos[k] -
b_{k-1} > P$, and $b_{k+1} + Pos[k] - 2b_{k-1} > P$. Consequently,
$a_k$ cannot reach both $b_{k-1}$ and $b_{k+1}$, i.e., there exists $i
<k$ such that $a_i$ reaches $b_{k+1}$, or there exists $i > k$ such
that $a_i$ reaches $b_{k-1}$. If $Pos[k] \leq b_{k-1}$, it implies
that $b_{k+1} > Pos[k] + P$, and consequently, there cannot exist a
{\broadcast} strategy since there is no \subproblem strategy on
$(Pos[k:n-1],Pos[k],Pos[n]-P,P)$. Consequently, we can assume that
$Pos[k] > b_{k-1}$. Using a similar argument we can also assume that $Pos[k] < b_{k+1}$. 


Among all {\broadcast} strategies, consider the strategy that
minimizes the size of $A = \{a_i \mid i < k \mbox{ and } a_i \mbox{
  reaches } b_{k+1}\} \cup \{a_i \mid i > k \mbox{ and } a_i \mbox{
  reaches } b_{k-1}\}$. Without loss of generality, assume that $a_k$
does not reach $b_{k-1}$, and let $i > k$ such that $a_i$ reaches
$b_{k-1}$. For each agent $a_j$, let $x_j, l_j, r_j$ be defined as in
Lemma~\ref{lem-strat1-broadcast}. Note that $r_k \leq Pos[k]+P$ and
$r_i \leq l_i + P \leq b_{k-1} + P \leq Pos[k]+P$. Moreover, $Pos[i] -
P \leq l_i \leq b_{k-1} \leq l_k$.

Consider the new strategy defined as follows: for each agent $j \notin
\{i,k\}$, let $x'_j = x_j, l_j'= l_j$ and $r_j = r_j'$; let $x'_k=x_k =
Pos[k]$, $r_k' = (Pos[k]+Pos[i])/2$ and $l_k' = Pos[i]-P$; let $x'_i =
l'_i = (Pos[k]+Pos[i])/2$ and $r'_i = Pos[k] +P$. Note that
$r'_i+Pos[i]-2l_i' \leq P$ and $2r'_k-Pos[k]-l'_k \leq P$.  Since
$[l_i,r_i] \cup [l_k,r_k] \subseteq [Pos[i]-P,Pos[k]+P] = [l'_k,r'_k]
\cup [l'_i,r'_i]$, this is still a {\broadcast} strategy, in view of
Lemma~\ref{lem-strat1-broadcast}. However, in this new strategy, there
is one agent less in $A' = \{a_i \mid i < k \mbox{ and } a_i \mbox{
  reaches } b_{k+1}\} \cup \{a_i \mid i > k \mbox{ and } a_i \mbox{
  reaches } b_{k-1}\}$ than in $A$, contradicting the choice of our
strategy. 

Consequently, either $2b_{k+1} - Pos[k] - b_{k-1} > P$, or $b_{k+1} +
Pos[k] - 2b_{k-1} > P$ and the lemma holds. 
\end{proof}

\section{{\Cccast} and broadcast on lines}\label{s:line}

\subsection{{\Cccast} on lines}\label{s:line}

In this section we consider the centralized {\convergecast} problem
for lines. We give an optimal, linear-time, deterministic centralized
algorithm, computing the optimal amount of power needed to solve
{\convergecast} for line networks and we provide a regular convergecast strategy for this amount of power. As the algorithm is quite involved,
we start by observing some properties of the optimal
strategies. 

%

\subsubsection{Properties of a {\convergecast} strategy}\label{sec-properties}

In the following, we only consider regular convergecast strategies. Note that a
regular convergecast strategy is fully determined by the value of $P$ and by the
partition of the agents into the two sets $LR$ and $RL$. For each
agent $a_i \in LR$ (resp.  $a_i \in RL$), we denote $f_i$ by
$\rclr(i,P)$ (resp. $\rcrl(i,P)$). Observe that
$\rclr(i,P)$ is the rightmost point on the line to which the set
of $i$ agents at initial positions $Pos[1:i]$, each having power $P$,
may transport their total information. Similarly,
$\rcrl(i,P)$ is the leftmost such point for agents at positions
$Pos[i:n]$.
 
Lemma~\ref{lem:regconv} permits to construct a linear-time decision
procedure verifying if a given amount $P$ of battery power is
sufficient to design a convergecast strategy for a given configuration
$Pos[1:n]$ of agents.  We first compute two lists $\rclr(i,P)$,
for $1 \leq i \leq n$ and $\rcrl(i,P)$, for $1 \leq i \leq
n$. Then we scan them to determine if there exists an index $j$, such
that $\rclr(j,P) \geq \rcrl(j+1,P)$. In such a case, we set
$LR = \{a_r \mid r \leq j\}$ and $RL = \{a_r\mid r> j\}$ and we apply
Lemma~\ref{lem:regconv} to obtain a regular \convergecast strategy where
agents $a_j$ and $a_{j+1}$ meet and exchange their information which
at this time is the entire initial information of the set of agents. If
there is no such index $j$, no {\convergecast} strategy is
possible. This implies

\begin{corollary}
In $O(n)$ time we can decide if a configuration of $n$ agents on the
line, each having a given maximal power $P$, can perform
{\convergecast}.
\end{corollary}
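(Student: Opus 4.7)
The plan is to exploit Lemma~\ref{lem:regconv} to reduce the decision problem to a single-pass sweep over the agents. By that lemma, convergecast with power $P$ is feasible iff some regular convergecast strategy with power $P$ exists. A regular strategy is fully determined (once $P$ is fixed) by the split index $j$ defining $LR = \{a_1,\dots,a_j\}$ and $RL = \{a_{j+1},\dots,a_n\}$, and its feasibility amounts exactly to condition~(3) in the definition of regular convergecast: $\rclr(j,P) \geq \rcrl(j+1,P)$ (with conventions $\rclr(0,P)=-\infty$ and $\rcrl(n+1,P)=+\infty$ to allow a degenerate side). Hence it suffices to precompute the two arrays $\rclr(\cdot,P)$ and $\rcrl(\cdot,P)$ and scan for an index $j$ satisfying the inequality.

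The crux is that both arrays can be built in linear time using the iterative push formulas from Lemma~\ref{lem-subproblem}. For $\rclr$, initialize $b_1 = Pos[1]$, then iterate $f_i = 2b_i + P - Pos[i]$ and $b_{i+1} = \min(f_i, Pos[i+1])$, setting $\rclr(i,P) := f_i$; as soon as $b_{i+1} < Pos[i+1] - P$ (i.e.\ agent $a_{i+1}$ cannot even reach the information brought by the previous agents), we set $\rclr(k,P) = -\infty$ for all $k \geq i+1$ and stop. The array $\rcrl$ is computed symmetrically from the right, using the reverse push recurrence. Each update takes constant time, giving a total of $O(n)$ operations.

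Finally, a single linear scan checks whether any $j \in \{0,1,\dots,n\}$ satisfies $\rclr(j,P) \geq \rcrl(j+1,P)$; if so, the corresponding regular strategy is a witness; otherwise, by Lemma~\ref{lem:regconv}, no convergecast strategy using power at most $P$ exists. The overall running time is $O(n)$. There is no serious obstacle to this proof: the only care needed is bookkeeping of the degenerate boundary cases (unreachable agents propagating an $\infty$ value), which is handled locally inside each recurrence step.
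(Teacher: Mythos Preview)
Your proposal is correct and follows essentially the same approach as the paper: compute the two arrays $\rclr(\cdot,P)$ and $\rcrl(\cdot,P)$ in a single linear pass each via the push (resp.\ reverse push) recurrence, then scan for an index $j$ with $\rclr(j,P)\ge\rcrl(j+1,P)$, invoking Lemma~\ref{lem:regconv} for correctness. Your treatment is slightly more explicit than the paper's about the recurrences and the handling of unreachable agents, but the argument is the same.
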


The remaining lemmas of this subsection bring up observations needed 
to construct an algorithm finding the optimal power $P_{OPT}^c$ and designing an optimal {\convergecast} strategy.

Note that if the agents are not given enough power, then it can happen
that some agent $a_p$ may never learn the information from $a_1$ (resp.
from $a_n$). In this case, $a_p$ cannot belong to $LR$
(resp. $RL$). We denote by $\aclr(p)$ the minimum amount of power
needed to ensure that $a_p$ can learn the
information from $a_1$: if $p> 0$, $\aclr(p) = \min \{P\mid
\rclr(p-1,P)+P \geq Pos[p]\}$. Similarly, we have $\acrl(p)
= \min \{P \mid \rcrl(p+1,P)-P \leq Pos[p]\}$.

Given a strategy using power $P$, for each agent $p \in LR$, we have
$P \geq \aclr(p)$ and either $\rclr(p-1,P) \geq Pos[p]$, or
$\rclr(p-1,P) \leq Pos[p]$. In the first case, $\rclr(p,P)
= Pos[p]+P$, while in the second case, $\rclr(p,P)=
2\rclr(p-1,P) +P - Pos[p]$. 

We define threshold functions $\tclr(p)$ and $\tcrl(p)$ that compute, for
each index $p$, the minimal amount of power ensuring that agent $a_p$ 
does not go back when $a_p \in LR$
(respectively $a_p \in RL$), i.e., such that $\rclr(p-1,P)=Pos[p]$ (respectively $\rcrl(p+1,P)=Pos[p]$).  
For each $p$, let $\tclr(p) = \min
\{P \mid \rclr(p,P) = Pos[p]+P\}$ and $\tcrl(p) = \min \{P \mid
\rcrl(p,P) = Pos[p]-P\}$. Clearly, $\tclr(1) = \tcrl(n) = 0$.

The next lemma shows how to compute $\rclr(q,P)$ and $\rcrl(q,P)$
if we know $\tclr(p)$ and $\tcrl(p)$ for every agent $p$.

\begin{lemma}\label{lem-eqn-reach}
Consider an amount of power $P$ and an index $q$. If $p = \max \{p'
\leq q \mid \tclr(p') < P\}$, then $\rclr(q,P) = 2^{q-p}Pos[p]
+ (2^{q-p+1}-1)P - \sum_{i=p+1}^{q} 2^{q-i}Pos[i]$. Similarly, if $p =
\min \{p' \geq q \mid \tcrl(p') < P\}$, then $\rcrl(q,P) =
2^{p-q}Pos[p] - (2^{p-q+1}-1)P - \sum_{i=q}^{p-1} 2^{i-q}Pos[i]$.
\end{lemma}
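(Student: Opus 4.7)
The plan is to prove the closed-form for $\rclr(q,P)$ by induction on $q-p$, using the one-step recurrence derived in the paragraph immediately preceding the lemma: namely $\rclr(p,P) = Pos[p]+P$ when $\rclr(p-1,P) \geq Pos[p]$, and $\rclr(p,P) = 2\rclr(p-1,P) + P - Pos[p]$ otherwise. By the very definition of $\tclr(p)$ as the smallest power making the first case hold, these two branches correspond exactly to $P \geq \tclr(p)$ and $P < \tclr(p)$; moreover, at $P = \tclr(p)$ one has $\rclr(p-1,P) = Pos[p]$, so both expressions agree. The formula for $\rcrl(q,P)$ is the mirror image and will follow by reversing the line.

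First I would extract the key consequence of the choice $p = \max\{p' \leq q \mid \tclr(p') < P\}$: every index $j$ in the range $p+1, \ldots, q$ satisfies $\tclr(j) \geq P$, so the backtrack branch of the recurrence governs $\rclr(j,P)$ throughout this range, while at the starting index $\rclr(p,P) = Pos[p]+P$.

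The induction on $q-p$ is then routine. The base case $q=p$ gives $\rclr(p,P) = Pos[p]+P = 2^{0}Pos[p] + (2^{1}-1)P$, matching the formula with an empty sum. For the inductive step, I would substitute the inductive expression for $\rclr(q-1,P)$ (valid because $p$ is also the maximum index $\leq q-1$ with $\tclr(\cdot) < P$) into $\rclr(q,P) = 2\rclr(q-1,P) + P - Pos[q]$. The simplification uses the arithmetic identity $2(2^{q-p}-1)+1 = 2^{q-p+1}-1$ for the coefficient of $P$, doubles each $2^{q-1-i}$ to $2^{q-i}$, and extends the sum $\sum_{i=p+1}^{q-1} 2^{q-i}Pos[i]$ by the term $Pos[q] = 2^{0}Pos[q]$.

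I do not anticipate a real obstacle; the content is essentially unrolling a linear recurrence with constant multiplier $2$. The only subtlety worth checking is the boundary case $P = \tclr(j)$ for some intermediate $j \in \{p+1,\ldots,q\}$: there both branches of the recurrence produce the same value, so invoking the backtrack branch uniformly on the whole range $p < j \leq q$ is legitimate and the induction step goes through unchanged. The symmetric claim for $\rcrl(q,P)$ is obtained by applying the same argument to the reflected configuration, which swaps the roles of left and right and replaces $\leq$ by $\geq$ in the definition of the relevant index $p$.
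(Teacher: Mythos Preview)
Your proposal is correct and follows essentially the same approach as the paper: an induction on $q-p$ that unrolls the backtrack recurrence $\rclr(j,P)=2\rclr(j-1,P)+P-Pos[j]$ for $j\in\{p+1,\ldots,q\}$, anchored at $\rclr(p,P)=Pos[p]+P$. The paper phrases this as a separate auxiliary claim in terms of $\rclr(p,P)$ and then substitutes $Pos[p]+P$ at the end, whereas you fold that substitution into the induction from the start, but the argument is the same.
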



\begin{proof}
We prove the first statement of the lemma; the proof of the other statement is similar.  We
first show the following claim.
 
\noindent \textbf{Claim.}
If for every $i \in [p+1,q]$, $P
\leq \tclr(i)$, then $$\rclr(q,P) = 2^{q-p}\rclr(p,P) +
(2^{q-p}-1)P - \sum_{i=p+1}^{q} 2^{q-i}Pos[i].$$

We prove the claim by induction on $q-p$.
Note that since $P \leq \tclr(q) $, $\rclr(q,P) =
2\rclr(q-1,P)+P-Pos[q]$.  Thus if $q = p+1$, the statement holds.
Suppose now that $q > p+1$.  Since $q-1 > p$, by the induction hypothesis,
we have

$$\rclr(q-1,P) = 2^{q-1-p}\rclr(p,P) + (2^{q-1-p}-1)P -
\sum_{i=p+1}^{q-1} 2^{q-1-i}Pos[i].$$  

Consequently, we have 
\begin{eqnarray*}
\rclr(q,P) &=& 2\rclr(q-1,P)+P-Pos[q]\\
&=&2^{q-p}\rclr(p,P) + (2^{q-p}-2)P - \sum_{i=p+1}^{q-1} 2^{q-i}Pos[i]+P-Pos[q] \\
&=&2^{q-p}\rclr(p,P) + (2^{q-p}-1)P -
\sum_{i=p+1}^{q} 2^{q-i}Pos[i].\\
\end{eqnarray*}
This concludes the proof of the claim.

If $p = \max \{p' \leq q \mid
\tclr(p') < P\}$, then for each $p' \in [p+1,q]$, $\tclr(p') \geq
P$ and $\rclr(p,P) = Pos[p]+P$. Consequently, 

$$\rclr(q,P) = 2^{q-p}Pos[p] +
(2^{q-p+1}-1)P - \sum_{i=p+1}^{q} 2^{q-i}Pos[i].$$
\end{proof}





In the following, we denote $\sclr(p,q) = \sum_{i=p+1}^{q}
2^{q-i}Pos[i]$ and $\scrl(p,q) = \sum_{i=q}^{p-1} 2^{i-q}Pos[i]$.

\begin{remark}\label{rem-SLR}
For every $p \leq q \leq r$, we have $\sclr(p,r) = 2^{r-q}\sclr(p,q)+\sclr(q,r)$.
\end{remark}

We now show that for an optimal {\convergecast} strategy, the last agent
of $LR$ and the first agent of $RL$ meet at some point between their initial
positions and that they need to use all the available power  to
meet. 

\begin{lemma}\label{lem-egalite-reach}
Suppose there exists an optimal {\convergecast} strategy for a
configuration $Pos[1:n]$, where the maximum power used by an agent is
$P$.  Then, there exists an integer $1 \leq p < n$ such that $Pos[p] <
\rclr(p,P) = \rcrl(p+1,P) < Pos[p+1].$

Moreover, $\forall q \leq p$, $\aclr(q) < P < \tcrl(q)$ and
$\forall q > p$, $\acrl(q) < P < \tclr(q)$.
\end{lemma}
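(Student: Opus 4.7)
The plan is to start from an optimal regular convergecast strategy---obtained via Lemma~\ref{lem:regconv}---with partition $LR = \{a_1,\ldots,a_p\}$ and $RL = \{a_{p+1},\ldots,a_n\}$, and to show that (possibly after adjusting $p$) the associated meeting point $m = \rclr(p,P) = \rcrl(p+1,P)$ satisfies the claimed strict inequalities. A regular strategy with boundary $p$ is valid iff (i) $\rclr(p,P) \geq \rcrl(p+1,P)$ (meeting), (ii) $\aclr(q) \leq P$ for every $q \leq p$, and (iii) $\acrl(q) \leq P$ for every $q > p$. The central tool will be Lemma~\ref{lem-eqn-reach}, which implies that $\rclr(p,\cdot)$ is piecewise linear in $P$ with slope at least $1$ on each piece, and $\rcrl(p+1,\cdot)$ is piecewise linear with slope at most $-1$, so that $\rclr(p,P)-\rcrl(p+1,P)$ is strictly increasing in $P$ with slope $\geq 2$.

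The first step is to argue that the meeting is tight: if $\rclr(p,P) > \rcrl(p+1,P)$ and (ii), (iii) all hold strictly, the monotonicity above lets us reduce $P$ by some $\delta>0$ while preserving every constraint, contradicting optimality. Next I would prove the strict threshold conditions. If $\aclr(q)=P$ for some $q \leq p$, the recurrence yields $\rclr(q,P) = \rclr(q-1,P) = Pos[q]-P$, hence $\rclr(q,P)+P = Pos[q] < Pos[q+1]$, which forces $\aclr(q+1) > P$ and contradicts (ii) whenever $q+1 \leq p$; thus $\aclr(q)=P$ can only occur at $q=p$. Symmetrically, $\acrl(q)=P$ can only occur at $q=p+1$, but then $\rcrl(p+1,P) = Pos[p+1]+P > Pos[p]+P \geq \rclr(p,P)$ violates (i), so $\acrl(q)<P$ strictly for every $q>p$. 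The remaining edge case $\aclr(p)=P$ gives $\rclr(p,P) = Pos[p]-P$, hence $m \leq Pos[p]-P$; I would argue that the shifted partition $(\{a_1,\ldots,a_{p-1}\},\{a_p,\ldots,a_n\})$ is also valid at $P$ and either already satisfies all the claimed conditions (with $p':=p-1$) or lets $P$ be further reduced. Iterating this boundary shift a finite number of times produces a $p$ at which $\aclr(q)<P$ for all $q \leq p$.

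With both (ii) and (iii) strict, I would establish $Pos[p] < m < Pos[p+1]$ as follows. If $m \leq Pos[p]$, consider the alternative partition $(\{a_1,\ldots,a_{p-1}\},\{a_p,\ldots,a_n\})$; using $\rcrl(p+1,P)=m \leq Pos[p]$ and the push recurrence, we get $\rcrl(p,P) = Pos[p]-P$ and $\rclr(p-1,P) - \rcrl(p,P) = \rclr(p-1,P)+P - Pos[p] > 0$ by strict $\aclr(p)<P$; so the alternative meeting has strict slack, $P$ can be reduced, contradicting optimality. The inequality $m < Pos[p+1]$ is symmetric. Finally, for $P < \tcrl(q)$ with $q \leq p$: the case $q=p$ is precisely $m > Pos[p]$, already proved; for $q<p$, if $\tcrl(q) \leq P$ then $\rcrl(q+1,P) \leq Pos[q]$, and the alternative partition $(\{a_1,\ldots,a_{q-1}\},\{a_q,\ldots,a_n\})$ has meeting slack $\rclr(q-1,P) - (Pos[q]-P) > 0$ by strict $\aclr(q)<P$, again contradicting optimality. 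The condition $P < \tclr(q)$ for $q>p$ is symmetric. The main technical difficulty is the edge case $\aclr(p)=P$ (and its right-symmetric analogue), where a direct reduction of $P$ is blocked and the boundary-shift trick is needed to relocate $p$ to a partition where all the invariants are strict.
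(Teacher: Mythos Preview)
Your approach is essentially the paper's: start from an optimal regular strategy at some index $p$, use the strict monotonicity of $\rclr(p,\cdot)-\rcrl(p+1,\cdot)$ (the paper packages this as a claim that each $\rclr(p,\cdot)$ is increasing continuous piecewise linear) to force the meeting to be tight, then derive the strict inequalities on $\aclr$, $\acrl$ and the thresholds by switching to neighbouring partitions and contradicting optimality. Your steps for $Pos[p]<m<Pos[p+1]$ and for $P<\tcrl(q)$, $P<\tclr(q)$ are exactly the paper's arguments.

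The only real wrinkle is your treatment of the edge case $\aclr(p)=P$, which you single out as the ``main technical difficulty'' and propose to resolve by a boundary shift to $p'=p-1$. This is unnecessary, and the shift is also not fully verified (you would still owe $Pos[p-1]<m'$ for the new partition, which is not automatic). The case is in fact impossible by the \emph{same} one-line argument you already use for $\acrl(p+1)=P$: if $\aclr(p)=P$ then $\rclr(p,P)=Pos[p]-P<Pos[p+1]-P\le \rcrl(p+1,P)$, violating condition~(i). The paper makes precisely this observation (phrased as: the difference $\rclr(p,\cdot)-\rcrl(p+1,\cdot)$ is strictly negative at $Q=\max\{\aclr(p),\acrl(p+1)\}$, hence the optimal $P$ is strictly larger than $Q$). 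Once you replace the boundary-shift detour by this symmetric contradiction, your proof and the paper's coincide.
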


\begin{proof}
In the proof we need the following claim. 

\medskip

\noindent\textbf{Claim. }
For every $1 \leq p \leq n$, the function $\rclr(p,\cdot)$ which assigns the value $\rclr(p,P)$ for any argument $P$, is an increasing, continuous, piecewise linear function with at most $p$ pieces on $[\aclr(p),+\infty)$.

For every $1 \leq p \leq n$, the function $\rcrl(p,\cdot)$ which assigns the value $\rcrl(p,P)$ for any argument $P$, is a decreasing continuous piecewise linear function with at most $p$ pieces on $[\acrl(p),+\infty)$.

\medskip

We prove the first statement of the claim by induction on $p$. For
$p=1$, $\rclr(1,P) = Pos[1]+ P$ and the claim holds. Suppose that
$\rclr(p,\cdot)$ is a continuous piecewise linear function on
$[\aclr(p),+\infty)$ and consider $\rclr(p+1,\cdot)$.

First note that $\aclr(p) < \aclr(p+1)$. Since
$\rclr(p,\cdot)$ is a continuous, increasing function, there
exists a unique $P = \aclr(p+1)$ such that $\rclr(p,P) + P =
Pos[p+1]$ and for every $P' > \aclr(p+1)$, $\rclr(p,P') + P' >
Pos[p+1]$. Consequently, $\rclr(p+1,\cdot)$ is well defined
on $[\aclr(p+1),+\infty)$. 

Since $\rclr(p,\cdot)$ is a continuous, increasing function, there
exists a unique $P = \tclr(p+1)$ such that $\rclr(p,P) =
Pos[p+1]$. If $\aclr(p+1) \geq P \geq \tclr(p+1)$,
$\rclr(p+1,P) = 2\rclr(p,P)+ P - Pos[p+1]$ and thus
$\rclr(p+1,\cdot)$ is an increasing, continuous, piecewise linear
function on $[\aclr(p+1), \tclr(p+1)]$ with at most $p$ pieces. If $P \geq \tclr(p+1)$,
$\rclr(P) = Pos[p+1] + P$ and thus, $\rclr(p+1,\cdot)$ is an
increasing, continuous, linear function on $[\tclr(p+1),
  +\infty)$. Since $ 2\rclr(p,\tclr(p+1)) + \tclr(p+1) -
  Pos[p+1] = Pos[p+1]+ \tclr(p+1)$, the function $\rclr(p+1,\cdot)$ is an
  increasing, continuous, piecewise linear function on $[\aclr(p+1),
    +\infty)$ with at most $p+1$ pieces. 

One can show the second statement of the claim using similar
arguments. This ends the proof of the claim.

Suppose we are given $p$ and consider the partition of the agents into
$LR= \{a_q \mid q \leq p\}$ and $RL = \{a_q \mid q > p\}$. Consider a regular
{\convergecast} strategy for this partition and where the maximum
amount of power $P$ used by an agent is minimized. We first show that
$\rclr(p,P) = \rcrl(p+1,P)$.

Let $Q = \max \{\aclr(p),\acrl(p+1)\}$.  Since
$\rclr(p,\cdot)$ is an increasing continuous function on
$[\aclr(p),+\infty)$ and $\rcrl(p+1,\cdot)$ is a decreasing
continuous function on $[\acrl(p+1),+\infty)$, the difference
$\rclr(p,\cdot) - \rcrl(p+1,\cdot)$ is a continuous
increasing function on $[Q,+\infty)$.

Consider the case where $Q = \acrl(p+1) \geq \aclr(p)$ (the
other case is similar). Since $\rcrl(p+1,Q) = \rcrl(p+2,Q) =
Pos[p+1]+Q$, $\rclr(p,Q) \leq Pos[p]+Q < Pos[p+1]+Q =
\rcrl(p+1,Q)$ and thus, $\rclr(p,Q) - \rcrl(p+1,Q) <
0$.  By definition of a regular convergecast strategy, there exists $Q'$ such that
$\rclr(p,Q') - \rcrl(p+1,Q') \geq 0$. Consequently, since the difference $\rclr(p,\cdot) - \rcrl(p+1,\cdot)$ is a continuous
increasing function on $[Q,+\infty)$, there
exists a unique $Q < P \leq Q'$ such that $\rclr(p,P) =
\rcrl(p+1,P)$.

Consider an optimal regular {\convergecast} strategy and let $P$ be the
maximum amount of power used by any agent. By definition of a regular convergecast strategy,
there
exists an index $p$ such that $\rclr(p,P) = \rcrl(p+1,P)$. 

Suppose that $\rclr(p,P) \leq Pos[p]$. In this case, we have
$\rcrl(p,P) = Pos[p] - P < \rclr(p-1,P)$ since $P >
\aclr(p)$. Consequently, according to  what we have shown above, there
exists $P' < P$ such that $\rcrl(p,P') \leq \rclr(p-1,P')$ and
$P$ is not the optimal value needed to solve {\convergecast}.
This contradiction shows that $Pos[p] <
\rclr(p,P)$.

For similar reasons, if $\rcrl(p+1,P) \geq Pos[p+1]$, $P$ is not the
optimal value needed to solve {\convergecast}. This contradiction shows that
$\rcrl(p+1,P) > Pos[p+1]$.

\medskip
We now prove that for each $q \in [1,p]$, $\aclr(q) < P$. This
follows from the fact that for each $a_q \in LR$ such that $q > 1$, we
have $\aclr(q) > \aclr(q-1)$. Consequently, for each $q \in
[1,p-1]$, $\aclr(q) > \aclr(p)$. Moreover, if $\rclr(p,P)$
is defined, then $P \geq \aclr(p)$. If $P = \aclr(p)$,
then $\rclr(p,P) = Pos[p]-P$ and thus, $\rcrl(p+1,P) \geq
Pos[p+1] - P > Pos[p] - P \geq \rclr(p,P)$. This contradicts the
first statement of the lemma. Hence, we have $P > \aclr(p)$.

For similar reasons, for each $q \in [p+1,n]$, $\acrl(q) < P$. 

\medskip
We finally prove that for each $q \in [1,p]$, $P < TH_{RL}(q)$. Suppose
there exists $q$ such that $P \geq TH_{RL}(q)$ and consider $LR = \{a_r \mid r \leq q-1\}$
and $RL = \{a_r \mid r \geq q\}$. Since $P > \aclr(q)$,
$\rclr(q-1,P) > Pos[q] - P = \rcrl(q,P)$ and consequently,
the first statement of the lemma implies that  there exists $P'
< P$ such that $\rclr(q-1,P') > \rcrl(q,P')$. This
implies that $P$ is not the optimal value needed to solve convergecast. 
This contradiction implies that for each $q \in [1,p]$, $P < TH_{RL}(q)$.

For similar reasons, for each $q \in [p+1,n]$, $P < TH_{LR}(q)$. 
\end{proof}

\subsubsection{A linear algorithm to compute the optimal power needed for
  {\convergecast}}  

We first sketch a suboptimal but much easier algorithm and later present and analyze in detail
a more involved linear-time solution to our problem.
First, we need to compute the functions $\rclr(p,\cdot)$ and $\rcrl(p,\cdot)$ for all $p$ such that $1\leq p\leq n$. By Lemma~\ref{lem-eqn-reach}, the function $\rclr(p,\cdot)$ can be computed from the values $\tclr(q)$ for all $q$ such that $1 \leq q\leq p$. Starting from $p=1$, one can compute all these functions $\rclr(p,\cdot)$, since each value $\tclr(p)=\min\{P\mid \rclr(p-1,P)=Pos[p]\}$ can be deduced from 
$\rclr(p-1,\cdot)$. The computation at step $p$ has a time complexity in $O(p)$ and so the computation of all the functions $\rclr(p,\cdot)$ takes time $O(n^2)$. Similarly, it is possible to compute all the functions $\rcrl(p,\cdot)$, for all $p$ such that $1 \leq p\leq n$, in time $O(n^2)$.  Since $\rclr(p,\cdot)$ and $\rcrl(p+1,\cdot)$ are increasing, continuous, piecewise linear functions with at most $n$ pieces, by the claim from the proof of Lemma~\ref{lem-egalite-reach}, it is possible to compute the value $P$ such that $\rclr(p,P) = \rcrl(p+1,P)$ in time $O(n)$. Hence the optimal value of power needed to achieve {\convergecast} on lines, which is $\min_{1\leq p\leq n}\{P\mid \rclr(p,P) = \rcrl(p+1,P)\}$ by Lemma~\ref{lem-egalite-reach}, can be computed in time $O(n^2)$. 

\medskip

The following result shows that the optimal power needed for convergecast on the line can in fact be computed in linear time.

\begin{theorem}\label{thm:OptPower}
In $O(n)$ time it is possible to compute  the optimal power needed to achieve
{\convergecast} on the line for configuration $Pos[1:n]$  and to compute the optimal convergecast strategy.
\end{theorem}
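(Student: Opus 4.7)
The plan is to speed up the $O(n^2)$ algorithm sketched above by representing the piecewise linear functions $\rclr(p,\cdot)$ and $\rcrl(p,\cdot)$ implicitly via two monotone stacks of ``active'' indices, and then locating the optimal split point $p$ in a single sweep.

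First I compute the thresholds $\tclr(p)$ for $p=1,\ldots,n$ incrementally, using a stack $S_{LR}$ that holds those indices $p'\leq q$ for which $\tclr(p')<\tclr(p'')$ for every $p''\in(p',q]$. By Lemma~\ref{lem-eqn-reach} these are exactly the indices that can appear as the active $p^\star$ in the formula for $\rclr(q,\cdot)$, and they appear in $S_{LR}$ in order of both index and threshold. To obtain $\tclr(q+1)$ I solve $\rclr(q,P)=Pos[q+1]$: starting from the top of $S_{LR}$, I tentatively apply the formula with that index active, and by the monotonicity of $\rclr(q,\cdot)$ (proved in the claim inside Lemma~\ref{lem-egalite-reach}'s proof) the candidate value is correct iff it exceeds the threshold at the top of the stack; otherwise I pop and retry. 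Each evaluation is $O(1)$ using the precomputed prefix sums $T(q)=\sum_{j=1}^{q} 2^{q-j}Pos[j]$, which (in the spirit of Remark~\ref{rem-SLR}) determine $\sclr(p',q)$ for any $p'$ in $O(1)$. The thresholds $\tcrl(p)$ are obtained symmetrically with a mirror stack $S_{RL}$ swept from $n$ to $1$. Since each index is pushed and popped at most once, the total work is $O(n)$.

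Next I sweep the split point $p$ from $1$ to $n-1$. For each $p$ satisfying the necessary conditions of Lemma~\ref{lem-egalite-reach} (checked in $O(1)$ from the precomputed thresholds), I compute the unique $P_p$ with $\rclr(p,P_p)=\rcrl(p+1,P_p)$. Because the two sides are continuous monotone in opposite directions, $P_p$ lies on a unique pair of linear pieces of the two stacks, and once those pieces are identified the formulas of Lemma~\ref{lem-eqn-reach} yield $P_p$ in $O(1)$. As $p$ grows by one, only the rightmost agent of $LR$ and the leftmost agent of $RL$ shift, so the pointers identifying the active pieces inside $S_{LR}$ and $S_{RL}$ each advance monotonically over the whole sweep, giving $O(1)$ amortized per step. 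I output $P_{OPT}^c=\min_p P_p$ and the corresponding split $p^*$, and I reconstruct the optimal regular strategy in $O(n)$ by applying the push (resp.\ reverse push) rules of Lemma~\ref{lem-subproblem} to $LR=\{a_1,\ldots,a_{p^*}\}$ and $RL=\{a_{p^*+1},\ldots,a_n\}$ with power $P_{OPT}^c$.

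The main obstacle is the amortized analysis of the $p$-sweep: I must argue that the active-piece pointers inside $S_{LR}$ and $S_{RL}$ never back up, so the total number of piece changes stays $O(n)$. This will follow from the strict monotonicity of thresholds within each stack combined with the fact that exactly one agent enters $LR$ and one leaves $RL$ at each step, so both pointers can only move in a fixed direction and each piece is visited at most once.
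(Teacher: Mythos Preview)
Your threshold-computation phase matches the paper's \CompThLR/\CompThRL construction, and your observation that prefix sums make each stack operation $O(1)$ is exactly right. The gap is in the second sweep.

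You propose to compute $P_p$ (the unique $P$ with $\rclr(p,P)=\rcrl(p+1,P)$) for every $p$ and claim that the pointers into the two stacks advance monotonically. Two problems. First, $P_p$ is \emph{not} monotone in $p$: once $p$ passes the optimal split, $P_p$ increases again, so the ``active piece'' determined by $P_p$ can move back toward larger thresholds. Your justification (``exactly one agent enters $LR$ and one leaves $RL$'') does not imply pointer monotonicity, because the active piece depends on the \emph{value} $P_p$, not just on the stack contents. Second, you have not said how you maintain the stack describing $\rcrl(p+1,\cdot)$ as $p$ increases. That stack is $X_{RL}(p+1)$, built by a sweep from $n$ downward; when $p$ grows you would have to \emph{undo} pops, which a plain stack does not support.

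The paper avoids both issues with one extra idea: instead of tracking $P_p$, it tracks the running minimum $P_{<r}=\min_{p<r}P_p$, which is non-increasing in $r$. Because the current power bound only goes down, both stacks can be maintained by popping alone (plus at most one push per step on the $LR$ side), and the element $(r{+}1,\tcrl(r{+}1))$ that would need to be ``undone'' on the $RL$ side is shown either to be already absent from the filtered stack $Y_{RL}(r{+}1,P_{<r})$ or to certify termination. This monotone-running-minimum trick is the missing ingredient; once you add it, your stack machinery goes through and the amortized $O(n)$ bound follows as in the paper's \CompOptimal.
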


We first explain how to compute a stack of couples $(p,\tclr(p))$
that we can subsequently use to calculate $\rclr(p,P)$ for any
given $P$.
Then, we present a linear algorithm that computes the value needed to
solve {\convergecast} when the last index $r \in LR$ is provided:
given an index $r$, we compute the optimal power needed to solve
{\convergecast} assuming that $LR = \{a_q\mid q \leq r\}$ and $RL
=\{a_q\mid q > r\}$.
Finally, we explain how to use techniques introduced in the two
previous algorithms in order to compute the optimal power needed to
solve {\convergecast}. 

\paragraph{Computing the threshold values.}

In order to describe explicitly the function $\rclr(q,\cdot)$,
we need to identify the indices $p$ such that for every $r\in
[p+1,q]$, we have $\tclr(r) > \tclr(p)$. They correspond to the
breakpoints at which the slopes of the piecewise linear function
$\rclr(q,\cdot)$ change. Indeed, if we are given such an index
$p$, then for every $P$ comprised between $\tclr(p)$ and $\min
\{\tclr(r) \mid p< r \leq q \}$, we have $\rclr(q,P) =
2^{q-p}Pos[p] + (2^{q-p+1}-1)P - \sclr(p,q)$. We denote by
$X_{LR}(q)$ this set of indices $\{p \leq q \mid \forall r \in
[p+1,q] , \tclr(r) > \tclr(p)\}$.

In particular, if we want to compute $\tclr(q+1)$, we just need to
find $p = \max\{r \leq q \mid \rclr(q, \tclr(r)) < Pos[q+1]\}$,
and then $\tclr(q+1)$ is the value of power $P$ such that
$2^{q-p}Pos[p] + (2^{q-p+1}-1)P - \sclr(p,q) = Pos[q+1]$.  Moreover,
by the choice of $p$, we have $X_{LR}(q+1) = \{r \in X_{LR}(q) \mid r \leq p\}
\cup \{q+1\}$.

Using these remarks, the function \CompThLR, with an input index $r$ of an agent, returns a stack $\mathtt{\tclr}$ containing
couples $(p,P)$ such that $p \in X_{LR}(r)$ and $P= \tclr(p)$.  
Note that in the stack $\mathtt{\tclr}$, the elements $(p,P)$ are sorted
along both components, the largest being on the top of the stack.

The function is described as follows.  Initially, the stack
$\mathtt{\tclr}$ contains only the couple $(1, \tclr(1))$.  At
each iteration, given the stack corresponding to the index $q$, in
order to compute the stack for the index $q+1$, we first pop out all
elements $(p,P)$ such that $\rclr(q,P) > Pos[q+1]$. After that,
the integer $p$ needed to compute $\tclr(q+1)$ is located on the top
of the stack.  Finally, the couple $(q+1,\tclr(q+1))$ is pushed on
the stack before we proceed with the subsequent index $q.$
 The function returns the stack $\mathtt{\tclr}$ corresponding
to the index $r$.

Below, we give the pseudo-code of the function.

\begin{function}[H]
\caption{ThresholdLR(array {$Pos[1:n]$} of real; $r$:integer):stack}\label{algo:cmptTH-s}  
\SetKwData{MyTrue}{true} \SetKwData{MyFasle}{false} 
\SetKw{Not}{not} \SetKw{MyAnd}{and}
\SetKw{Pop}{pop}
\SetKw{Push}{push}
\SetKw{KwDownTo}{down to}
\SetKw{EmptyStack}{empty\_stack}

$\mathtt{\tclr} = \EmptyStack$\;
\Push($\mathtt{\tclr}$,$(1,0)$)\;
\For{$q=1$ \KwTo $r-1$}
{
  $(p,P) = \Pop(\mathtt{\tclr})$\tcc*{$p = q$ and $P = \tclr(p)$}
  \While{$2^{q-p}*Pos[p] + (2^{q-p+1}-1)*P-\sclr(p,q) \geq Pos[q+1]$} 
        {$(p,P) = \Pop(\mathtt{\tclr})$\;}
  \tcp{while $\rclr(q,P) \geq Pos[q+1]$ we consider the next
    element in $\mathtt{\tclr}$}
  \Push($\mathtt{\tclr}$,$(p,P)$)\;
  $Q = (2^{q-p}*Pos[p] -Pos[q+1] - \sclr(p,q))/(2^{q-p+1}-1)$\;
  \tcc{$Q$ is the solution of $\rclr(q,P) = Pos[q+1]$}
  \Push($\mathtt{\tclr}$,$(q+1,Q)$)\;
}
\Return($\mathtt{\tclr}$)\;
\end{function}

The number of stack operations performed during the execution of this
function is $O(r)$. However, in order to obtain a linear number of
arithmetic operations, we need to be able to compute $2^{q-p}$ and
$\sclr(p,q)$ in constant time.

In order to compute $2^{q-p}$ efficiently, we can store the values of
$2^i$, $i \in [1,n-1]$ in an auxiliary array, that we have precomputed
in $O(n)$ time. We cannot precompute all values of $\sclr(p,q)$ since
this requires calculating $\Theta(n^2)$ values. However, from
Remark~\ref{rem-SLR}, we know that $\sclr(p,q) = \sclr(1,q) -
2^{q-p}\sclr(1,p)$. Consequently, it is enough to precompute
$\sclr(1,i)$ for each $i \in [2,n]$. Since $\sclr(1,i+1) =
2\sclr(1,i)+Pos[i+1]$, this can be done using $O(n)$ arithmetic
operations.

Similarly, we can define the function \CompThRL(\texttt{array}
$Pos[1:n]$ \texttt{of real}, $r$\texttt{:integer):stack} that returns
a stack $\mathtt{\tcrl}$ containing all pairs $(q,\tcrl(q))$ such
that for every $p \in [r,q-1]$, we have $\tcrl(p)> \tcrl(q)$.

\paragraph{Computing the optimal power when $LR$ and $RL$ are known.} 

To facilitate further reading, we first show how to compute the optimal power $P_{OPT}^c$,
if the sets $LR$ and $RL$ are known. This will be done by function \CompOptimalPos which will be not used in our final algorithm to compute optimal power but whose role is to explain some of the techniques under these additional assumptions.

Suppose that we are given an agent index $r$ and we want to
compute the optimal power needed to solve {\convergecast} when $LR =
\{a_p \mid p \leq r\}$ and $RL = \{a_q \mid q > r\}$. From
Lemma~\ref{lem-egalite-reach}, we know that there exists a unique value
$P_{OPT}^c$ such that $\rclr(r,P_{OPT}^c) = \rcrl(r+1,P_{OPT}^c)$.

As previously, by Lemma~\ref{lem-eqn-reach}, we know that the value
of $\rclr(r,P_{OPT}^c)$ depends on $p = \max\{p' \leq r \mid
\tclr(p') < P_{OPT}^c\}$. Similarly, $\rcrl(r+1,P_{OPT}^c)$ depends
on $q = \max\{q' \geq r+1 \mid \tcrl(q') < P_{OPT}^c\}$.  If we are
given the values of $p$ and $q$, then $P_{OPT}^c$ is the unique value of $P$
such that
$$ 2^{r-p}Pos[p] - (2^{r-p+1}-1)P - \sclr(p,r) = 2^{q-r-1}Pos[q] -
          (2^{q-r}-1)P - \scrl(q,r+1).$$ 

In Function \CompOptimalPos, we first use functions \CompThLR and \CompThRL to
compute the two stacks $\mathtt{\tclr}$ and $\mathtt{\tcrl}$
containing respectively $\{(p,\tclr(p)) \mid p \in X_{LR}(r)\}$ and
$\{(q,\tcrl(q)) \mid q \in X_{RL}(r+1)\}$.  Then at each iteration,
we consider the two elements $(p,P_{LR})$ and $(q, P_{RL})$ that are
on top of both stacks. If $P_{LR} \geq P_{RL}$ (the other case is
symmetric), we check whether $\rclr(r,P_{LR}) \geq
\rcrl(r+1,P_{LR})$.  In this case, we have $P > P_{OPT}^c$, so we
remove $(p,P_{LR})$ from the stack $\mathtt{\tclr}$ and we proceed
to the next iteration.  If $\rclr(r,P_{LR}) <
\rcrl(r+1,P_{LR})$, we know that $P_{OPT}^c \geq P_{LR} \geq
P_{RL}$ and we can compute the value of $P_{OPT}^c$ using
Lemma~\ref{lem-eqn-reach}.

Let $Y_{LR}(r,P)$ denote $ \{(p,\tclr(p)) \mid p \in X_{LR}(r)
\mbox{ and } \tclr(p) < P\}$ and $Y_{RL}(r+1,P) = \{(q,\tcrl(q))
\mid q \in X_{RL}(r+1) \mbox{ and } \tcrl(q) < P\}$.

\begin{remark}\label{rem-atpos}
At the end of the execution of Function \CompOptimalPos,
$\mathtt{\tclr}$ and $\mathtt{\tcrl}$ contain respectively
$Y_{LR}(r,P_{OPT}^c)$ and $Y_{RL}(r+1,P_{OPT}^c)$.

Moreover, if initially the two stacks $\mathtt{\tclr}$ and
$\mathtt{\tcrl}$ contain respectively $Y_{LR}(r,P)$ and
$Y_{RL}(r+1,P)$ for some $P \geq P_{OPT}^c$, then the value computed by
the function is also $P_{OPT}^c$ .
\end{remark}

The pseudo-code of the of Function \CompOptimalPos is given below.

\begin{function}[h!]
\caption{OptimalAtIndex (array {$Pos[1:n]$} of real;
  r:integer):stack \label{algo:cmptTH}}   
\SetKwData{MyTrue}{true} \SetKwData{MyFasle}{false} 
\SetKw{Not}{not} \SetKw{MyAnd}{and}
\SetKw{Pop}{pop}
\SetKw{Push}{push}
\SetKw{KwDownTo}{down to}
\SetKw{EmptyStack}{empty\_stack}

$\mathtt{\tclr} = \CompThLR(r)$; $\mathtt{\tcrl} =
\CompThRL(r+1)$\;

$(p,P_{LR}) = \Pop (\mathtt{\tclr})$; $(q,P_{RL}) = \Pop (\mathtt{\tcrl})$;
$P = \max \{P_{LR},P_{RL}\}$\;
\tcc{$p = r$, $P_{LR}={\tclr}(r)$, $q = r+1$, $P_{RL}={\tcrl}(r+1)$.}

\While(\\ \tcc*[h]{While $\rclr(r,P) \geq
          \rcrl(r+1,P)$ do}){$2^{r-p}Pos[p] + (2^{r-p+1}-1)P - \sclr(p,r)
\geq  2^{q-r-1}Pos[q] - (2^{q-r}-1)P - \scrl(q,r+1) 
$}  
      {
        \lIf{$P_{LR} \geq P_{RL}$}
            {
              $(p,P_{LR}) = \Pop(\mathtt{\tclr})$}
        \lElse{
              $(q,P_{RL}) = \Pop(\mathtt{\tcrl})$}
        $P =  \max \{P_{LR},P_{RL}\}$\;    
      }      
      $P_{OPT}^c = (2^{q-r-1}Pos[q] - \scrl(q,r+1) - 2^{r-p}Pos[p]+\sclr(p,r))/(2^{r-p+1}+2^{q-r}-2)$\; 
\tcc{$P_{OPT}^c$ is the solution of $\rclr(r,P_{OPT}^c) =
  \rcrl(r+1,P_{OPT}^c)$}  
\Return($P_{OPT}^c$)\; 
\end{function}

\paragraph{Computing the optimal power for \convergecast.}

We now explain how to compute the optimal amount of power needed to
achieve {\convergecast} using a linear number of operations. Notice that
Function \CompOptimalPos does it only provided the partition of the agents
in $LR$ and $RL$.

Let $P_{<r}$ be the optimal value needed to solve convergecast when
$\max\{s\mid a_{s} \in LR\} < r$, i.e., when the two agents whose
meeting results in merging the entire information are $a_i$ and $a_{i+1}$
for some $i < r$. If $\rclr(r,P_{<r}) \leq \rcrl(r+1,P_{<r})$, then
$P_{<r+1} = P_{<r}$. However, if $\rclr(r,P_{<r}) >
\rcrl(r+1,P_{<r})$, then $P_{<r+1} < P_{<r}$ and $P_{<r+1}$ is
the unique value of $P$ such that $\rclr(r,P) =
\rcrl(r+1,P)$. This corresponds to the value returned by function
\CompOptimalPos($Pos,r$).

The general idea of Function \CompOptimal is to iteratively compute
the value of $P_{<r}$. If we need a linear time algorithm, we cannot
call repeatedly the function \CompOptimalPos. However, from
Remark~\ref{rem-atpos}, in order to compute $P_{<r+1}$ when $P_{<r+1}
\leq P_{<r}$, it is enough to know $Y_{LR}(r,P_{<r})$ and
$Y_{RL}(r+1,P_{<r})$.  If we know $Y_{LR}(r,P_{<r})$ and
$Y_{RL}(r+1,P_{<r})$, then we can use the same algorithm as in
\CompOptimalPos in order to compute $P_{<r+1}$. Moreover, from
Remark~\ref{rem-atpos}, we also get $Y_{LR}(r,P_{<r+1})$ and
$Y_{RL}(r+1,P_{<r+1})$ when we compute $P_{<r+1}$.

Before proceeding to the next iteration, we need to compute
$Y_{LR}(r+1,P_{<r+1})$ and $Y_{RL}(r+2,P_{<r+1})$ from
$Y_{LR}(r,P_{<r+1})$ and $Y_{RL}(r+1,P_{<r+1})$. Note that if
$\tclr(r) > P_{<r+1}$, then $Y_{LR}(r+1,P_{<r+1}) =
Y_{LR}(r,P_{<r+1})$. If $\tclr(r) \leq P_{<r+1}$, we can use the
same function as in \CompThLR to compute $Y_{LR}(r+1,P_{<r+1}) =
\{(p,\tclr(p)) \mid p \in X_{LR}(r)\}$ from $Y_{LR}(r,P_{<r+1})$.
Consider now $Y_{RL}(r+2,P_{<r+1})$. If $\tcrl(r+1) > P_{<r+1}$,
then $(r+1,\tcrl(r+1)) \notin Y_{RL}(r+1,P_{<r+1})$, and
$Y_{RL}(r+2,P_{<r+1}) = Y_{RL}(r+1,P_{<r+1})$. If $\tcrl(r+1) \leq
P_{<r+1}$, then either $Pos[r+1]-P_{<r+1} \geq
\rcrl(r+1,P_{<r+1})$ if $P_{<r+1} = P_{<r}$, or 
$Pos[r+1]-P_{<r+1} = \rcrl(r+1,P_{<r+1}) =
\rclr(r,P_{<r+1})$ if $P_{<r+1} < P_{<r}$. In both cases, it
implies that $\aclr(r+1) \geq P_{<r+1}$.  Therefore,
by Lemma~\ref{lem-egalite-reach}, $P_{<i} = P_{<r+1}$ for every $i \geq
r+1$ and we can return the value of $P_{<r+1}$.

In Function \CompOptimal, at each iteration, the stack
$\mathtt{\tclr}$ contains $Y_{LR}(r,P_{<r})$ (except its top
element) and the stack $\mathtt{\tcrl}$ contains
$Y_{RL}(r+1,P_{<r})$ (except its top element). Initially,
$\mathtt{\tclr}$ is empty and $\mathtt{\tcrl}$ contains $O(n)$
elements. In each iteration, at most one element is pushed into the
stack $\mathtt{\tclr}$ and no element is pushed into the stack
$\mathtt{\tcrl}$. Consequently, the number of stack operations
performed by Function \CompOptimal is linear.

\begin{function}[h]
\caption{ComputeOptimal (array $Pos{[1:n]}$ of real):real\label{algo:opt}}
\SetKwData{MyTrue}{true} \SetKwData{MyFasle}{false} 
\SetKw{Not}{not} \SetKw{MyAnd}{and}
\SetKw{Pop}{pop}
\SetKw{Push}{push}
\SetKw{EmptyStack}{empty\_stack}

$\mathtt{\tclr} = \EmptyStack$; $\mathtt{\tcrl} = \CompThRL(Pos)$\; 

$(q,P_{RL}) = \Pop(\mathtt{\tcrl})$; $P_{OPT}^c = P_{RL}$\; 
\tcc{$q = 1$, $P_{RL} = \tcrl(1)$}
$(q,P_{RL}) = \Pop(\mathtt{\tcrl})$; 
$p=1$;$P_{LR}=0$\;
\For{$r = 1$ \KwTo $n-1$}
{
  \tcc{$P_{OPT}^c = P_{<r} \geq P_{LR}, P_{RL}$
  }
  \If{$2^{r-p}Pos[p] + (2^{r-p+1}-1)P_{OPT}^c - \sclr(p,r) > 
    2^{q-r-1}Pos[q] - (2^{q-r}-1)P_{OPT}^c - \scrl(q,r+1)$} 
     {
       \tcc{If $\rclr(r,P_{OPT}^c) > \rcrl(r+1,P_{OPT}^c)$ then 
         $P_{OPT}^c$ is larger than the value needed to solve
         {\convergecast} at position $r$. We apply now the same algorithm
         as in function \CompOptimalPos. 
       }
        $P =  \max \{P_{LR},P_{RL}\}$\;    
       \While{$2^{r-p}Pos[p] + (2^{r-p+1}-1)P - \sclr(p,r)\geq
           2^{q-r-1}Pos[q] - (2^{q-r}-1)P - \scrl(q,r+1) 
         $}  
             {
               \lIf{$P_{LR} \geq P_{RL}$}
                   {
                     $(p,P_{LR}) = \Pop(\mathtt{\tclr})$}
                   \lElse{
                     $(q,P_{RL}) = \Pop(\mathtt{\tcrl})$}
                   $P =  \max \{P_{LR},P_{RL}\}$\;    
             }      
             $P_{OPT}^c = (2^{q-r-1}Pos[q] - \scrl(q,r+1) -
             2^{r-p}Pos[p]+\sclr(p,r))/(2^{r-p+1}+2^{q-r}-2)$\;  
             \tcc{$P_{OPT}^c = P_{<r+1}$ is the solution of $\rclr(r,P_{OPT}^c) =
               \rcrl(r+1,P_{OPT}^c)$}  
     }

     \lIf{$q = r+1$} 
         {
           \Return $P_{OPT}^c$}
         \tcc{In this case, $P_{OPT}^c \geq \tcrl(r+1)$ and thus
           $P_{OPT}^c = P_{<r} = \aclr(r+1)$: for any $s > r$, $P_{<s} =
           P_{<r}$}
     \If{$2^{r-p}*Pos[p] + (2^{r-p+1}-1)*P_{OPT}^c-\sclr(p,r) \geq Pos[r+1]$}
         {
           \tcc{If $\rclr(r,P_{OPT}^c) \geq Pos[r+1]$ then
             ${\tclr}(r+1) \leq P_{OPT}^c$ and we update
             $\mathtt{\tclr}$, using the same 
             algorithm as in function \CompThLR. }
           \While{$2^{r-p}*Pos[p] + (2^{r-p+1}-1)*P_{LR}-\sclr(p,r)$}
           {$(p,P_{LR}) = \Pop(\mathtt{\tclr})$\;}
           \Push($\mathtt{\tclr}$,($p,P_{LR}$))\;      
           $P_{LR} = (Pos[r+1]+\sclr(p,r)-2^{r-p}*Pos[p])/(2^{r-p+1}-1)$\;
           $p=r+1$\; 
         }     
}
\end{function}

Notice that the partition of agents into sets $LR$ and $RL$ is given by the value of index $r$ when $P_{OPT}^c$ is returned by Function
\CompOptimal.
Since an optimal regular convergecast strategy is fully determined by the value of $P_{OPT}^c$ and by the partition of the agents into the sets $LR$ and $RL$, Function \CompOptimal also yields an optimal convergecast strategy. Hence, this concludes the proof of Theorem~\ref{thm:OptPower}.
\vspace{-0.2cm}

\subsection{{\Cbcast} on lines}\label{s:line-b}

In this section we consider the centralized {\broadcast} problem
for lines. We give an optimal, linear-time, deterministic centralized
algorithm, computing the optimal amount of power needed to solve
{\broadcast} for line networks and computing an optimal broadcast strategy. 



\subsubsection{Properties of a {\broadcast} strategy}\label{sec-properties-b}

In the following, we only consider regular broadcast strategies. Note that a
regular broadcast strategy is fully determined by the value of $P$ and by 
the two possible values of $b_k$ for the source agent $a_k$ ($b_k=b_{k-1}$ or $b_k=b_{k+1}$). 

Let $LR = \{a_1,a_2,\hdots,a_{k-1}\}$ and $RL = \{a_{k+1}, a_{k+2},\hdots,a_n\}$. (Note that we slightly abuse notation by using the same names $LR$ and $RL$ for subsets of agents as in convergecast.)
For each agent $a_i\in LR$  (resp. $a_i\in RL$ ), we denote $b_i$ by
$\rblr(i,P)$ (resp. $\rbrl(i,P)$). Observe that
$\rblr(i,P)$ is the rightmost point on the line from which the set
of $i$ agents at initial positions $Pos[1:i]$, each having power $P$,
may pick the information and bring it back to $a_1$. Similarly,
$\rbrl(i,P)$ is the leftmost point from which the agents at positions
$Pos[i:n]$ may pick the information and bring it back to $a_n$.

Lemma~\ref{lem-shape-algo-b} permits to construct a linear-time decision
procedure verifying if a given amount $P$ of battery power is
sufficient to design a broadcast strategy for a given configuration
$Pos[1:n]$ of agents and a specified source agent $a_k$.  We first compute $b_{k-1}=\rblr(k-1,P)$ and $b_{k+1}=\rbrl(k+1,P)$. 
Then we test if $|2\rblr(k-1,P) - Pos[k] - \rbrl(k+1,P)|$ or  $|2\rbrl(k+1,P) - Pos[k] - \rblr(k-1,P)|$ are less or equal than $P$.
If one of the inequalities is true then there is a broadcast strategy. Otherwise, broadcast is not possible. This implies

\begin{corollary}
In $O(n)$ time we can decide if a configuration $Pos[1:n]$ of $n$ agents on the
line, each having a given maximal power $P$, can perform
{\broadcast} for a given source agent.
\end{corollary}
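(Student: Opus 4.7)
The plan is to directly turn the characterization of regular broadcast strategies into a one-pass algorithm. By Lemma~\ref{lem-shape-algo-b}, there is a broadcast strategy using power at most $P$ for the configuration $Pos[1:n]$ with source $a_k$ if and only if there exists a regular broadcast strategy with that same power. A regular broadcast strategy is entirely determined by the points $b_i, f_i$ prescribed by conditions (1)--(4), and these conditions say that the agents $a_1, \dots, a_{k-1}$ must form a (reverse) pull chain that delivers information from $b_{k-1}$ back to $Pos[1]$, the agents $a_{k+1}, \dots, a_n$ must form a pull chain delivering information from $b_{k+1}$ back to $Pos[n]$, and the source $a_k$ must be able to visit both $b_{k-1}$ and $b_{k+1}$ starting from $Pos[k]$. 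So the decision problem reduces to computing $\rblr(k-1,P)$ and $\rbrl(k+1,P)$, and then testing a simple inequality at the source.

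To compute $\rblr(k-1,P)$ in linear time, I would use the explicit recurrences of the regular broadcast strategy: set $b_1 := Pos[1] + P$ and then, for $i = 2, \dots, k-1$, set $b_i := (b_{i-1} + Pos[i] + P)/2$. This is exactly the value given by conditions (1)--(2), and it is also the rightmost point that $a_i$ can reach from $Pos[i]$ while still returning to $b_{i-1}$ with budget $P$. During this scan, if at any index $i$ we have $b_{i-1} < Pos[i] - P$, then $a_i$ cannot even reach $b_{i-1}$ from $Pos[i]$ and the left chain is infeasible, in which case we declare that no broadcast strategy exists. The computation of $\rbrl(k+1,P)$ is done symmetrically in a single right-to-left scan using condition (3). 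Both scans together perform $O(n)$ arithmetic operations.

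Once $b_{k-1}$ and $b_{k+1}$ are in hand, condition (4) tells us that feasibility amounts to checking whether the source $a_k$, starting at $Pos[k]$, can visit both points within power $P$. Walking first to $b_{k-1}$ and then to $b_{k+1}$ costs $Pos[k] + b_{k+1} - 2b_{k-1} = |2b_{k-1} - Pos[k] - b_{k+1}|$, while walking first to $b_{k+1}$ and then to $b_{k-1}$ costs $2b_{k+1} - Pos[k] - b_{k-1} = |2b_{k+1} - Pos[k] - b_{k-1}|$, and at least one of these must be at most $P$. Combining the two scans with this final $O(1)$ test yields the $O(n)$ decision procedure, and handles the degenerate cases $k = 1$ or $k = n$ trivially (one of the chains is empty and the test involves only the remaining side).

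The only subtle point — and the main obstacle — is arguing correctness of the linear scan for $\rblr$, namely that taking $b_i$ as the solution of $(b_i - Pos[i]) + (b_i - b_{i-1}) = P$ is without loss of generality the rightmost feasible choice, and that an infeasibility detected during the scan (\emph{viz.}\ $b_{i-1} < Pos[i] - P$) really does preclude \emph{any} regular broadcast strategy with power $P$, not just the one built greedily. Both facts follow from Lemma~\ref{lem-shape-algo-b} together with the monotonicity of the pull construction of Lemma~\ref{lem-subproblem}: any regular broadcast strategy yields a valid pull strategy on $Pos[1:k-1]$ from $Pos[1]+P$ to $b_{k-1}$, and among all such choices our scan produces the rightmost $b_{k-1}$, which can only make the final test at the source easier to satisfy.
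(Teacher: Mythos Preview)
Your proposal is correct and follows essentially the same approach as the paper: compute $b_{k-1} = \rblr(k-1,P)$ and $b_{k+1} = \rbrl(k+1,P)$ via the linear recurrences of the regular broadcast strategy, then apply condition~(4) by testing whether $|2b_{k-1} - Pos[k] - b_{k+1}| \leq P$ or $|2b_{k+1} - Pos[k] - b_{k-1}| \leq P$. You add an explicit infeasibility check during the scan (the case $b_{i-1} < Pos[i] - P$), which the paper leaves implicit but which is indeed part of a complete decision procedure.
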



Note that if the agents are not given enough power, then it can happen
that some agent $a_p,$ $1 \le p\le k$ (resp. $k \le q \le n$) cannot reach the
point $\rblr(p-1,P)$ (resp. $\rbrl(q+1,P)$). We denote by
$\ablr(p)$ (resp. $\abrl(q)$) the minimum amount of power $P$ we have to give the agents
to ensure that $a_p$ (resp. $a_q$) can reach $\rblr(p-1,P)$ (resp. $\rbrl(q+1,P)$). We have : $\ablr(1) = \abrl(n) = 0$ and if $2 \le p\le k,$
$\ablr(p) = \min \{P\mid \rblr(p-1,P) \geq Pos[p]-P\}$. 
Similarly, if $k \le q \le n-1,$ we have $\abrl(q) = \min \{P \mid
\rbrl(q+1,P) \leq Pos[q]+P\}$.

In a regular broadcast strategy using power $P$, for each agent $p \in LR$ such that $P
\geq \ablr(p),$ we have $\rblr(p,P)=(\rblr(p-1,P) + P + Pos[p])/2$.
Similarly, for each agent $q \in RL$ such that $P
\geq \abrl(q),$ we have $\rbrl(q,P)=(\rbrl(q+1,P) - P + Pos[q])/2$.
The next lemma shows how to compute $\rblr(p,\cdot)$ on the interval 
$[\ablr(p),+\infty)$ for every $p\in \{1,2,\hdots,k\}$ and
$\rbrl(q,\cdot)$ on the interval $[\abrl(q),+\infty)$ for every $q\in
\{k,k+1,\hdots,n\}.$

\begin{lemma}\label{lem-eqn-reach-b}
Consider an index $p\in \{1,2,\hdots,k\}$ and an amount of power 
$P\ge \ablr(p),$ then $\rblr(p,P) = \rblr(p,\ablr(p)) + P - \ablr(p).$
Analogously, for an index $q\in \{k,k+1,\hdots,n\},$ and an amount of 
power $P\ge \abrl(q),$ we have $\rbrl(q,P) = \rbrl(q,\abrl(q)) - P + \abrl(q).$
\end{lemma}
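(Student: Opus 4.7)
The plan is to prove the first equality by induction on $p$, the second equality being symmetric. The core idea is that the recurrence
\[
\rblr(p,P) = \bigl(\rblr(p-1,P) + P + Pos[p]\bigr)/2,
\]
which holds as soon as $P \geq \ablr(p)$ (so that $a_p$ can actually reach $\rblr(p-1,P)$ from $Pos[p]$ and come back with the information), has slope $\tfrac{1}{2}$ in its first argument and slope $\tfrac{1}{2}$ in its explicit $P$ term. Thus if $\rblr(p-1,\cdot)$ is affine with slope $1$ beyond $\ablr(p-1)$, then so is $\rblr(p,\cdot)$ beyond $\ablr(p)$.

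For the base case $p=1$, the agent $a_1$ only has to walk to the right from $Pos[1]$, so $\rblr(1,P) = Pos[1]+P$ for every $P \geq 0 = \ablr(1)$, and the formula holds trivially. For the inductive step, I would first observe that $\ablr(p) \geq \ablr(p-1)$: indeed, the very definition of $\ablr(p)$ as a minimum over $\{P \mid \rblr(p-1,P) \geq Pos[p]-P\}$ requires $\rblr(p-1,P)$ to be well defined, which forces $P \geq \ablr(p-1)$. Hence any $P \geq \ablr(p)$ also satisfies $P \geq \ablr(p-1)$, allowing me to invoke the induction hypothesis $\rblr(p-1,P) = \rblr(p-1,\ablr(p-1)) + P - \ablr(p-1)$.

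Plugging this into the recurrence yields
\[
\rblr(p,P) = \tfrac{1}{2}\bigl(\rblr(p-1,\ablr(p-1)) - \ablr(p-1) + Pos[p]\bigr) + P,
\]
an affine function of $P$ with slope $1$ on $[\ablr(p), +\infty)$. Evaluating at $P = \ablr(p)$ and subtracting gives the desired identity $\rblr(p,P) = \rblr(p,\ablr(p)) + P - \ablr(p)$. The proof for $\rbrl(q,\cdot)$ is identical up to reflecting left and right, using the recurrence $\rbrl(q,P) = \bigl(\rbrl(q+1,P) - P + Pos[q]\bigr)/2$ and the monotonicity $\abrl(q) \geq \abrl(q+1)$.

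There is no real obstacle here; the only point requiring mild care is the boundary condition, i.e.\ checking that at $P = \ablr(p)$ the recurrence is actually valid (which it is, since by minimality $\rblr(p-1,\ablr(p)) = Pos[p]-\ablr(p)$, so $a_p$ just barely reaches the point where the information is waiting), ensuring that the affine formula extends cleanly down to the left endpoint of the domain.
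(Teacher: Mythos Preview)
Your proposal is correct and follows essentially the same approach as the paper: an induction on $p$ using the recurrence $\rblr(p,P)=(\rblr(p-1,P)+Pos[p]+P)/2$, substituting the induction hypothesis to obtain an affine function of $P$ with slope $1$, and evaluating at $P=\ablr(p)$ to identify the constant. Your explicit justification that $\ablr(p)\ge\ablr(p-1)$ and your remark about the boundary case are slightly more careful than the paper's version, but the argument is otherwise identical.
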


\begin{proof}
First, we show by induction on $p$ that for any $p\in \{1,2,\hdots,k\}$ and an amount of power $P\ge \ablr(p)$, we have $\rblr(p,P) = \rblr(p,\ablr(p)) + P - \ablr(p).$
This is true for $p=1$ since $\rblr(1,P) = P$, $\ablr(1)=0$ and $\rblr(1,\ablr(1))=0$.
Now, assume by induction that $\rblr(p-1,P) = \rblr(p-1,\ablr(p-1)) + P - \ablr(p-1)$.
By definition of a regular broadcast strategy, we have for all $P\geq \ablr$:

\[
\begin{aligned}
\rblr(p,P)&=\frac{\rblr(p-1,P)+Pos[p]+P}{2}\\
&=\frac{\rblr(p-1,\ablr(p-1)) + P - \ablr(p-1)+Pos[p]+P}{2}\\
&=P+\frac{\rblr(p-1,\ablr(p-1)) - \ablr(p-1)+Pos[p]}{2}\\
\end{aligned}
\]

Observe that for $P=\ablr(p)$, we have: 
\begin{eqnarray*}
\rblr(p,\ablr(p)) & = &\frac{\rblr(p-1,\ablr(p-1))  - \ablr(p-1)+Pos[p]}{2} + \ablr(p).
\end{eqnarray*}
Hence we have:
$$\rblr(p,P)=  \rblr(p,\ablr(p)) + P - \ablr(p).$$

This concludes the proof by induction.

Similarly, we can show by induction on $q$ that for any $\in \{k,k+1,\hdots,n\}$ and an amount of power $P\ge \ablr(p)$, we have $\rblr(q,P) = \rblr(q,\ablr(q)) - P + \ablr(q).$
%
%
%
\end{proof}

\subsubsection{A linear algorithm to compute the optimal power needed for
  {\broadcast}}  

In this section, we prove the following theorem. 

\begin{theorem}\label{thm:OptPower-b}
In $O(n)$ time it is possible to compute the optimal power needed to achieve broadcast for a configuration $Pos[1 : n]$ of $n$ agents on the line for any source agent and to compute an optimal broadcast strategy.
\end{theorem}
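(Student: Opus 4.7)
The plan is to mirror the structure of the convergecast algorithm from Theorem~\ref{thm:OptPower} but exploit the much simpler, purely affine, structure of the reach functions given by Lemma~\ref{lem-eqn-reach-b}. By Lemma~\ref{lem-shape-algo-b} it suffices to search among regular broadcast strategies, which for a fixed source index $k$ are entirely determined by the value of $P$ together with the binary choice $(b_k,f_k)\in\{(b_{k-1},b_{k+1}),(b_{k+1},b_{k-1})\}$ appearing in condition (4) of the regular strategy definition.

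First I would compute, in a single left-to-right scan, the thresholds $\ablr(p)$ together with the constants $C_p:=\rblr(p,\ablr(p))-\ablr(p)$ for $p=1,\dots,k-1$ via the recurrences
\begin{equation*}
C_1=Pos[1],\ \ablr(1)=0,\ C_p=\tfrac{1}{2}(C_{p-1}+Pos[p]),\ \ablr(p)=\max\bigl\{\ablr(p-1),\tfrac{1}{2}(Pos[p]-C_{p-1})\bigr\},
\end{equation*}
and symmetrically, in a right-to-left scan, the thresholds $\abrl(q)$ and constants $D_q:=\rbrl(q,\abrl(q))+\abrl(q)$ for $q=n,\dots,k+1$. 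By Lemma~\ref{lem-eqn-reach-b}, these identities extend to the closed forms $\rblr(k-1,P)=P+C_{k-1}$ and $\rbrl(k+1,P)=-P+D_{k+1}$, valid respectively for every $P\ge\ablr(k-1)$ and every $P\ge\abrl(k+1)$; the entire precomputation uses $O(n)$ arithmetic operations.

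Setting $P_0=\max\{\ablr(k-1),\abrl(k+1)\}$, condition (4) of a regular broadcast strategy translates, under the two admissible choices of $(b_k,f_k)$, into the two affine-in-$P$ inequalities
\begin{equation*}
|3P+2C_{k-1}-Pos[k]-D_{k+1}|\le P \quad\text{and}\quad |3P-2D_{k+1}+Pos[k]+C_{k-1}|\le P,
\end{equation*}
each of which, whenever its defining constant has the appropriate sign, is satisfied exactly on an explicit interval $[L_j/4,L_j/2]$ with $L_j$ computable in $O(1)$ from the precomputed data. The optimal value $P_{OPT}^b$ is then the smaller of $\max\{P_0,L_1/4\}$ and $\max\{P_0,L_2/4\}$ that lies in its respective interval, and the corresponding regular strategy is recovered by unrolling the defining recurrences for the $b_i$ and $f_i$ in $O(n)$ extra time. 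The main obstacle is to justify that at least one of the two candidates is feasible---which follows by combining Lemma~\ref{lem-shape-algo-b} with the existence of a broadcast strategy for sufficiently large power---and to handle the degenerate boundary cases $k=1$ and $k=n$, where one of the two sides is empty and condition (4) collapses to a single inequality of the same form.
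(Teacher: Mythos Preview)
Your proposal is correct and follows essentially the same approach as the paper. Both arguments reduce to regular strategies via Lemma~\ref{lem-shape-algo-b}, exploit the affine structure of $\rblr$ and $\rbrl$ from Lemma~\ref{lem-eqn-reach-b}, compute the activation thresholds by a single linear scan on each side of the source, and then solve a constant-size problem for the final power; the only differences are cosmetic---you package the affine data in the constants $C_p,D_q$ and phrase the last step as two inequalities, whereas the paper tracks $\rblr(p,\ablr(p))$ and $\ablr(p)$ explicitly and presents the last step as a three-way case split on the position of $Pos[k]$ relative to $\rblr(k-1,P_0)$ and $\rbrl(k+1,P_0)$.
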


\begin{proof}
We formulate Function \Optbc which computes in linear time the optimal power for the broadcast in the line. 

\begin{function}[h!]
\caption{OptimalBroadcast (array {$Pos[1:n]$} of real;
  r:integer):real \label{algo:optimalBroadcast}}   
\SetKwData{MyTrue}{true} \SetKwData{MyFasle}{false} 
\SetKw{Not}{not} \SetKw{MyAnd}{and} \SetKw{MyOr}{or}
\SetKw{Pop}{pop}
\SetKw{Push}{push}
\SetKw{KwDownTo}{down to}
\SetKw{EmptyStack}{empty\_stack}

$\ablr(1)=0,\rblr(1,\ablr(1))=0, p=1$\;
\While{$p<k-1$}  
      {
        \While{$(p<k-1)$ \MyAnd $(Pos[p+1] - \ablr(p) \le \rblr(p,\ablr(p)))$}
            {
                $\ablr(p+1) = \ablr(p)$\;
                $\rblr(p+1,\ablr(p+1))=(Pos[p+1]+\ablr(p+1)+ \rblr(p,\ablr(p+1)))/2$\;
			$p=p+1$\;
            }
        \If{$(p<k-1)$}
		{
		$\delta_{LR} = (Pos[p+1]-\ablr(p)- \rblr(p,\ablr(p)))/2$\;
		$\ablr(p+1) = \ablr(p)+\delta_{LR}$\;
				$\rblr(p+1, \ablr(p+1)) = Pos[p+1]$\;
		$p=p+1$\;
	        }

	        }

$\abrl(n)=0, \rbrl(n,\abrl(n))=Pos[n], q=n$\;        
       \While{$q>k+1$}  {
                \While{$(q>k+1)$ \MyAnd $(Pos[q-1] + \abrl(q) \ge \rbrl(q,\abrl(q))$}
            {
                          $\abrl(q-1) = \abrl(q)$\;
              $\rbrl(q-1, \abrl(q-1))=(Pos[q-1]- \abrl(q-1)+  \rbrl(q-1,\abrl(q-1))/2$\;
              $q=q-1$\;
            }
        \If{$(q>k+1)$ \MyAnd $(p=k-1$ \MyOr $\delta_{LR}< \delta_{LR})$}{
               	$\delta_{RL} = (\rbrl(q,\abrl(q))-\abrl(q)-Pos[q-1])/2$\;
	        $\abrl(q-1) = \abrl(q)$\;
				$\rbrl(q-1, \abrl(q-1)) = Pos[q-1]$\;
	        		$q=q-1$\;
	      }
        }
       
 $P = \max(\ablr(k-1),\abrl(k+1))$\; 
\If{$\ablr(k-1) > \abrl(k+1))$}{
 $\X=\rblr(k-1,\ablr(k-1))$\;
  $\Y=\rbrl(k+1,\abrl(k+1))+\ablr(k-1) - \abrl(k+1))$\;
 }
 \Else{
  $\X=\rblr(k-1,\ablr(k-1))-\abrl(k+1) + \ablr(k-1))$\;
  $\Y=\rbrl(k+1,\abrl(k+1))$\; 
 }
         
 \If{$Pos[k] \le \X$}
     {$\delta =( \Y-Pos[k]-P)/2$\;}
 \If{$\X < Pos[k] < \Y$}
     {$\delta = (\min(Pos[k]-\X,\Y-Pos[k]) + (\Y-\X) - P)/2$\;}
 \If{$\Y \le Pos[k]$}
     {$\delta = (Pos[k]-\X-P)/2$\;}
 \Return $P+\max(0,\delta)$
\end{function}

In order to compute this value, Function \Optbc first computes the minimal amount of power $Q$ such that all agents in $LR\cup RL$ are activated, i.e., $Q = \max(\ablr(k-1),\abrl(k+1))$. In order to compute $Q$, the function iteratively increases the power sufficient to activate all agents in $LR$. Then, it does the same with agents in $RL$. The function computes iteratively for each agent $a_i$ from $a_1$ to $a_{k-1}$ in $LR$ (respectively from $a_{n}$ to $a_{k+1}$ in $RL$), the value $\ablr(i)$ (respectively $\abrl(i)$) and the value $\rblr(i,\ablr(i))$ (respectively $\rbrl(i,\abrl(i))$). Once $Q$ is known, the function computes the minimal amount of power $P\geq Q$ that enables the agent $a_k$ to reach $\rblr(k-1,P)$ and $\rbrl(k+1,P)$.
This will be proved to be the minimal power to accomplish broadcast.

Notice that in order to accomplish broadcast, agent $a_{k-1}$ must be able to reach $\rblr(k-2,\ablr({k-2}))$. Hence the optimal value $P_{OPT}^b$ of power sufficient to accomplish broadcast  must be at least $\ablr(k-1)$. Similarly, $P_{OPT}^b$ must be at least $\abrl(k+1)$. Hence, we will first prove that the values of $\rblr(p,\ablr(p))$, $\ablr({p})$, $\rbrl(q,\abrl({q}))$ and $\abrl({q})$ are correctly computed for $1\leq p<k$ and $k< q\leq n$. 

We only prove that the values of $\rblr(p,\ablr({p}))$ and $\ablr({p})$ 
are correctly computed for $1\leq p<k$, as the proof that $\rbrl(q,\abrl(q))$ and $\abrl(q)$ are correctly computed for $k< q\leq n$ is similar. The proof is by induction on $p$. For $p=1$, the values of $\rblr(p,\ablr({p}))$ and $\ablr(p)$ 
are correctly computed since $\rblr(1,\ablr({1}))=0$ and $\ablr({1})=0$. Suppose that the values of $\rblr(p,\ablr({p}))$ and $\ablr({p})$ are correctly computed. If $\rblr(p,\ablr(p))\geq Pos[p+1]-P$, then $\ablr(p+1)=\ablr(p)$. By Lemma~\ref{lem-eqn-reach-b}, all functions $\rblr(p,P)$ are linear with coefficient 1 on $[\ablr(p),+\infty)$. Hence, if $\rblr(p,\ablr(p))< Pos[p+1]-\ablr(p)$, we have $\ablr(p+1)=\ablr(p)+(Pos[p+1]-P-\rblr(p,\ablr(p)))/2$. This shows that $\ablr(p+1)$ is correctly computed. It remains to show that $\rblr(p+1,\ablr(p+1))$ is correctly computed. By definition of a regular broadcast strategy, we have 
$\rblr(p+1,\ablr(p+1))=(\rblr(p,\ablr(p+1)) + \ablr(p+1) + Pos[p+1])/2$. If $\rblr(p,\ablr(p))\geq Pos[p+1]-\ablr(p)$, then 
$\rblr(p+1)$ is correctly computed as the above formula is used by the function. Otherwise, we have :
$\ablr(p+1)=\ablr(p)+(Pos[p+1]-\ablr(p)-\rblr(p,\ablr(p)))/2$. Using the notation $r=\rblr(p,\ablr(p))$, $r'=\rblr(p,\ablr(p+1))$,
$a=\ablr(p)$, $a'=\ablr(p+1)$ we have :

\begin{eqnarray*}
\rblr(p+1,\ablr(p+1)) & = & \frac{r' + a' + Pos[p+1]}{2}\\
& = & \frac{(r+a'-a) +(a+(Pos[p+1]-a-r)/2) + Pos[p+1]}{2}\\
& = & \frac{(r+(Pos[p+1]-a-r)/2)) +(a+(Pos[p+1]-a-r)/2) + Pos[p+1]}{2}\\
& = & Pos[p+1].
\end{eqnarray*}

This completes the proof by induction.

Again, using the fact that all functions $\rblr(p,P)$ are linear with coefficient 1 on $[\ablr(p),+\infty)$, the function \Optbc computes correctly the value $\X=\rblr(k-1,Q)$. The same is true for $\Y=\rbrl(k+1,Q)$. Finally, we consider three cases : $Pos[k] \le \X$, $\X < Pos[k] < \Y$ or $\Y \le Pos[k]$ to compute the additional power $\delta$ that has to be used.
By definition of $\rblr(k-1,P)$ and $\rbrl(k+1,P)$, we conclude that $P$ is the optimal value of power to achieve broadcast by a regular strategy. In view of Lemma~\ref{lem-shape-algo-b}, this concludes the proof that $P$ is the optimal value of power to achieve broadcast. The complexity $O(n)$ of the function is straightforward by its formulation. 

Since a regular broadcast strategy is fully determined by the value of $P$ and by 
the two possible values of $b_k$ for the source agent $a_k$ ($b_k=b_{k-1}$ or $b_k=b_{k+1}$), computing the optimal power $P$ yields an optimal broadcast strategy. This concludes the proof of Theorem~\ref{thm:OptPower-b}.
%
%
\end{proof}

\section{{\Cccast} and broadcast on trees  and graphs}\label{s:tree}

We start the section by showing that for arbitrary trees the {\cccast} problem and the centralized  broadcast problem are substantially harder than on lines.

A configuration for {\ccast} on arbitrary graphs is a couple $(G,A)$ where $G$ is a $n$-node weighted graph representing the network and $A$ of size $k$ is the set of the starting nodes of the agents. A configuration for broadcast additionally specifies the starting node of the source agent. 
We consider the {\cccast} decision problem and the broadcast decision problem formalized as follows. 

\probleme{Centralized {\convergecast} decision}{a configuration $(G,A)$ and a real $P$.}{Is there a convergecast {\strat} for $(G,A)$, in which each agent uses at most $P$ battery power~?}

\probleme{Centralized {\broadcast} decision}{a configuration $(G,A)$ with a specified source agent and a real $P$.}{Is there a broadcast {\strat} for $(G,A)$ with the specified source agent, in which each agent uses at most $P$ battery power ?}

We will prove that both these problems are strongly NP-complete.
In order to do this, we consider \emph{star configurations}, i.e., configurations $(G,A)$ in which $G$ is a star, i.e., a tree of diameter 2. We define a class of strategies in a star 
called \emph{simple} that consist of the following two phases :

\begin{itemize}
\item The strategy starts with a gathering phase lasting time $P$,
in which each agent uses all its available power to move towards the center of the star and then waits until time $P$. The agents that have used all their power during this phase without reaching the center are called \emph{depleted}.
\item In the second phase, the agents does not move past depleted agents, i.e., never enter the segment between a leaf and a depleted agent.
\end{itemize}
The following lemma shows that it is enough to consider simple strategies for convergecast and broadcast.

\begin{lemma}\label{cl:simpl}
If there exists a {\ccast} {\strat} (respectively a broadcast strategy) in a star using power $P$, then there exists a simple {\ccast} {\strat} (respectively a simple broadcast strategy) using power $P$.
\end{lemma}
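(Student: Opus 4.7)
The plan is to construct, from an arbitrary convergecast (respectively broadcast) strategy $\cS$ in a star with power $P$, a simple strategy $\cS'$ also using power at most $P$. Let $c$ denote the center of the star. For each agent $a$, let $p_a$ be the point closest to $c$ that $a$ visits during $\cS$ (after a preliminary normalization forbidding needless branch crossings, $p_a$ lies on $a$'s own branch), let $d_a$ be the initial distance of $a$ from $c$, and set $\Delta_a = d_a - p_a \le P$. In phase~1 of $\cS'$, every agent $a$ moves monotonically from its starting position toward $c$, stops at $p_a$, and waits there until time $P$; this costs $\Delta_a$ units of power and leaves residual $P - \Delta_a$. By the definition of \emph{depleted}, the agent $a$ is depleted in $\cS'$ iff $\Delta_a = P$ and $p_a \ne c$, which forces its trajectory in $\cS$ to be the straight segment from $d_a$ to $p_a$ traversed once.

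The heart of the proof is then the construction of phase~2. The idea is to replay the meetings of $\cS$ one at a time, replacing any meeting that involves a depleted agent $a$ at a point $q$ strictly on the leaf-side of $p_a$ by a meeting at $p_a$ itself. Because reaching $p_a$ from $c$ costs strictly less than reaching $q$, the other agent involved in the meeting uses at most as much power in $\cS'$ as in $\cS$, and so each non-depleted agent's total budget is preserved. An induction on the temporal order of the meetings of $\cS$ shows that the residual powers of the non-depleted agents suffice to complete the task and that no agent ever enters a forbidden segment $[{\rm leaf}, p_a]$; the starting observation is that every meeting involving a depleted $a$ in $\cS$ takes place inside $[p_a, d_a]$ on $a$'s branch, so pulling it back to $p_a$ is always geometrically feasible. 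For broadcast, the argument is symmetric with the source agent's information flowing outward instead of inward.

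The main technical obstacle concerns branches carrying several depleted agents. If $a$ and $a'$ lie on the same branch with both depleted and $p_a$ strictly closer to $c$ than $p_{a'}$, then in $\cS'$ no non-depleted agent can reach $a'$ without crossing $a$, so the single pickup at $p_a$ must already carry $a'$'s information. One must therefore verify that in $\cS$ the information initially held by $a'$ has been aggregated with that of $a$ (or of some intermediate agent on the branch) before it ever leaves the segment $[p_a, d_a]$: this follows because the only way information can exit that segment in $\cS$ is via an agent that physically enters it, which by the power-$P$ constraint must have first encountered $a$ on its way in or out. This bookkeeping, carried out by induction on the number of depleted agents per branch, completes the proof.
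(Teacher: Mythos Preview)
There is a genuine gap: the strategy $\cS'$ you construct is not a \emph{simple} strategy in the sense defined just before the lemma. A simple strategy requires that in phase~1 every agent moves toward the center $c$ until it either reaches $c$ or exhausts all of its power $P$; hence an agent is depleted precisely when its initial distance $d_a$ from $c$ exceeds $P$. You instead have each agent stop at $p_a$, the closest point to $c$ it happened to visit in $\cS$, which may leave it strictly short of $c$ while still holding residual power. Correspondingly, your characterization of ``depleted'' ($\Delta_a = P$ and $p_a \neq c$) does not match the required one: an agent with $d_a > P$ that, say, oscillated in $\cS$ and so never got within $d_a - P$ of $c$ has $\Delta_a < P$; under your construction it ends phase~1 neither at $c$ nor depleted, contradicting the definition of a simple strategy. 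Since the lemma asserts the existence of a \emph{simple} strategy, you have not proved what is claimed.

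The paper's argument sidesteps all of the branch-by-branch bookkeeping you attempt in phase~2. It sends every agent straight toward $c$ as far as its power allows; all non-depleted agents then sit together at $c$ and automatically pool their information. For phase~2 one simply replays, time-shifted by $P$, the moves each agent made in the original strategy after it first reached $c$, truncating any step that would pass a depleted agent. The only fact left to verify is that every meeting in $\cS$ between a non-depleted agent $b$ and a depleted agent $a$ is reproduced, and this is immediate because in $\cS'$ the depleted $a$ ends up at least as close to $c$ as it was at any time in $\cS$, so $b$'s replayed trajectory still reaches it. No induction on meetings or on the number of depleted agents per branch is needed.
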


\begin{proof}
Let $\cA$ be a convergecast or a broadcast {\strat}. We construct a simple {\strat} $\cA'$ as follows. In $\cA'$, each agent 
moves towards the center of the star until it has used all its battery power or has reached the center of the star. This gathering phase lasts from time $0$ to time $P$. If an agent has not reached the center in strategy $\cA$, then it stops forever in $\cA'$.
Otherwise, consider time $t$ at which it arrives at the center in $\cA$. Then, in strategy $\cA'$, the agent executes at time $t'+P$ each movement performed at time $t'\geq t$ in {\strat} $\cA$. However, if a movement of an agent would result in the agent moving past a depleted agent from time $r$ to $r'$ in $\cA$, then in strategy $\cA'$ the agent waits at the position of the depleted agent instead of moving past it.  By construction, $\cA'$ is a simple {\strat}. Observe that in {\strat} $\cA'$, the non-depleted agents share all their information at the center of the star at time $P$. 
Since two depleted agents cannot meet, it remains to show that when a non-depleted agent $b$ meets a depleted agent $a$ at time $t$ in {\strat} $\cA$, they meet at time $t+P$ in $\cA'$. The final position of agent $a$ is not farther from the center in $\cA'$ than in $\cA$. Hence, any agent $b$ that meets agent $a$ at time $t$ is at the new position of $a$ in $\cA'$ at time $t+P$. Hence, the meeting between $a$ and $b$ occurs in $\cA'$ as well. If $\cA$ was a convergecast strategy (respectively a broadcast strategy) then $\cA'$ is a simple convergecast strategy (respectively a simple broadcast strategy).
\end{proof}

\begin{theorem}\label{th:NP-graph}
The {\cccast} decision problem and the centralized broadcast decision problem are strongly NP-complete for trees.  
\end{theorem}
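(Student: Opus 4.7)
The plan is to first establish that both decision problems lie in NP, and then to prove NP-hardness by reduction from the strongly NP-complete problem \textsc{3-Partition}. Throughout the hardness argument I will rely on Lemma~\ref{cl:simpl}, which lets me restrict attention to simple strategies, since the reduction targets star-shaped trees.

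For NP membership, I would certify a strategy by the trajectories of all agents given as piecewise-linear curves with polynomially many breakpoints. A verifier checks in polynomial time that each agent's total traveled distance is at most $P$, enumerates all pairwise meetings from the trajectories, and verifies that the resulting information-flow graph realises convergecast (some agent collects every initial piece) or broadcast (the source's information reaches every agent).

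For NP-hardness, given an instance of \textsc{3-Partition} with integers $a_1,\ldots,a_{3m}$ satisfying $B/4 < a_i < B/2$ and $\sum_i a_i = mB$, I build a star $T$ centered at $c$ with $3m$ \emph{small} arms (arm $i$ of length $P-a_i$ carrying an agent $S_i$ at its leaf) and $m$ \emph{large} arms (each of length $P+\alpha$ carrying an agent $L_j$ at its leaf), for a power $P$ and an offset $\alpha$ to be calibrated. By Lemma~\ref{cl:simpl}, it suffices to consider simple strategies, so after the length-$P$ gathering phase each $S_i$ sits at $c$ with residual power exactly $a_i$, while each $L_j$ is depleted at distance $\alpha$ from $c$. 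In the dispersal/collection phase, the non-depleted agents at $c$ must collect (resp.\ deliver) information at (resp.\ from) the $m$ depleted positions, and the parameters $P,\alpha$ are selected so that (i) the aggregate residual $\sum_i a_i = mB$ exactly matches the total work required on the $m$ large arms, and (ii) the gap $B/4 < a_i < B/2$ forces every feasible allocation of small agents to large arms to be a partition into $m$ triples whose residual sums are exactly $B$. Under this calibration, a strategy using power $P$ exists if and only if the original \textsc{3-Partition} admits a valid partition. The broadcast case is handled by an analogous reduction, with the source placed so that its information reaches $c$ during the gathering phase and with $\alpha$ rescaled to the one-way delivery cost per large arm. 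Since all lengths and the power budget are polynomial in $B$, the reduction is strongly polynomial, yielding strong NP-completeness.

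The main obstacle will be the calibration of $\alpha$ together with the accompanying relay analysis: only one agent can physically reach each depleted $L_j$ (information originates there and cannot be moved in advance), so a designated member of every serving group must solo-traverse the outgoing distance $\alpha$, whereas the return leg and intermediate exchanges can be split among the remaining cooperators. I would derive the precise minimum combined residual power needed by a cooperating group of $k$ small agents on a single large arm as a function of $\alpha$, and then tune $\alpha$ (possibly by adding small helper gadgets inside each large arm to discretise the per-arm cost) so that this minimum is exactly $B$ and is attainable only by a triple satisfying $a_i+a_{i'}+a_{i''}=B$, being ruled out for $k\le 2$ by the gap $a_i<B/2$. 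Once these per-arm equivalences are in place, the global correspondence between 3-partitions and feasible strategies follows by summing over arms, completing the reduction.
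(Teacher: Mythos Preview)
Your overall plan (NP membership plus a 3-Partition reduction on a star, using Lemma~\ref{cl:simpl}) matches the paper's, but the specific gadget you propose has a structural defect that the paper avoids.

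In your construction the \emph{workers} (the small agents) carry the 3-partition numbers $a_i$ as residual powers, while the \emph{jobs} (the large arms) all have the same cost. The difficulty is that the cost for a group of agents, all sitting at the center, to service one large arm is \emph{not} a function of the sum of their residuals; it depends on the individual values. Concretely, for convergecast the first agent must travel $\alpha$ outbound, then return to $2\alpha-p$, then a second agent must travel $2\alpha-p$ outbound, etc.; the resulting feasibility condition is a nested inequality in the sorted $p_i$'s, not a linear constraint on $\sum p_i$. You flag this as ``the main obstacle'' and gesture at ``helper gadgets'', but no calibration of $\alpha$ alone can make the per-arm cost equal to the group sum, and you give no construction of gadgets that would. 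For broadcast the situation is worse: once all small agents are at the center holding the source message, cooperation buys nothing for one-way delivery to a depleted $L_j$, since every helper also starts at the center; reaching $L_j$ simply requires some single $a_i\ge\alpha$. Your broadcast reduction therefore does not encode 3-Partition at all.

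The paper's reduction inverts the roles: the collectors in $A$ all have \emph{equal} residual power, and the targets in $B$ sit at the \emph{varying} distances $x_i$. Each target is then serviced by a single collector via an independent round trip of cost exactly $2x_i$, so the per-collector budget constraint is precisely $\sum_{j\in S_i} x_j\le R$, which is 3-Partition on the nose. An extra long arm (for convergecast) or $m$ extra long arms (for broadcast) pin down where the final aggregation or the final deliveries must occur and make the power budget tight. This ``equal-power workers, varying-cost tasks'' orientation is the missing idea in your sketch.

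A smaller point: your NP argument asserts polynomially many breakpoints per trajectory without justification. The paper instead certifies the sequence of \emph{useful meetings} (at most $k(k-1)$ of them), encodes only combinatorial data plus symbolic distances, and verifies feasibility with a linear program; you may want to adopt that route.
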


The proof of Theorem \ref{th:NP-graph} is split into three lemmas. We first show that the {\cccast} decision problem is strongly NP-hard, then that the centralized broadcast decision problem is strongly NP-hard, and finally that both problems are in NP.

\begin{lemma}\label{th:NP-hard-graph}
The {\cccast} decision problem is strongly NP-hard for trees.  
\end{lemma}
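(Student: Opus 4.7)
The plan is to reduce from 3-PARTITION, which is strongly NP-hard. Given an instance $a_1, \ldots, a_{3m}$ with $\sum_j a_j = mB$ and $a_j \in (B/4, B/2)$, I would construct in polynomial time a configuration $(T,A)$ on a star (hence a tree) together with a power bound $P$ such that convergecast is feasible with power $P$ iff the 3-PARTITION instance admits a valid partition. Since all edge weights will be polynomial in $m$ and $B$, strong NP-hardness is preserved.

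The star $T$ has center $c$ and $4m$ leaves: $m$ ``bin'' leaves $w_1, \ldots, w_m$ at distance $1$ from $c$, and $3m$ ``item'' leaves $u_1, \ldots, u_{3m}$ with $u_j$ at distance $P + a_j$ from $c$; a single agent sits on each leaf, and I set $P = 2B+1$. By Lemma~\ref{cl:simpl} it suffices to consider simple strategies. After Phase 1 of such a strategy, each bin agent is at $c$ with residual power $P-1 = 2B$, while each item agent has exhausted its power and is depleted at distance $a_j$ from $c$ on its own leg. Thus only the $m$ bins can act in Phase 2.

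For the forward direction, a valid 3-partition $(S_1, \ldots, S_m)$ induces the strategy in which bin $i$ performs three round trips from $c$ to the three items of $S_i$, costing exactly $2\sum_{j\in S_i} a_j = 2B$ power; all bins then meet at $c$ and exchange their collected information. For the backward direction, I would observe that since items are immobile in Phase 2, the information of item $j$ can only reach the aggregation point via some bin that reaches position $a_j$ on leg $u_j$. I would then show that the total Phase 2 movement is minimized when the aggregation point is $c$, giving a lower bound of $2\sum_j a_j = 2mB$ which matches exactly the total Phase 2 budget $m \cdot 2B$. Tightness forces each item to be visited exactly once and each bin to spend its full budget, so letting $S_i$ be the set of items visited by bin $i$ we get $\sum_{j\in S_i} a_j = B$. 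Combined with $a_j \in (B/4, B/2)$, this forces $|S_i| = 3$, yielding a valid 3-partition.

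The main obstacle is ruling out ``creative'' strategies in the backward direction, namely (i) cooperative chains in which two or more bins share the transport work along a single item leg, and (ii) placing the aggregation point elsewhere than $c$. For (i), a direct chain computation shows that any two-or-more-bin transport of a single item's information from distance $a_j$ back to $c$ costs strictly more total movement than the single round trip of cost $2a_j$, so cooperation never saves budget. For (ii), moving the aggregator to distance $\alpha > 0$ from $c$ on some leg adds $\alpha$ to the transport cost of every piece of information originating elsewhere, so the budget $2mB$ admits only $\alpha = 0$. Once these two reductions are established, the counting argument in the previous paragraph completes the proof of the equivalence with 3-PARTITION.
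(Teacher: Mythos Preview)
Your reduction has a genuine gap in the backward direction: the construction is missing a gadget that pins down the aggregation point, and without it the tightness argument fails. Concretely, take $m=2$, $B=15$, and items $4,4,4,6,6,6$; each item lies in $(B/4,B/2)$, the sum is $2B$, and there is no valid $3$-partition (the only triple sums are $12,14,16,18$). In your construction, after Phase~1 the two bin agents sit at the center with residual power $2B=30$ each, and the six items are depleted at distances $4,4,4,6,6,6$. Now let bin~1 do three round trips to the items at distance $4$ (cost $24\le 30$) and return to the center; bin~2 waits there, picks up bin~1's information, does two round trips to items at distance $6$ (cost $24$), and then goes one-way to the last item at distance $6$ (cost $6$), for a total of exactly $30$. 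Bin~2 ends holding all information, so convergecast succeeds even though the $3$-partition instance is a no-instance.

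The error is in your argument~(ii): moving the aggregation point to distance $\alpha$ on one leg does \emph{not} add $\alpha$ to the cost of every other piece of information---the aggregator pays $\alpha$ once, and if $\alpha=a_j$ it simultaneously saves the entire return trip of cost $a_j$ for item $j$. Thus the total movement needed can drop to $2mB-a_j<2mB$, creating slack. The paper repairs exactly this by adding one more ``bin'' agent $a_{m+1}$ (so $m{+}1$ agents at distance~$1$) and one extra leaf agent $c$ at distance $4R{+}1$ (here $R=B$), so that after Phase~1, $c$ sits depleted at distance $2R$ from the center. Because reaching $c$ costs a full residual budget $2R$, no agent can carry information beyond $c$; hence the aggregation point is forced to be $c$'s position, one bin's entire budget is consumed just getting there, and the remaining $m$ bins must bring every item's information back to the center with genuine round trips totalling exactly $2mR$. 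That tightness is what drives the partition; your construction needs an analogous sink.
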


\begin{proof}
We construct a polynomial-time many to one reduction from the following strongly NP-Complete problem \cite{GJ79}.

\probleme{3-Partition}{a multiset $S$ of $3 m$ positive integers $x_i$
  such that for $1\leq i\leq 3m, R/4<x_i<R/2$ with
  $R=\frac{\sum_{i=1}^{3m}x_i}{m}$.}{Can $S$ be partitioned into $m$
  disjoint sets $S_1, S_2,\dots,S_m$ of size 3, such that $\sum_{x\in S_j}x=R$ for
  $1\leq j\leq m$ ?}

We construct an instance $(G,U)$ of the {\cccast} problem from an instance of 3-Partition as follows. The graph $G$ is a star with $4m+2$ leaves and $U$ is the set of leaves of $G$. Hence, there are $4m+2$ agents, each located at a leaf of the star.
We consider a partition of the set of agents into three subsets: $A$, $B$ and $C$. The subset $A=\{a_i\mid 1\leq i\leq m+1\}$ contains $m+1$ agents. The leaves containing these agents are incident to an edge of weight $1$. The subset $B=\{b_i\mid 1\leq i\leq 3m\}$ contains $3m$ agents. For  $1\leq i\leq 3m$, the weight of the edge incident to the leaf containing agent $b_i$ is $2R+1+x_i$. The subset $C=\{c\}$ contains one agent. The leaf containing agent $c$ is incident to an edge of weight $4R+1$. Figure~\ref{fig:reduc} depicts the star obtained in this way. The battery power $P$ allocated to each agent is equal to $2R+1$. The construction can be done in polynomial time. We show that the constructed instance of the {\cccast} problem gives answer yes if and only if the original instance of 3-partition gives answer yes.

\begin{figure}[h]
\centering
\includegraphics[width=10cm]{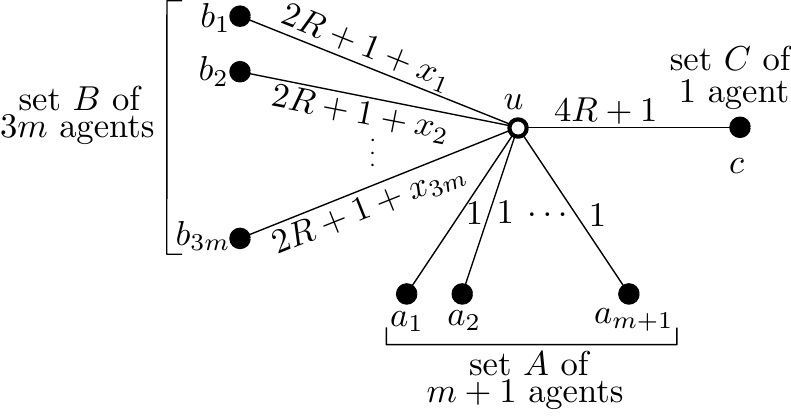}
\caption{Instance of the {\cccast} decision problem constructed from an instance of 3-partition.}
\label{fig:reduc}
\end{figure}

First, assume that there exists a solution $S_1, S_2,\dots,S_m$ for the instance of the 3-partition problem. We show that the agents can solve the corresponding instance of the centralized {\convergecast} problem using the following {\strat}. Agent $c$ moves at distance $2R$ from the center and for each $1\leq i\leq 3m$, agent $b_i$ moves at distance $x_i$ from the center. At this point, all these agents have used all their battery power. Each agent in $A$ moves to the center of the star. For $1\leq i\leq m$ and for each of the three agents $b_j$ such that $x_j\in S_i$, agent $a_i$ moves to meet $b_j$ and goes back to the center of the star. The cost of this movement is $2\sum_{x_j\in S_i}=2R$, which is exactly the remaining battery power of agent $a_i$. Observe that since agents in $A$ have met all agents in $B$, agents in $A$, located at the center of the star, have the information of all agents except agent $c$. Then agent $a_{m+1}$ moves to meet agent $c$. 
Agents $a_{m+1}$ and $c$ have the information of all the agents. Hence, this is a solution of the instance of the {\cccast} problem.

\medskip

Now assume that there exists a solution (strategy) to the convergecast problem.
By Lemma~\ref{cl:simpl}, we can assume that the {\ccast} {\strat} is simple. Consider the star $G$ after the gathering phase of the simple {\strat}. Each agent in $A$ is at the center of the star. For $1\leq i\leq m+1$, the agent $a_i$ has the remaining power of $2R$. For $1\leq i\leq 3m$, the agent $b_i$ is at distance $x_i$ from the center of the star and agent $c$ is at distance $2R$ from the center. Since the agents in $A$ are the only agents with remaining battery power, they must move to collect the information of agents in $B\cup C$. We call this phase the collecting phase. Observe that since agent $c$ is at distance $2R$ from the center, it is impossible for agents in $A$ to transport this information. Indeed, when an agent reaches $c$, it has used all its battery power. Hence, the entire information must be collected at the position of $c$. In order to collect the information, agents in $A$ must go to the position of each agent in $B$ and transport the information of these agents to the center. The total cost to move these information is at least twice the sum of the distances between each agent in $B$ and the center. This is equal to $2\sum_{i=1}{3m}x_i=2Rm$. Then, this information must be moved to the position of $c$. This costs at least $2R$. Hence, the total cost of collecting information after the gathering phase is at least $2R(m+1)$. The amount of power available to the agents for the collecting phase is equal to the amount of power needed to collect the information, since there are $m+1$ agents each having power $2R$. This means that during the collecting phase, for $1\leq j\leq 3m$, agents cannot collectively use a power larger than $2x_i$ to collect the information of $b_i$. 

Suppose by contradiction that during the collecting phase, more than one agent in $A$ enters an edge $f$ to collect the information of agent $b_i$ at distance $x_i$ from the center, for some $i$ such that $1\leq i\leq 3m$.
Let $w$ be the agent that has reached the position of $b_i$. If $w$ comes back to the center, it has used at least power  $2x_i$. Since at least one other agent has used some power to enter edge $f$, these agents have used more than $2x_i$ battery power to collect information of agent $b_i$. If $w$ does not come back to the center, then some other agent has to move the information to the center. If the agent $w$ stops at distance $r$ from the center, then at least one other agent has to go to this position (at distance $r$ from the center) and come back. Thus, the cost is at least $(2x_i-r)+2r>2x_i$. In both cases, the agents have used more power than $2x_i$, which leads to a contradiction. Hence, for each $1\leq i\leq 3m$, there is only one agent that collects the information of agent $b_i$ and enters the corresponding edge. 

We can assume, without loss of generality, that agent $a_{m+1}$ is the agent that transports the information to $c$. Observe that $a_{m+1}$ cannot collect information from other nodes since moving to $c$ uses exactly all its remaining power. Hence, only agents in $A'=A\setminus \{a_{m+1}\}$ can collect the information of agents in $B$. Let $S_1,S_2, \dots, S_m$ be the partition of $S$ defined by $S_i=\{x_j\mid \mbox{the information of $b_j$ is collected by $a_i$}\}$, for each $1\leq i\leq m$. We have $2\sum_{x\in S_i}x\leq 2R$ since each agent from $A'$ has battery power at most $2R$. The power needed to collect information of agents in $B$ is $2mR$ which is exactly equal to the combined power available to agents in  $A'$. This means that each agent in  $A'$ must use all its power to collect information and $2\sum_{x\in S_i}x=2R$. Hence, $S_1,S_2, \dots, S_m$ is a solution to the instance of 3-partition. 
\end{proof}

\begin{lemma}\label{th:NP-hard-graph-b}
The centralized broadcast decision problem is strongly NP-hard for trees.  
\end{lemma}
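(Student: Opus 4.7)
The plan is to mirror the reduction used in Lemma~\ref{th:NP-hard-graph}. I would reduce from 3-Partition using exactly the same star $G$ on $4m+2$ leaves, the same agent positions in the sets $A$, $B$, $C$, and the same power budget $P=2R+1$, but now additionally designate the agent $a_1 \in A$ as the broadcast source. The construction is clearly polynomial in the size of the 3-Partition instance, so correctness reduces to proving the equivalence of the two decision problems.

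For the forward implication, suppose $S_1, \dots, S_m$ is a valid 3-partition. I would describe the following broadcast strategy: each $b_i$ expends its full power moving to distance $x_i$ from the center, and $c$ moves to distance $2R$ from the center; meanwhile all $m+1$ agents of $A$ walk to the center (using power $1$ each), so the source information of $a_1$ is shared among all $A$-agents by time $1$; for $1 \le j \le m$, agent $a_j$ performs three consecutive round trips from the center to deliver the information to the three $b_i$ with $x_i \in S_j$, at total cost $2\sum_{x_i\in S_j} x_i = 2R$; finally, $a_{m+1}$ carries the information one-way from the center to $c$, at cost $2R$. Timing is not an issue, because by the time each $a_j$ reaches a meeting point on some branch, the corresponding $b_i$ either has already depleted at distance $x_i$ or will be met en route. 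Every agent uses at most $P=2R+1$ power.

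For the converse, I would apply Lemma~\ref{cl:simpl} to restrict attention to simple broadcast strategies. After the gathering phase, only the $A$-agents have residual power (namely $2R$ each) and they are all together at the center, so the source information reaches every $A$-agent by time $P$. The $3m+1$ branches containing a depleted agent ($b_i$ at distance $x_i$, or $c$ at distance $2R$) must each be entered by an informed $A$-agent, and a branch-by-branch charging argument, entirely analogous to the one in Lemma~\ref{th:NP-hard-graph}, shows that servicing a $b_i$-branch costs at least $2x_i$ unless the visiting $A$-agent terminates there (in which case $x_i$ suffices), and servicing $c$'s branch costs at least $2R$ if terminating and $4R$ otherwise. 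Summing over all branches and comparing with the total residual power $2R(m+1)$, I can deduce that $c$'s branch is necessarily terminal for exactly one agent, while the remaining $m$ agents' round trips partition the $b_i$-branches into groups whose $x_i$-sums are each at most $R$; since the total $x_i$-mass is $mR$, each group has $x_i$-sum exactly $R$, and the bound $R/4 < x_i < R/2$ then forces each group to have cardinality three, yielding a valid 3-partition.

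The main obstacle, as in the convergecast reduction, is ensuring that no branch's delivery work can be profitably split between two $A$-agents (for instance, one walking partway out and another completing the trip). The verification that any such splitting strictly increases the total power usage, and therefore violates the tight power budget, can be taken over essentially verbatim from the proof of Lemma~\ref{th:NP-hard-graph}; no new technical idea is required beyond replacing the direction of information flow (broadcast out of the center rather than convergecast into it) in the same tight accounting.
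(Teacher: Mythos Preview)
Your reduction has a genuine gap in the converse direction. In the convergecast reduction of Lemma~\ref{th:NP-hard-graph}, the information of every $b_i$ must be \emph{carried back} to the center before the designated agent can transport the collected information to $c$; this is what forces every $b_i$-visit to be a round trip and makes the budget $2R(m+1)$ exactly tight. In broadcast the flow is reversed: an $A$-agent delivering the source information to a set $T_j$ of $b_i$'s never needs to return after its final delivery, so its true cost is $2\sum_{i\in T_j}x_i-\max_{i\in T_j}x_i$, not $2\sum_{i\in T_j}x_i$. The per-agent constraint is therefore $\sum_{i\in T_j}x_i\le R+\tfrac12\max_{i\in T_j}x_i$, which is strictly weaker than $\sum_{i\in T_j}x_i\le R$, and the slack breaks the equivalence with 3-Partition.

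A concrete counterexample: take $m=3$ with five copies of $0.26R$ and four copies of $0.425R$. The total is $3R$, every value lies in $(R/4,R/2)$, and no triple of these values sums to $R$, so this is a \textsc{No} instance of 3-Partition. Yet in your star, after $a_4$ walks to $c$, the remaining three $A$-agents can broadcast to all nine $b_i$'s within their $2R$ residual power: one handles four $0.26R$-branches (cost $3\cdot0.52R+0.26R=1.82R$), one handles a $0.26R$-branch and two $0.425R$-branches (cost $0.52R+0.85R+0.425R=1.795R$), and one handles the remaining two $0.425R$-branches (cost $0.85R+0.425R=1.275R$). Broadcast succeeds although the 3-Partition instance is negative, so your reduction is not sound.

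This is exactly why the paper does \emph{not} reuse the convergecast star. Its broadcast construction places $m$ agents in $C$ (rather than one) and only $m$ agents in $A$, with power $P=4R+1$ and edge weights adjusted so that after the gathering phase each $c_j$ sits at distance $2R$ from the center and each $A$-agent has residual power $4R$. Because the $c_j$'s and the $A$-agents are equinumerous and reaching a $c_j$ is terminal, every $A$-agent is forced to end on a distinct $C$-branch; consequently every $b_i$-delivery must be a genuine round trip, the budget becomes tight again, and the charging argument from Lemma~\ref{th:NP-hard-graph} goes through. The missing idea in your proposal is precisely this device of replicating $C$ to absorb all terminal savings.
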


\begin{proof}
Again, we construct a polynomial-time many to one reduction from 3-Partition. The general structure of the proof is similar as in Lemma~\ref{th:NP-hard-graph} but details differ.

We construct an instance $(G,U)$ of the centralized broadcast problem from an instance of 3-Partition as follows. The graph $G$ is a star with $5m$ leaves and $U$ is the set of leaves of $G$. Hence, there are $5m$ agents, each located at a leaf of the star.
We consider a partition of the set of agents into three subsets: $A$, $B$ and $C$. The subset $A=\{a_i\mid 1\leq i\leq m\}$ contains $m$ agents. The leaves containing these agents are incident to an edge of weight $1$. The subset $B=\{b_i\mid 1\leq i\leq 3m\}$ contains $3m$ agents. For  $1\leq i\leq 3m$, the weight of the edge incident to the leaf containing agent $b_i$ is $4R+1+x_i$. The subset $C=\{c_i\mid 1\leq i\leq m\}$ contains $m$ agents. All leaves containing an agent in $C$
are incident to an edge of weight $6R+1$. Figure~\ref{fig:reducb} depicts the star obtained in this way. The battery power $P$ allocated to each agent is equal to $4R+1$ and agent $a_1$ is the source agent. The construction can be done in polynomial time. We show that the constructed instance of the centralized broadcast problem gives answer yes if and only if the original instance of 3-partition gives answer yes.

\begin{figure}[h]
\centering
\includegraphics[width=14cm]{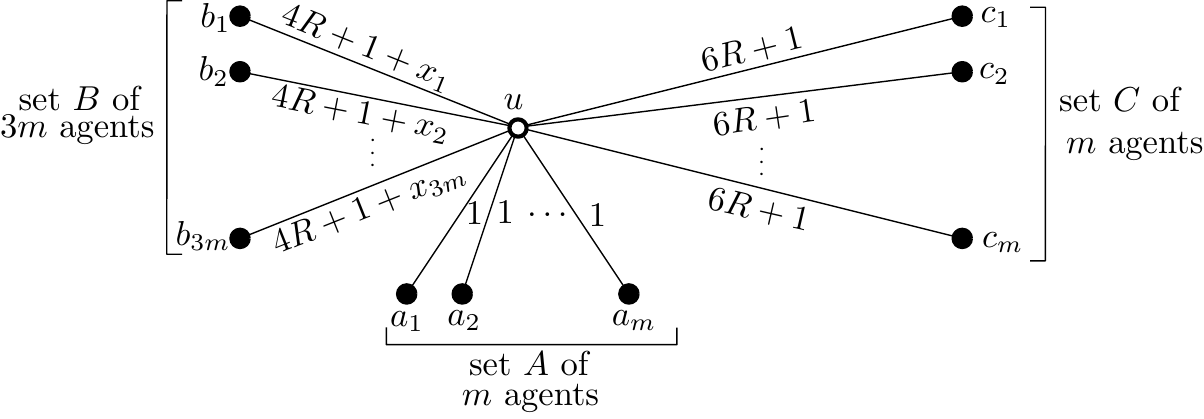}
\caption{Instance of centralized broadcast problem from an instance of 3-partition.}
\label{fig:reducb}
\end{figure}

First, assume that there exists a solution $S_1, S_2,\dots,S_m$ for the instance of the 3-partition problem. We show that the agents can solve the corresponding instance of the centralized broadcast problem using the following {\strat}. For each $i$, agent $c_i$ moves at distance $2R$ from the center and for each $1\leq i\leq 3m$, agent $b_i$ moves at distance $x_i$ from the center. At this point, all these agents have used all their battery power. Each agent in $A$ moves to the center of the star.
Hence, each agent $a_i$ obtains the information of $a_1$. For $1\leq i\leq m$ and each of the three agents $b_j$ such that $x_j\in S_i$, agent $a_i$ moves to meet $b_j$ and goes back to the center of the star. The cost of this movement is $2\sum_{x_j\in S_i}=2R$. Observe that since agents in $A$ have met all agents in $B$, all agents except those in $C$ have the information of $a_1$. Each agent $a_{i}$ moves to meet agent $c_i$. 
Each agent $c_i$ obtains the information of $a_1$. Hence, this is a solution to the instance of the centralized broadcast problem.

\medskip

Now assume that there is a solution (strategy) to the broadcast problem.
By Lemma~\ref{cl:simpl}, we can assume that the centralized broadcast {\strat} is simple. Consider the star $G$ after the gathering phase of the simple {\strat}. Each agent in $A$ is at the center of the star. For $1\leq i\leq m$, the agent $a_i$ has the remaining power of $4R$. For $1\leq i\leq 3m$, the agent $b_i$ is at distance $x_i$ from the center of the star. For $1\leq i\leq m$, agent $c_i$ is at distance $2R$ from the center. Since the agents in $A$ are the only agents with remaining battery power, they must move to give the information to agents in $B\cup C$. Observe that since each agent $c_i$ is at distance $2R$ from the center, an agent in $A$ that moves to meet an agent $c_i$ has not enough power to meet another depleted 
agent afterwards. Hence, each agent $a_i$ must meet exactly one agent $c_j$. Without loss of generality, we can assume that each agent $a_i$ meets $c_i$. Before agents in $A$ meet agents in $C$, they must meet agents in $B$. The total cost to give the information to all agents in $B$ is at least twice the sum of the distances between each agent in $B$ and the center. This is equal to $2\sum_{i=1}{3m}x_i=2Rm$. The total cost to give the information to agents in $C$ is 
$2Rm$. The amount of power available to the agents in $A$ is $4Rm$, which is exactly the power needed for broadcast.
Assume for the sake of contradiction that two or more agents in $A$ enter the same edge incident to the leaf of an agent $b_i$. In this case, one of the agents must meet $b_i$. This costs the agent $2x_i$ and other agents have used some power to enter this edge. This gives a contradiction because the total cost is more than the available power. Thus, we can assume that each agent $b_j$ meets exactly one agent $a_i$. Let $S_1,S_2, \dots, S_m$ be the partition of $S$ defined by $S_i=\{x_j\mid \mbox{$b_j$ met $a_i$}\}$, for each $1\leq i\leq m$. We have $2\sum_{x\in S_i}x\leq 2R$ since the total power that agents  in $A$ can use to meet agents in $B$ is at most $2Rm$. The power needed to give information to agents in $B$ is $2mR$ which is exactly equal to the combined power available to agents in $A$. This means that each agent in $A$ must use all its power to meet agents in $B$ and $2\sum_{x\in S_i}x=2R$. Hence, $S_1,S_2, \dots, S_m$ is a solution to the instance of 3-partition.
\end{proof}

\begin{lemma}\label{lem:NP-graph}
The {\cccast} decision problem and the centralized broadcast decision problem are in NP.  
\end{lemma}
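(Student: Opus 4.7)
The plan is to exhibit, for any yes-instance of either decision problem, a polynomial-size certificate together with a polynomial-time verifier.

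First, I would establish a structural lemma: any yes-instance admits a strategy in which each agent's trajectory is piecewise-linear with only $\mathrm{poly}(k,n)$ segments, where $k=|A|$ and $n=|V(G)|$, and every change of direction occurs either at a node of $G$ or at a meeting with another agent. Moreover, one may assume the total number of meetings is at most $\mathrm{poly}(k,n)$. The intuition is that a direction reversal at an interior point of an edge that coincides with no meeting is wasted motion and can be eliminated by a local exchange argument, similar in spirit to those used in Lemma~\ref{lem:regconv} and Lemma~\ref{lem-shape-algo-b}: shortening the redundant excursion up to the last useful event does not increase power usage and does not disrupt information transfer. Analogously, meetings that carry no new piece of information can be removed without affecting the validity of the strategy.

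Given this structural lemma, the certificate consists of two parts. First, a chronologically ordered list of meeting events, each specified by a rational time, a position on some edge, and the unordered pair of agents involved. Second, for each agent, its trajectory described as the sequence of directed edge segments between consecutive turning points, with explicit starting times and endpoints. The bit-size of the whole certificate is polynomial in the input, because the relevant times and positions arise as rational solutions of linear systems whose coefficients are drawn directly from the input data (edge weights, agent positions, and the value $P$).

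The verifier checks in polynomial time that: every agent's trajectory is continuous, consistent with the edges of $G$, and has total length at most $P$; every declared meeting is consistent with the two involved trajectories at the stated time and position; and the digraph of information transfers induced by the meetings (an arc from $a$ to $b$ whenever $a$ supplies $b$ with a piece of information $b$ did not already hold) achieves the required task --- for {\ccast}, some agent is reached by a directed path of transfers from every starting agent, and for {\broadcast}, every agent is reached by such a directed path from the designated source. The main obstacle is proving the structural lemma that bounds trajectory and meeting complexity; once it is in place, the certificate and verifier above establish that both decision problems lie in NP, and combined with Lemmas~\ref{th:NP-hard-graph} and~\ref{th:NP-hard-graph-b} this completes the proof of Theorem~\ref{th:NP-graph}.
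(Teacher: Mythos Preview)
Your outline is on the right track, but the piece you flag as ``the main obstacle'' is precisely the piece the paper resolves with a one-line counting argument, and leaving it open is the gap in your proposal. Call a meeting \emph{useful} if some participant acquires a piece of initial information it did not previously hold. Each of the $k$ agents can gain at most $k-1$ new pieces of information over the whole execution, so the total number of useful meetings is at most $k(k-1)$. Once you restrict attention to useful meetings, you can replace each agent's trajectory between two consecutive useful meetings by a shortest path in $G$ (entering and leaving the relevant edge through specified endpoints); this never increases power usage and preserves every useful information transfer. That is the whole structural lemma --- no exchange argument in the style of Lemmas~\ref{lem:regconv} or~\ref{lem-shape-algo-b} is needed.

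Beyond this, the paper's certificate is leaner than yours in a way worth noting. Rather than recording explicit rational meeting times and positions (whose polynomial bit-size you would still need to justify), the paper stores only the \emph{combinatorial} data of each useful meeting: the participating agents, the edge (or node), the entry/exit endpoints for each agent on that edge, and the relative ordering of meeting points that lie on the same edge. The actual distances $d_i$ along edges are left as free variables; the distance travelled by each agent is then a linear function of the $d_i$, and the verifier runs a polynomial-time linear program to decide whether there is an assignment of the $d_i$ making every agent's travel at most $P$. This sidesteps the bit-complexity issue entirely and also makes the full trajectories in your certificate unnecessary.
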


\begin{proof}
We consider the verifier-based definition of NP. Consider the {\strat} $\cA$ of the agents for an instance of the {\cccast} or centralized broadcast problems. We construct the certificate for the instance as follows. We say that a meeting of two or more agents is \emph{useful} if at least one of the agents received a new piece of information during this meeting. Each agent participates in at most $k-1$ useful meetings where $k$ is the number of agents. Hence, there are at most $k(k-1)$ useful meetings. The certificate contains the list of all useful meetings in chronological order. For the $i$-th meeting, the certificate encodes the identities of the meeting agents and the location of the meeting: a node $x_i$  or an edge $(u_i,v_i)$ of the graph $G$. If the meeting has occurred on an edge, the certificate encodes a variable $d_i$. The variable  $d_i$ represents the distance between $u_i$ and the meeting point $p_i$. If a previous meeting of number $j$ has occurred on the same edge, the certificate encodes if $d_i<d_j$, or $d_i=d_j$ or $d_i>d_j$. For each of the meeting agents, the certificate also encodes the node from which it has entered the edge ($u_i$ or $v_i$) just before the meeting and the node from which it exits the edge just after the meeting. We consider the {\strat} $\cA'$ defined as follows. For each useful meeting in chronological order, the meeting agents move to the meeting location following a shortest path from their previous position. If the meeting occurs on an edge, the meeting agents enter and exit the edge using the node encoded in the certificate. $\cA'$ is a {\ccast} {\strat} since each time an agent has collected a new piece of information in $\cA$, it collects the same information during the corresponding meeting in $\cA'$. Moreover, the agents use at most as much power in $\cA'$ as in $\cA$ since they move to the same meeting points using shortest paths. The verifier simulates the {\strat} $\cA'$ defined by the certificate. The verifier first checks that all the agents possess the entire information at the end of the algorithm. This can be done in polynomial time. Then, the verifier computes the distance traveled by each agent. These distances are linear sums of variables $d_i$ with $1\leq i \leq k(k-1)$ and of a constant. Finding an assignment of the variables, such that the distance traveled by each agent is less or equal than $P$, can be done in polynomial time using linear programming. Thus, the certificate can be verified in polynomial time.
\end{proof}

Theorem \ref{th:NP-graph} is a direct consequence of Lemmas \ref{th:NP-hard-graph}, \ref{th:NP-hard-graph-b} and \ref{lem:NP-graph}.

Since both decision problems concerning convergecast and broadcast are NP-hard for the class of trees, the same is true for their optimization counterparts, i.e., computing the smallest amount of power that is sufficient to achieve convergecast or broadcast.
In spite of that, we will show how to obtain, in polynomial time, a 2-approximation of the power needed to achieve {\cccast} on arbitrary graphs and a 4-approximation of the power needed to achieve centralized broadcast on arbitrary graphs. 

Let $D(G,A)=\max_{\emptyset\subsetneq X \subsetneq A}\{ \min_{x\in X, y\in A\setminus X}\{d_G(x,y)\}\}$, where $d_G(x,y)$ is the distance between $x$ and $y$ in $G$. The following proposition shows a relation between $D(G,A)$ and the above optimal power values.

\begin{proposition}\label{lem:twice-p}
Consider a configuration $(G,A)$ for convergecast and a configuration $(G,A)$ with a specified source agent for broadcast. Then $D(G,A)\leq 2P_{OPT}^c$ and $D(G,A)\leq 2P_{OPT}^b$ for any source agent in $(G,A)$.
\end{proposition}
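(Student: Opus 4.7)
Fix a partition $(X^*, A\setminus X^*)$ achieving the maximum in the definition of $D(G,A)$, and set $d = D(G,A) = \min\{d_G(x,y) \mid x\in X^*, y\in A\setminus X^*\}$. The plan is to show that in any convergecast (resp. broadcast) strategy using power at most $P$, there must be a ``mixing'' meeting at some point $p$ between an agent whose starting position lies in $X^*$ and an agent whose starting position lies in $A\setminus X^*$; a triangle-inequality argument at $p$ then gives $d \le 2P$.

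\textbf{Convergecast.} Let $\cA$ be an optimal convergecast strategy using power $P_{OPT}^c$, and let $a^*$ be the agent that eventually collects all information. Assume without loss of generality that $a^*$ starts in $X^*$ (the case where $a^*$ starts in $A\setminus X^*$ is symmetric). Pick any agent $b$ starting at some $y\in A\setminus X^*$. Because $a^*$ learns $b$'s initial information, one may extract a chain of meetings $b = a_0, a_1, \ldots, a_m = a^*$ such that $a_i$ transmits $b$'s information to $a_{i+1}$ at a meeting occurring at time $t_{i+1}$ and point $p_{i+1}$. Since $a_0$ starts in $A\setminus X^*$ and $a_m$ starts in $X^*$, there is an index $i$ for which the starting positions of $a_i$ and $a_{i+1}$ lie in opposite parts of the partition; let $y'\in A\setminus X^*$ be the starting position of $a_i$ and $x'\in X^*$ that of $a_{i+1}$, and let $p = p_{i+1}$ be the point of their meeting. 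Each agent travels a total distance at most $P_{OPT}^c$, so
\[
d_G(x', p) \le P_{OPT}^c \quad \text{and} \quad d_G(y', p) \le P_{OPT}^c.
\]
By the triangle inequality applied to distances in $G$, $d_G(x', y') \le 2 P_{OPT}^c$. Since $x'\in X^*$ and $y'\in A\setminus X^*$, the defining minimum of $d$ over inter-part pairs yields $d \le d_G(x', y') \le 2 P_{OPT}^c$, proving $D(G,A) \le 2 P_{OPT}^c$.

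\textbf{Broadcast.} Let the source agent start at $s$. Apply the same argument with the roles reversed: pick any agent $b$ that starts in the part of $(X^*, A\setminus X^*)$ not containing $s$, and extract a chain of meetings transmitting $s$'s information from $s$ to $b$. Along this chain some consecutive pair of agents has starting positions in opposite parts of the partition, and the triangle inequality at their meeting point gives $d \le 2 P_{OPT}^b$, as before.

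The principal subtlety is the existence of the chain with the stated transition property. Formally, one observes that whenever an agent $a$ holds $b$'s information at some time $t$, there is a strictly earlier agent-meeting (or $a=b$ at $t=0$) at which $a$ first acquired this information; iterating this construction backward produces the desired sequence $a_0=b, a_1, \ldots, a_m$. Since the two endpoints of the sequence lie in different parts, a transition index necessarily exists, and the triangle inequality does the rest.
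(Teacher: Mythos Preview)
Your proof is correct and follows essentially the same idea as the paper's: both argue that some agent starting in $X^*$ must meet some agent starting in $A\setminus X^*$, and the triangle inequality at the meeting point bounds the distance between their starting positions by $2P$. The paper phrases this tersely by contradiction (if every cross-pair were farther than $2P_{OPT}^c$ apart, no such meeting could occur), whereas you spell out the information-transmission chain and the transition index explicitly, but the underlying argument is identical.
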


\begin{proof}
We prove the proposition for the case of convergecast. The proof for broadcast is similar.
Suppose, by contradiction, that there is a partition of $A$ into $X$ and $A\setminus X$ such that for each $x\in X$ and $y\in A\setminus X$ the distance between $x$ and $y$ is greater than $2P_{OPT}^c$. It means that no agents in $X$ can meet an agent in $A\setminus X$ using power $P_{OPT}^c$. This contradicts the fact that there is a {\ccast} {\strat} in $G$ using battery power $P_{OPT}^c$. Hence, for every partition of $A$ into $X$ and $A\setminus X$, there exist agents $x\in X$ and $y\in A\setminus X$ that are at distance at most $2P_{OPT}^c$.
\end{proof}

In view of Proposition \ref{lem:twice-p}, the following theorem shows that the convergecast problem has a polynomial-time 2-approximation.

\begin{theorem}\label{cor:FourApr}
Consider a configuration $(G,A)$. There is a polynomial algorithm computing a {\ccast} strategy in which each agent uses power $D(G,A)$.
\end{theorem}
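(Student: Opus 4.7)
The plan is to reduce the problem to performing convergecast along a carefully chosen spanning tree of the agents, where each tree edge corresponds to a shortest path in $G$ of length at most $D(G,A)$.

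Concretely, I would proceed in three steps. First, build the complete auxiliary graph $H$ on the vertex set $A$, where the weight of the edge $\{a,b\}$ is the shortest-path distance $d_G(a,b)$; using Dijkstra from each agent, this takes polynomial time. Second, compute a minimum spanning tree $T$ of $H$ (e.g.\ by Kruskal's algorithm), and let $w$ denote the maximum edge weight appearing in $T$. I claim that $w = D(G,A)$: on the one hand, removing any maximum-weight edge of $T$ splits $A$ into two sides $X, A\setminus X$, and by the cut property of MSTs every edge of $H$ across this cut has weight at least $w$, so $D(G,A)\ge w$; on the other hand, for any bipartition $(X,A\setminus X)$, the tree $T$ must contain an edge crossing this cut, and that edge has weight at most $w$, so $\min_{x\in X, y\in A\setminus X} d_G(x,y)\le w$, giving $D(G,A)\le w$. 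Thus every edge of $T$ has weight at most $D(G,A)$.

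Third, root $T$ at an arbitrary agent $r$ and convergecast along $T$ as follows. Process the agents in post-order: when it is agent $a$'s turn (after all descendants of $a$ in $T$ have finished), $a$ has already collected the information of its whole subtree at its starting position (because each child of $a$ previously walked along the shortest path in $G$ from its starting position to the starting position of $a$, depositing its accumulated information there); now $a$ walks along the shortest path in $G$ from its starting position to the starting position of its parent $p(a)$, carrying the aggregated information. The root $r$ never moves. At the end, $r$ holds the entire information, so this is a valid convergecast strategy. Each non-root agent $a$ traverses exactly one shortest path of length $d_G(a,p(a))$, which is the weight of the tree edge $\{a,p(a)\}$ in $H$, hence at most $D(G,A)$; the root uses no power.

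The only genuinely delicate point is the identity $w = D(G,A)$, but this is a standard bottleneck-MST argument sketched above; the rest is a routine post-order schedule on $T$ together with the observation that walking along a path in $G$ does not require any coordination with other agents. Combining this strategy with Proposition~\ref{lem:twice-p} immediately yields the claimed polynomial-time $2$-approximation for centralized convergecast on arbitrary graphs.
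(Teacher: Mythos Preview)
Your proof is correct and follows essentially the same approach as the paper: build a minimum spanning tree on the agents with shortest-path distances as edge weights, then have each non-root agent walk once to its parent so that information flows to the root. The paper's algorithm \STree is a Prim-style greedy (grow $V$ from $A[1]$, always add the closest outside agent, then replay moves in reverse) and bounds each step directly by $d(u_i,v_i)\le D(G,A)$ via the partition $(V_{i-1},A\setminus V_{i-1})$, whereas you use Kruskal plus the standard bottleneck-MST argument to get $w=D(G,A)$ and a post-order schedule; both yield the same tree and the same per-agent power bound.
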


\begin{proof}
We formulate algorithm $\STree$ which produces the desired convergecast strategy.
The parameters of the algorithm are the graph $G$ and the nodes corresponding to the initial positions of agents (stored in $A[1:k]$). 

\medskip

\begin{algorithm}[H]
\SetKw{EmptyStack}{empty\_stack}
\caption{KnownGraph(a weighted graph G, an array A{[1:k]} of nodes) \label{algo:graph}}
\SetKwData{MyTrue}{true} \SetKwData{MyFalse}{false} 
\SetKw{Not}{not} 
\SetKw{MyAnd}{and}
\SetKw{Wait}{Wait}
\SetKw{Move}{Move}
\SetKw{Push}{push}
\SetKw{Pop}{pop}
$strategy=\EmptyStack$\;
$V:=\{A[1]\}$\;
$P:=0$\;
\Repeat{$V= A$}
{
	choose a couple $(u,v)\in V\times(A\setminus V)$ such that $d(u,v)$ is minimal\;
	$V:=V\cup \{v\}$\;
	$Path:=$ shortest path between $u$ and $v$\;
	$\Push(strategy,(v,Path,u))$\;
	$P=max\{P,d(u,v)\}$\;
}
\Repeat{$strategy= \EmptyStack$}
{
	$(v,Path,u)=\Pop(strategy)$\;
	agent starting in $v$ moves to $u$ following path $Path$\;
}
\end{algorithm}

\medskip
Let $(u_i,v_i)$ be the nodes chosen at the $i$-th iteration of the first loop and let $V_i$ be the value of $V$ at the end of the $i$-th iteration. We set $u_0=A[1]$ and $V_0=\{v_0\}$. We show, by induction, that at the start of the $i$-th iteration of the second loop, agents that started in $V_{k-i}$ hold collectively all the information. It is clearly true for $i=1$. Assume by induction that it is true for $i$. The agent that started at $v_{k-i}$ moves to node $u_{k-i}=v_{k-j}$ for some $j>i$, during the $i$-th iteration of the second loop. After this move, the agent that started at $v_{k-j}$ has the information of the agent that started at $v_{k-i}$. Agents in $V_{k-(i+1)}$ collectively hold all the information. Hence, the property is true for $i+1$ and this concludes the argument by induction. At the end of the algorithm, the agent at $A[1]$ has all the information since $V_0=\{A[1]\}$.

Let $A$ be the set of agents. Consider the partition of $A$ into sets $V_{i-1}$ and $A\setminus V_{i-1}$. We have $d(u_i,v_i)\leq D(G,A)$ since $(u_i,v_i)$ is the couple 
$(u,v)\in V_{i-1}\times(A\setminus V_{i-1})$ such that $d(u,v)$ is minimal. Hence, no agent will traverse distance larger than $2P_{OPT}^c$ by Proposition~\ref{lem:twice-p}.

In $O(n^3)$ time, it is possible to precompute all shortest paths between $u$ and $v$ for all $u,v \in A$. Each iteration of the first repeat loop can be computed in $O(n^2)$ time and there are $k-1$ such iterations where $k\leq n$ is the number of agents. Hence, executing the first repeat loop takes time $O(n^3)$. The execution the second repeat loop takes time $O(n^2)$. Hence, the overall complexity of the algorithm is $O(n^3)$.
\end{proof}

\medskip

The above theorem gives the following corollary for the broadcast problem on arbitrary graphs.

\begin{corollary}
The broadcast problem on arbitrary graphs has a polynomial-time 4-approximation.
\end{corollary}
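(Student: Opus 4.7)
My plan is to reuse the spanning tree implicitly constructed by algorithm \STree. Specifically, I would relabel the source agent $s$ as $A[1]$ and run \STree, which in polynomial time produces a rooted tree $T$ on the agent set in which every edge $(v,u(v))$ satisfies $d_G(u(v),v)\le D(G,A)$ (the minimum-cut argument in the proof of Theorem \ref{cor:FourApr} certifies this edge-length bound). This tree $T$, rooted at $s$, will serve as the skeleton along which the source information is propagated.

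From $T$ I would derive a broadcast schedule organized in phases indexed by tree depth. In phase $i$ (for $i=1,2,\ldots,\mathrm{depth}(T)$), every agent $v$ at depth $i$ walks along a shortest $G$-path from its starting node to the starting node of its parent $u(v)$, meets $u(v)$ there, picks up the source information, and walks back along the same path to its own starting node. The inductive invariant I would maintain is that at the beginning of phase $i$, every agent at depth strictly less than $i$ already holds the information and is located at its own starting node; this is preserved because the only motion performed by an agent of depth $j$ takes place in phase $j$ and returns it to its starting node. Since $s$ itself never moves and every other agent performs exactly one round trip of length $2\,d_G(u(v),v)\le 2D(G,A)$, the maximum power used by any agent is at most $2D(G,A)$. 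Combined with Proposition \ref{lem:twice-p}, which gives $D(G,A)\le 2P_{OPT}^b$, this yields the announced $4$-approximation, and the total running time is dominated by the $O(n^3)$ cost of \STree.

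The main point to verify, and the only place where the argument really differs from the convergecast one, is that the meeting between each child and its parent actually takes place: in convergecast each tree edge is traversed only once, whereas here the parent must have returned to its own starting node, carrying the information, before the child's visit. This will not be hard to handle because we optimize power rather than time, so phases can be executed sequentially; an agent that arrives early at its parent's starting node simply waits, and the round-trip structure guarantees that each parent is back at its starting node holding the information before the next phase begins.
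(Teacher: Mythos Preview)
Your argument is correct: rooting the \STree\ tree at the source and having each non-root agent make a single round trip to its parent uses at most $2\,d_G(v,u(v))\le 2D(G,A)$ power per agent, and Proposition~\ref{lem:twice-p} then gives the factor~$4$. The depth-phased schedule and the observation that parents are back at their starting nodes (carrying the information) before the next phase begins are exactly the invariants needed, and since the paper measures only power, the sequential execution of phases is free.

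The paper takes a slightly different route. Rather than re-rooting the tree at the source and propagating downward, it runs the convergecast strategy $S$ of Theorem~\ref{cor:FourApr} unchanged (so all information---including the source's---is collected by the agent at $A[1]$), and then simply \emph{reverses} $S$ move-for-move to obtain a broadcast from $A[1]$; concatenating $S$ with its reversal yields a broadcast from the original source using power at most $2D(G,A)$. The paper's argument is shorter because it treats convergecast as a black box and avoids introducing the depth-phased schedule, whereas your approach is more constructive and has the pleasant feature that the source never moves. Both arguments rest on the same edge-length bound $d(u_i,v_i)\le D(G,A)$ established in Theorem~\ref{cor:FourApr} and on Proposition~\ref{lem:twice-p}, so the conceptual content is the same.
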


\begin{proof}
Let $(G,A)$ be a configuration with an arbitrary source agent $a$. By Theorem \ref{cor:FourApr}, there is a convergecast strategy $S$ for $(G,A)$ using power at most $D(G,A)$ that can be computed in polynomial time. Let $b$ be the agent that collects all information upon completion of this strategy. Consider the strategy $S'$ which consists of performing the reverse of all moves of $S$ in the reverse order. The strategy $S'$ is a broadcast strategy for source agent $b$. Hence, the strategy $S$ followed by $S'$ is a broadcast strategy for source agent $a$. The required power is at most $2D(G,A)$ which gives a 4-approximation of the broadcast problem in view of Proposition~\ref{lem:twice-p}.
\end{proof}

\section{{\Dccast} and broadcast on trees}\label{s:online}

In this section, we consider the convergecast and the broadcast problem in the distributed setting. 
As explained in the introduction, we consider weighted trees with agents at every leaf. In view of Proposition~\ref{lem:twice-p}, the following theorem implies that there exists a 2-competitive distributed algorithm for the {\ccast} problem on trees.

\begin{theorem}\label{thm:FourComp}
Consider a configuration $(T,A)$ where $T$ is a tree and $A$ contains all the leaves of $T$. There exists a distributed {\ccast} algorithm in which each agent uses power at most $D(T,A)$.
\end{theorem}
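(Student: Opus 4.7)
The plan is to exhibit an explicit distributed algorithm and bound its per-agent power consumption directly by $D(T,A)$.

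The algorithm I will propose is the following. Every agent walks at unit speed from its leaf along its unique incident edge. Upon arrival at an internal node $u$ of degree $d$, the agent waits, maintaining the set $P_u$ of ports of $u$ through which information has been delivered so far (its own entry port counts). As soon as $|P_u| = d-1$, the single remaining port $p^\star$ of $u$ is identified; the agent currently at $u$ whose total walked distance is smallest (ties broken deterministically using the local port labels) continues along $p^\star$, carrying the combined information, while all other agents at $u$ stop permanently. When two walker groups approach each other along the same edge, they meet in its interior, exchange information, and both stop.

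Correctness will follow from a routine induction on the tree: when $|P_u|$ first reaches $d-1$ at an internal node $u$, the aggregated information at $u$ is exactly the union of the initial information of the $d-1$ already-reported subtrees of $u$; the agent that departs along $p^\star$ carries this union forward, and convergecast completes at the last meeting. The setup for the power bound goes as follows. Fix an agent $a$ and let $p_a$ be the point at which it halts. Since $a$'s motion is monotone toward $p_a$, its walked distance equals $d(\ell_a, p_a)$. Removing $p_a$ from $T$ yields components $C_1,\ldots,C_m$; let $C_1$ be the one containing $\ell_a$. Applying the definition of $D(T,A)$ to the partition $(A\cap C_1,\, A\setminus C_1)$ produces agents $x \in A\cap C_1$ and $y\in A\setminus C_1$ with $d(x,y)\leq D(T,A)$. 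Because the unique $xy$-path in $T$ must cross $p_a$, I obtain $d(\ell_x,p_a)\leq D(T,A)$.

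The crucial and hardest step will be to establish $d(\ell_a,p_a)\leq d(\ell_x,p_a)$. The idea is to exploit the ``least walked continues'' rule: at each internal node $v$ along $a$'s trajectory where $a$ was the selected walker, $a$'s walked distance at that moment is no larger than that of any other agent present at $v$, and in particular no larger than that of the agent representing the $\ell_x$-branch arriving at $v$. I plan to chain these local inequalities along the sequence of internal nodes visited by $a$, treating the terminal condition at $p_a$ (where either all $d$ ports of $p_a$ have reported, or $a$ is passed over in favor of a strictly less-walked agent) as a separate case. The main obstacle is that the ``representative of $\ell_x$'s branch'' need not be a single agent: along the way, different agents from $C_1$ may take over this role as walkers merge and the least-walked one continues. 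Formalizing the chain therefore requires careful bookkeeping of which agent in $C_1$ is tracked at each step and verifying that the running inequality is preserved through both merges and handovers; once this is done, combining it with the partition-based inequality above gives $d(\ell_a,p_a)\leq d(\ell_x,p_a)\leq D(T,A)$, completing the proof.
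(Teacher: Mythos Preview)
Your proposal is correct and uses essentially the same approach as the paper: the algorithm is exactly the paper's \textsc{UnknownTree} (the least-walked agent continues through the lone unreported port), and the power bound reduces to the same key inequality $d(\ell_a,p_a)\le d(\ell_x,p_a)$ combined with the partition definition of $D(T,A)$. One simplification for your ``crucial step'': rather than chaining local inequalities along $a$'s trajectory while tracking which agent currently represents $\ell_x$'s branch, prove by a single structural induction on subtrees that the agent delivered from any subtree $S$ to a node $v$ has minimum start-to-$v$ distance among all agents starting in $S$; applied at the last node $a$ visits with $S=C_1$, this yields $d(\ell_a,p_a)\le d(\ell_x,p_a)$ for \emph{every} $x\in A\cap C_1$ at once and eliminates the bookkeeping you anticipate (this is precisely the one-line justification the paper gives).
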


\begin{proof}
The idea behind the algorithm is similar to the saturation technique used for message passing systems (see chapter 2.6.1 of \cite{San}). Each agent starting at a leaf moves until it reaches the neighbor of its starting position. When an agent reaches a node, it waits until an agent has arrived from each incident edge except one. When this happens, the agent with the most remaining power moves via the edge from which no agent has arrived. One can show that each agent will not move more than $D(T,A)$ and thus twice $P_{OPT}^c$ by Proposition~\ref{lem:twice-p}. At some point, the saturation occurs, i.e., two agents meet inside an edge or agents meet at a node coming from all incident edges. At this point, the convergecast is achieved.

The pseudocode of the algorithm (executed distributedly by all agents) is the following.

\begin{algorithm}[H]
\caption{UnknownTree \label{algo:2appr}}
\SetKwData{MyTrue}{true}
 \SetKwData{MyFalse}{false} 
\SetKw{Not}{not} 
\SetKw{MyAnd}{and}
\SetKw{Wait}{Wait}
\SetKw{Move}{Move}
 $collecting=\MyFalse$\;
 \While{$collecting=\MyFalse$}
 {\Wait{\emph{}until there is at most one port unused by an agent incoming at the current node}\;
 	\uIf{all ports of the current node were used by incoming agents}
	{$collecting=\MyTrue$}
	  	\uIf{the agent has used less power than any other agent present at the node \MyAnd $collecting=\MyFalse$}
	{\Move{\emph{}through the unused incoming port until you meet another agent or reach a node}\;
	}
	\lElse
	{$collecting=\MyTrue$} 
	\lIf{the agent is inside an edge}{$collecting=\MyTrue$}
 }
\end{algorithm}

 First, we show that if each agent executes Algorithm~\ref{algo:2appr} then, eventually, one agent will hold all the information. Consider an agent $a$ executing the algorithm. Let $T_a(t)$ be the subtree rooted at the last visited node and containing all nodes accessible from the current position of $a$ by shortest paths containing a non-null part of the last edge traversed by agent $a$. Hence, when $a$ enters a new node $u$, $u$ is added to $T_a$. We show by induction on the number of nodes of $T_a$ that $a$ has the initial information of every agent that started at a node of $T_a$. For $|T_a|=1$, this is true since $a$ is the only agent that started in $T_a$. The size of $T_a$ grows only when $a$ enters or exits some node $v$. When $a$ enters a new node $v$, we show that any agent that started at $v$ did not move yet. Assume by contradiction that there is an agent $b$ that started at $v$ and has moved before the arrival of $a$. It means that agents have arrived from all but one edge incident to $v$. In that case, agent $b$ follows the edge from which no agent has arrived. Hence, the only possible edge that agent $b$ can follow is the edge taken by agent $a$ to arrive at $v$. This leads to a contradiction since agents $a$ and $b$ must have met inside the edge and agent $a$ would have stopped before reaching $v$.
 
  When an agent $a$ moves from a node $v$ of degree $\delta$, there were $\delta - 1$ agents $b_1,b_2,\dots,b_{\delta - 1}$ that have arrived at $v$ before. By the induction hypothesis, each agent $b_i$, for $1\leq i\leq \delta-1$, has collected all the information from agents starting inside the subtree $T_{b_i}$. Since agent $a$ moves in the only direction from which no agent has arrived, it has the information of every agent that started in $T_{a}=T_{b_1}\cup T_{b_2}\cup\dots\cup T_{b_{\delta - 1}}$. This concludes the proof by induction.

Observe that for each $1\leq i\leq k$  the tree $T_{a_i}$ grows until either $a_i$ meets agents that have arrived from all incoming ports of its current position, or another agent $a_j$ with more power moves in a yet unexplored direction. In the latter case, $T_{a_i}\subseteq T_{a_j}$ and the tree $T_{a_j}$ will grow under the same conditions. Thus, $\cup_{i=1}^{k}T_{a_i}$ will eventually be equal to $T$. This happens when either two agents $u_1,u_2$ meet inside an edge or $\delta$ agents $u_1,u_2,\dots,u_\delta$ meet at a node of degree $\delta$. These agents have the entire information since  $T_{u_1}\cup T_{u_2}\cup\dots\cup T_{u_\delta}=T$ ($\delta =2$ if the meeting occurs on an edge). 

It remains to show that the agents do not use more battery power than $D(T,A)$. 
Let $p$ be the point where some agent $a$ has finished the execution of the algorithm (when the value of $collecting$ becomes true for this agent) and let $v$ be the last node visited by $a$ before reaching $p$. Consider $T_a$ when $a$ exited $v$. Agent $a$ is the agent starting in $T_a$ for which the distance between its initial position and the node $v$ was the smallest, since it was the agent that has used the least power when it arrived at $v$.  Thus, the distance between the initial position of an agent in $T_a$ and an agent in $G\setminus T_a$ is less or equal than $D(G,A)$. Hence we conclude that $P\leq D(T,A)$. By Property~\ref{lem:twice-p}, we have that $D(T,A)\leq 2P_{OPT}^c$ and hence the algorithm is 2-competitive.
\end{proof}

Again in view of Proposition~\ref{lem:twice-p}, the following corollary implies that there exists 	a 4-competitive distributed algorithm for the broadcast problem on trees.

\begin{corollary}\label{cor:FourComp}
Consider a configuration $(T,A)$ with a specified source agent, where $T$ is a tree and $A$ contains all the leaves of $T$. There exists a distributed broadcast algorithm in which each agent uses power at most $2D(T,A)$.
\end{corollary}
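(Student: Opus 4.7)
The plan is to combine the distributed convergecast algorithm of Theorem~\ref{thm:FourComp} with a reversal idea analogous to the one used in the centralized broadcast proof on arbitrary graphs. In Phase~1, every agent executes \FourApr. By Theorem~\ref{thm:FourComp} this uses at most $D(T,A)$ power per agent and terminates with a saturation event at which either two agents meet inside an edge or $\delta$ agents meet at a common node of degree $\delta$; in either case the agents involved in this event hold between them the complete information of all starting agents, and in particular the initial information of the source agent.

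In Phase~2, I would have the agents participating in the saturation event retrace, in reverse chronological order, the sequence of moves each of them performed during Phase~1, while every agent that terminated Phase~1 without being part of the saturation event simply waits at its current position. The Phase~1 meetings take place along the edges of a spanning subtree of $T$ naturally rooted at the saturation location; replaying them in reverse order causes a reverser, when it meets a waiting agent, to hand over the complete information (and in particular the source agent's information), at which point that waiting agent becomes a reverser itself and retraces its own Phase~1 trajectory, propagating the information further outward. A straightforward induction on the depth in this implicit convergecast tree shows that every agent eventually learns the source agent's information and returns to its starting position. Since the travel of an agent in Phase~2 is exactly equal to its travel in Phase~1, and hence at most $D(T,A)$, every agent spends at most $2D(T,A)$ power in total.

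The main obstacle is ensuring that the reversal behaves correctly in a fully distributed setting, where agents have no global clock and no one knows when the convergecast has globally terminated. The key observation is that each agent can recognize locally whether it took part in the saturation event: either it collided with another agent inside an edge, or it stayed at a node until every incident port had been used by some incoming agent. Agents that detect this initiate Phase~2 immediately, while all other agents (those that stopped only because a more powerful agent took the last free port) merely wait passively until a reverser comes back for them, so no global synchronization is needed. Verifying that every waiting agent is still in place when a reverser reaches it reduces to the fact that an agent only starts its own reverse motion after being contacted, which should follow by an easy induction on the height in the implicit convergecast tree built by \FourApr.
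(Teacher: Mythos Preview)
Your proposal is correct and follows essentially the same approach as the paper: run the distributed convergecast algorithm of Theorem~\ref{thm:FourComp} as Phase~1, then have the agents that detect saturation backtrack along their Phase~1 paths, activating each waiting agent they meet so that it in turn backtracks, until every agent has the full information and is back at its starting position, for a total cost of at most $2D(T,A)$. Your discussion of how agents locally recognize saturation and why no global synchronization is needed is more explicit than the paper's, but the underlying argument is the same.
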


\begin{proof}
Let $(T,A)$ be a configuration with a specified source agent $a$. All agents execute the following algorithm consisting of two phases.
In the first phase, each agent executes algorithm \TwoAprTree from the proof of Theorem \ref{thm:FourComp} to achieve convergecast. Suppose that $B$ is the set of agents that get the total information at the end of the execution of this phase. All agents in $B$ are aware of this fact. Agents in $B$ start the second phase. We call them \emph{active} agents. 
Each active agent backtracks to its initial position, by walking along the path reverse to the one used in phase 1. On its way, it 
activates all agents it meets and conveys all the information to each of them. The process continues until each agent is activated and is back at its initial position. At this time, all information and in particular information of the source agent $a$ is known to all agents. The energy spent is at most $2D(T,A)$.
\end{proof}

The following theorem shows that no distributed algorithm may offer a
better competitive ratio than $2$ for convergecast or for broadcast, even if we only consider line networks. 

\begin{theorem}\label{thm:TwoAprOpt}
Consider any $\delta > 0$, and any value of power $P$. There exists an
integer $n$ and a configuration $Pos[1:n]$ of $n$ agents on the line
such that  :
\begin{itemize}
\item there exists a centralized {\convergecast} strategy using power $P$ and
there is no deterministic distributed strategy allowing the
agents to solve {\convergecast} when the amount of power given to each
agent is $(2-\delta)P$.
\item there exists a centralized {\broadcast} strategy using power $P$ for source agent starting at $Pos[1]$ 
and there is no deterministic distributed strategy for source agent starting at $Pos[1]$  allowing the
agents to solve {\broadcast} when the amount of power given to each
agent is $(2-\delta)P$.
\end{itemize}
\end{theorem}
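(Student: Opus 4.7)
The plan is to prove the lower bound by an adversary argument that exploits the anonymity of agents in the distributed model. For any $\delta > 0$ and any $P > 0$, and for any given deterministic distributed algorithm $\mathcal{A}$, I would exhibit a line configuration $Pos[1:n]$ such that a centralized strategy for the task uses power at most $P$, while $\mathcal{A}$ requires strictly more than $(2-\delta)P$ on some agent. The overall setup is:  the adversary is allowed to choose both the configuration and the port labeling of internal nodes, and will tailor this choice to $\mathcal{A}$.

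The first step is to observe that at time $0$, an internal (degree-2) agent has no entry port and has received no information; its initial behavior is therefore a deterministic function of only the two port labels at its starting node. The adversary is free to relabel the ports, which is equivalent to orienting the local geometry. Thus for any $\mathcal{A}$, the first direction taken by an internal agent can be dictated by the adversary. I would then choose a family of line configurations admitting a centralized strategy with power $P$ (for instance three-agent configurations $0, a, b$ with $b\le 2P$, for which such a strategy is easy to verify) and consider $\mathcal{A}$'s behavior on these.

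The second step analyzes what a deterministic algorithm does at the internal agent. Any $\mathcal{A}$ prescribes some ``probing'' strategy for that agent: a possibly adaptive sequence of walks alternating directions until another agent is met. Call the total distance walked by the internal agent before it first learns information from both sides its wasted probing distance; I would show that in the worst case (over configurations in the family and port labelings), this probing distance is close to $P$. Combined with the additional power required afterwards to complete the convergecast, this forces the total power of some agent to exceed $(2-\delta)P$. The argument is carried out by an indistinguishability/pigeonhole principle: any probing strategy behaves identically on a pair of appropriately related configurations in the family, yet the two configurations demand opposite corrective actions, so on at least one of them the probing is entirely wasted.

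The main obstacle is obtaining the sharp quantitative bound $(2-\delta)P$ uniformly against all possibly adaptive algorithms. A naive configuration (e.g.\ three agents at $0, \epsilon, 2P$) is not enough, since after meeting the far agent the internal one can backtrack at an additional cost of only $O(\epsilon)$, yielding total power $\approx P$. To force a waste close to $P$ instead of $O(\epsilon)$, the adversary must design the family so that probing in the wrong direction really commits a distance close to $P$ before any useful information is obtained; this typically requires a configuration of size depending on $\delta$, and a careful cancellation argument showing that any adaptive zig-zag probing still traverses a ``net'' distance of at least $(1-\delta)P$ in the wrong direction before the algorithm can reliably commit to the correct one.

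For broadcast, the argument is analogous, but made somewhat simpler by the additional asymmetry of the source: the source agent is at $Pos[1]$, and by adversarial port labeling the internal agent adjacent to the source can be forced to walk initially away from it. Since no agent can correctly broadcast before obtaining the source's information, this delay again translates into wasted power, giving the same $(2-\delta)$ lower bound.
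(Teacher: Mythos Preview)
Your proposal correctly identifies anonymity and symmetry as the source of the lower bound, and you are right that three agents are not enough. But the plan as written has a genuine gap: the mechanism that actually forces a waste close to $P$ is missing, and the port-labeling adversary you lean on is not the right tool.

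The paper's proof does not manipulate port labels at all. Instead it places a large number of agents in $2l$ tightly packed clusters $A_1,\ldots,A_{2l}$ (each of width $\sigma\ll P$), separated by gaps of length $2(P-\varepsilon)$, with single agents at the two ends. Two quantitative lemmas drive the argument. The first (Lemma~\ref{lem-offline-group}) shows that $k\approx\log(P/\varepsilon)$ agents inside one cluster suffice, in the centralized push strategy, to relay information across a gap of almost $2P$; this is why power $P$ is enough centrally. The second (Lemma~\ref{lem-online-gap}) is the core of the lower bound: because all agents in the inner clusters are in identical local situations and run the same deterministic algorithm, they move in lockstep until the first meeting. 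When the leftmost cluster finally meets the left endpoint after travelling distance $d$, \emph{every} cluster has simultaneously burnt $d$ units of fuel going left, so the rightmost point any of them can ever carry information to drops by $2d$. Iterating this over the chain of clusters gives a recurrence $R_i\le s_{i+1}-(2^{i+2}-2)\varepsilon$: the deficit \emph{doubles} at each level. After $l\approx\log(1/\delta)$ levels the deficit exceeds $(2-\delta)P$, so the left and right halves can never meet.

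Your sketch never arrives at this doubling recurrence, and your single-internal-agent picture cannot produce it: with one probing agent the wasted distance is bounded by the gap it probes into, and the endpoints (which know they are leaves) can compensate. What makes the bound tight is the \emph{collective} waste of many identical agents moving in lockstep across many gaps, not an adversarial port relabeling of one agent. The broadcast case is not simpler; the paper handles both with the same construction and the same induction.
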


Before proving Theorem~\ref{thm:TwoAprOpt}, we prove two technical lemmas.  

\begin{lemma}\label{lem-offline-group}
Consider any $\varepsilon > 0$, an amount of power $P$, and a
set $\{a_1, a_2, \ldots,$ $ a_k, a_{k+2}\}$ of $k+2$ agents located at
positions $Pos[1:k+2]$.  If $Pos[k+2]- \rclr(1,P) \leq P-\varepsilon$,
and if $k \geq \log(P/\varepsilon)$, there exists $i \leq k+1$ such
that $\rclr(i,P)\geq Pos[i+1]$.
\end{lemma}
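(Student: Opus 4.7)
The plan is to argue by contradiction: suppose that $\rclr(i, P) < Pos[i+1]$ for every $i \leq k+1$, and show that this forces $\rclr(k+1, P) \geq Pos[k+2]$ under the two hypotheses, contradicting the case $i = k+1$ of the assumption.

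Under this contradiction hypothesis, at each step the ``new'' agent $a_{i+1}$ sits strictly to the right of the current reach $\rclr(i, P)$, so it must go back to pick up the information. Hence the push recurrence
\[
\rclr(i+1, P) = 2\rclr(i, P) + P - Pos[i+1]
\]
applies for every $i = 1, \ldots, k$. Starting from $\rclr(1, P) = Pos[1] + P$ and iterating (this is just the specialization of Lemma~\ref{lem-eqn-reach} to $p = 1$), one obtains the closed form
\[
\rclr(k+1, P) = 2^k Pos[1] + (2^{k+1}-1)P - \sum_{i=2}^{k+1} 2^{k+1-i}Pos[i].
\]

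Next, I would rewrite the deficit $Pos[k+2] - \rclr(k+1, P)$ in terms of the consecutive gaps $\alpha_j := Pos[j+1] - Pos[j]$. Substituting $Pos[i] = Pos[1] + \sum_{j<i}\alpha_j$ into the closed form and performing the (routine) telescoping swap of summation order yields the clean identity
\[
Pos[k+2] - \rclr(k+1, P) = \sum_{j=1}^{k+1} 2^{k+1-j}\alpha_j - (2^{k+1}-1)P.
\]

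Finally, I would bound the weighted sum crudely by its largest coefficient: each $2^{k+1-j} \leq 2^k$, so
\[
\sum_{j=1}^{k+1} 2^{k+1-j}\alpha_j \le 2^k \sum_{j=1}^{k+1}\alpha_j = 2^k\bigl(Pos[k+2] - Pos[1]\bigr).
\]
The assumption $Pos[k+2] - \rclr(1,P) \leq P - \varepsilon$ is exactly $Pos[k+2] - Pos[1] \leq 2P - \varepsilon$, so the sum is at most $2^{k+1}P - 2^k \varepsilon$. Combined with $k \geq \log(P/\varepsilon)$, i.e.\ $2^k\varepsilon \geq P$, this gives
\[
Pos[k+2] - \rclr(k+1, P) \;\leq\; 2^{k+1}P - 2^k\varepsilon - (2^{k+1}-1)P \;=\; P - 2^k\varepsilon \;\leq\; 0,
\]
contradicting $\rclr(k+1, P) < Pos[k+2]$. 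The main obstacle is just the bookkeeping for the telescoping rewrite in terms of $\alpha_j$; once the closed form is in hand, the geometric coefficients make the final inequality essentially tight and force exactly the threshold $k \geq \log(P/\varepsilon)$.
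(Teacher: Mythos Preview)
Your proof is correct and follows essentially the same approach as the paper: assume by contradiction that $\rclr(i,P)<Pos[i+1]$ for all $i\le k+1$, expand the closed form for $\rclr(k+1,P)$, and use the hypothesis together with $2^k\ge P/\varepsilon$ to force $\rclr(k+1,P)\ge Pos[k+2]$. The only difference is bookkeeping: the paper skips the gap variables $\alpha_j$ and directly bounds each $Pos[i]-\rclr(1,P)\le P-\varepsilon$ (since $Pos[i]\le Pos[k+2]$), obtaining $\rclr(k+1,P)\ge \rclr(1,P)+(2^k-1)\varepsilon$ in one line.
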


\begin{proof}
Suppose, by contradiction, that the lemma does not hold. It means that
for each $1 \leq i \leq k+1$, $\rclr(i,P) <
Pos[i+1]$. Therefore, in view of the claim from the proof of Lemma~\ref{lem-eqn-reach}, we have
\begin{eqnarray*}
\rclr(k+1,P) &=& 2^{k}\rclr(1,P)+(2^{k}-1)P - \Sigma_{i=1}^{k}
2^{k-i}Pos[i]\\
&=& \rclr(1,P) +(2^{k}-1)P - \Sigma_{i=1}^{k}
2^{k-i}(Pos[i]-\rclr(1,P))\\
&\geq& \rclr(1,P) +(2^{k}-1)P - \Sigma_{i=1}^{k}
2^{k-i}(P-\varepsilon)\\
&\geq& \rclr(1,P) +(2^{k}-1)P - (2^{k}-1)(P-\varepsilon)\\
& \geq& \rclr(1,P) +(2^{k}-1)\varepsilon
\end{eqnarray*}

Consequently, if $k \geq  \log(P/\varepsilon)$, we have 
$\rclr(k+1,P) \geq \rclr(1,P) + P - \varepsilon \geq Pos[k+2]$, 
a contradiction.
\end{proof}

\begin{lemma}\label{lem-online-gap}
Consider an amount of power $P$, a distance $d > 0$, and a set
$\{a_1,a_2, \ldots, a_k\}$ of $k$ agents located at positions
$Pos[1:k]$.  Let $R_1$ be the closest point from $Pos[2]$ that $a_1$
reached. Assume that $Pos[2]-R_1 = d$.

Suppose that all the agents execute the same distributed deterministic algorithm
and do not know their initial position, and assume that some agent $a
\in \{ a_2, a_3\ldots, a_k\}$ meets agent $a_1$ before any couple of agents 
in $\{a_1, a_2, \ldots, a_k\}$ meet. Then, $a = a_2$
and when $a_2$ meets $a_1$, for each $2 \leq i \leq k$, agent $a_i$ is
located on $Pos[i]-d$.

Moreover, if $R_{max}$ is the rightmost point reached by
some agent knowing the initial information of agent $a_1$, then
$R_{max} \leq   Pos[k] + P - 2d$.
\end{lemma}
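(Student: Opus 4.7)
The plan is to exploit the indistinguishability of agents $a_2,\ldots,a_k$. Each of them starts at a degree-$2$ node, has no knowledge of its position, and runs the same deterministic algorithm; fixing the adversarial port labeling so that port $1$ at every interior node points toward $Pos[1]$ makes their local views identical, so while none of them has met anyone they all execute the same sequence of moves in parallel. Concretely, there is a single offset function $\Delta(t)$ with the property that at every time $t$ prior to the first meeting in $\{a_1,\ldots,a_k\}$, agent $a_i$ occupies position $Pos[i]+\Delta(t)$ for each $i\in\{2,\ldots,k\}$. A useful byproduct is that the pairwise gaps among $a_2,\ldots,a_k$ remain constant, so no two of them meet during this ``solo'' phase, in accordance with the hypothesis that the first meeting in $\{a_1,\ldots,a_k\}$ involves $a_1$.

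Next I would identify $a$ and the positions at the meeting time $\tau$. Writing $L_1(\tau)$ for $a_1$'s position at the meeting, the lockstep property places $a_j$ at $Pos[j]+\Delta(\tau)$ for every $j\ge 2$, with $a_2$ leftmost. If $a=a_j$ for some $j\ge 3$, then $a_2$ would sit at $Pos[2]+\Delta(\tau)=L_1(\tau)-(Pos[j]-Pos[2])<L_1(\tau)$, that is, strictly to the left of $a_1$; but since initially $a_2$ was to the right of $a_1$ and they have not met yet, $a_2$ cannot have crossed $a_1$'s position without meeting, a contradiction. Hence $a=a_2$, and the meeting equation $L_1(\tau)=Pos[2]+\Delta(\tau)$ together with $L_1(\tau)=R_1=Pos[2]-d$ (meeting at $a_1$'s rightmost reach, see the final paragraph) yields $\Delta(\tau)=-d$. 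Therefore $a_i$ stands at $Pos[i]-d$ for every $i\in\{2,\ldots,k\}$.

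For the bound on $R_{max}$, I would use a short energy count. At time $\tau$, every agent $a_i$ with $i\ge 2$ has spent exactly $d$ units of battery reaching $Pos[i]-d$, leaving $P-d$. If an agent $a_j$ with $j\ge 2$ later holds the initial information of $a_1$ at position $R_{max}$, then after $\tau$ it must traverse at least $R_{max}-(Pos[j]-d)$ units, which is at most $P-d$; this yields $R_{max}\le Pos[j]+P-2d\le Pos[k]+P-2d$. If instead $a_1$ itself reaches $R_{max}$, then $R_{max}\le R_1=Pos[2]-d\le Pos[k]+P-2d$ (using $P\ge d$ and $Pos[k]\ge Pos[2]$). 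In all cases $R_{max}\le Pos[k]+P-2d$.

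The main technical subtlety is the identity $L_1(\tau)=R_1$: in principle $a_1$'s alone-phase trajectory could oscillate, reaching $R_1$ at some time $\tau^{*}\neq\tau$ and meeting $a_2$ later at a point strictly to the left. Ruling this out relies on the lockstep structure: at the moment $a_1$ is at $R_1$, agent $a_2$ occupies $Pos[2]+\Delta(\tau^{*})$, and the hypothesis $R_1=Pos[2]-d$ combined with the absence of earlier encounters forces $\Delta(\tau^{*})=-d$, hence $\tau^{*}=\tau$. Formalizing this synchronization argument, in particular excluding pathological bouncing alone-phase trajectories of $a_1$ by a canonical-form reduction on the algorithm (it is harmless to postpone any rightward excursion until after the meeting), will be the delicate step; once it is in place, the remainder of the proof reduces to the lockstep bookkeeping sketched above.
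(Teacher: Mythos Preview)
Your lockstep argument via the offset $\Delta(t)$ is exactly the paper's approach, and your identification of $a=a_2$ is correct and cleaner than the paper's one-line version.

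The part that goes astray is the last paragraph. You do \emph{not} need the equality $L_1(\tau)=R_1$, and the proposed ``canonical-form reduction'' to obtain it is both unnecessary and illegitimate here: this lemma is used to prove a lower bound against \emph{every} deterministic distributed algorithm, so you are not free to replace the algorithm by a canonical one that postpones rightward excursions. What you do have, trivially, is $L_1(\tau)\le R_1$ (by definition of $R_1$), hence $\Delta(\tau)=L_1(\tau)-Pos[2]\le -d$. That inequality is all the energy count requires. At time $\tau$ each $a_i$ with $i\ge 2$ sits at $Pos[i]+\Delta(\tau)\le Pos[i]-d$ and has spent total distance at least $|\Delta(\tau)|\ge d$ (possibly more, if it oscillated), so its remaining power is at most $P+\Delta(\tau)$. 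Therefore the furthest right it can ever reach is $(Pos[i]+\Delta(\tau))+(P+\Delta(\tau))=Pos[i]+P+2\Delta(\tau)\le Pos[k]+P-2d$, and the same bound covers $a_1$ as you noted. This is precisely the paper's argument (``meets $a_1$ at point $R_1$ or to the left of $R_1$ \ldots has moved at least a distance of $d$'').

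Two small corrections in your write-up: the energy spent at time $\tau$ is \emph{at least} $d$, not exactly $d$ (the lockstep trajectory may oscillate), and the position is $Pos[i]+\Delta(\tau)\le Pos[i]-d$, not necessarily with equality. Both slips are harmless because the inequalities point the right way; the paper's own statement of the lemma is equally loose on this point, and only the bound on $R_{\max}$ is used downstream.
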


\begin{proof}
Since all agents are executing the same distributed deterministic algorithm, let
us consider the execution of the algorithm until some agent meets
agent $a_1$. During this period, all the agents perform exactly the
same moves. Since they started simultaneously, no agent meets another agent before agent $a_2$
meets $a_1$ at point $R_1$ or to the left of $R_1$. When agent $a_2$
meets $a_1$, it has moved at least a distance of $d$. Until this
meeting between $a_1$ and $a_2$, every other agent has also moved a
distance of at least $d$, and is located at distance $d$ to the left
of its starting position. Consequently, no agent can go further than
$P-2d$ to the right of $Pos[k]$.
\end{proof}

\begin{proofof}{Theorem \ref{thm:TwoAprOpt}}
Let $\varepsilon = \delta P/4$ and $\sigma = \varepsilon /2= \delta
P/8$. Let $l = \lfloor \log(8/\delta)\rfloor$, $k = l+2$ and $n=2l(l+2)+2$.

Consider a set of $n$ agents positioned on a line as follows (See
Figure~\ref{fig-2-comp}).  There is an agent $a_1$
(resp. $a_{n}$) on the left (resp. right) end of the line on
position $s'_0=0$ (resp. $s_{2l+1}=\ell$). For each $1 \leq i \leq 2l$,
there is a set $A_i$ of $k$ agents that start on distinct positions
within a segment $[s_i,s'_i]$ of length $\sigma$ such that for each $1
\leq i \leq 2l+1$, the distance between $s_i$ and $s'_{i-1}$ is
$2(P-\varepsilon)$. In other words, for each $i$, $s_i=
(2P-3\sigma)i-\sigma$ and $s_i'=(2P-3\sigma)i$.

\begin{figure}
\centering
\includegraphics[width=16cm]{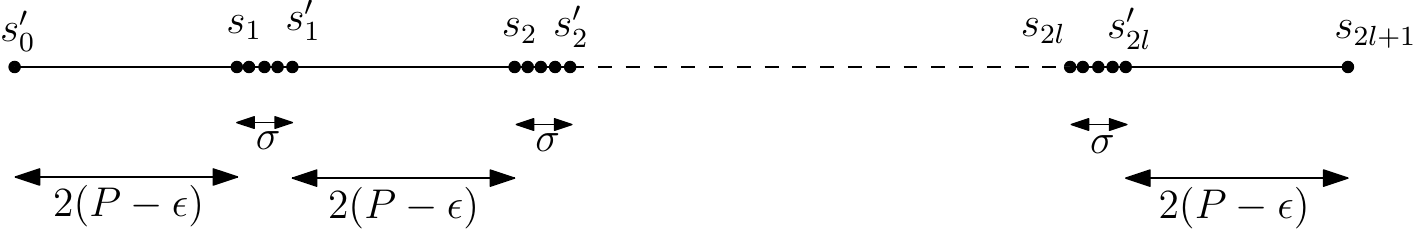}%
\caption{The configuration in the proof of Theorem~\ref{thm:TwoAprOpt}.}
\label{fig-2-comp}
\end{figure}

First, let us consider the execution of the optimal convergecast centralized algorithm for this
configuration. We claim that if the amount of power given to each
agent is $P$, then {\convergecast} is achievable. We show by induction on
$i$ that for every $i$, $\rclr(ik+1,P) \geq s'_i + P-\epsilon = s_{i+1}
- P +\epsilon$. For $i = 0$, $\rclr(1,P) = Pos[1]+P > P-\epsilon =
s_1-P+\epsilon$. Suppose that $\rclr((i-1)k+1,P) \geq s_{i} - P
+\epsilon$. Consider the agents in $A_i$, i.e., the agents $a_{ik+1-j}$,
$j \in [0,k-1]$. Since $s'_i - \rclr((i-1)k+1,P) \leq
P-\epsilon+\sigma = P-\sigma < P$, and since $l+1 \geq \log(P/\sigma)$,
we know by Lemma~\ref{lem-offline-group} that
$\rclr(k(i-1)+l+2,P)\geq Pos[k(i-1)+l+3]$. Since $k > l+1$, it
follows that $\rclr(ik,P) = Pos[ik]+P \geq s_i + P =
s_i'+P-\sigma \geq s_i'+P-\epsilon$. Consequently, this concludes the proof by induction. Since $\rclr(2lk+1,P) \geq s_{2l+1} - P
+\epsilon \geq \rcrl(2lk+2,P)$, $P$ is sufficient to solve
{\convergecast}. Notice that the same strategy guarantees broadcast for source agent $a_1$ for configuration 
$Pos[1:n]$ and power $P$.

\medskip

Consider any distributed deterministic algorithm where the amount of power
given to each agent is $(2-\delta)P$, yielding a strategy $\cS$ of the agents. 
 A \emph{step} in $\cA$ is a moment when two
agents meet.  Let $t_{l,i}$ (resp. $t_{r,i}$) be the first step where an
agent from $A_i$ meets an agent from $A_{i'}$ with $i' < i$ (resp. $i'
> i$).  Let $R_i$ (resp. $L_i$) be the rightmost point
(resp. the leftmost point) reached by any agent from $A_i$ after some agent in
$A_i$ has met an agent from $A_{i'}$ with $i' < i$ (resp. $i' >
i$). For any $1\leq i < j \leq 2l+1$, let $A_{i,j}= A_i \cup A_{i+1}
\ldots  \cup A_{j}$.

We show by induction on time $t$ that for each $i \in [1,l]$ such that
$t_{l,i} \leq t$ and for each $j \in [l+1,2l]$ such that $t_{r,j} \leq
t$, the following properties hold:
\begin{enumerate}[$(i)$]
\item $t_{l,i} < t_{l,i'}$ for each $i' \in [i+1,l]$ and $t_{r,j} < t_{r,j'}$
  for each $j' \in [l+1,j-1]$, 
\item for each $i' \in [i+1,l]$, if $t_{l,i'} > t$ then $t_{r,i'} > t$, and
  for each $j' \in [l+1,j-1]$, if $t_{r,j'} > t$ then $t_{l,j'} > t$ 
\item $R_i \leq s_{i+1}-(2^{i+2}-2)\varepsilon$ and  $L_{j} \geq
s'_{j-1} + (2^{2l-1-j}-2)\varepsilon$,
\item no agent in $A_{i'}$, $i' \geq l$ meets
  any agent from $A_{1,l-1}$ and no agent in $A_{j'}$, $j'\leq l+1$
  meets any agent from $A_{l+2,2l+1}$.
\end{enumerate}

First, consider $t=0$.
Clearly, $R_1 \leq s_0' + 2P-\delta P = 2P - 4\varepsilon = s_1 -
2\varepsilon$ and $L_{2l+1} \geq s_{2l+1} - 2P +\delta P =
s_{2l} +2 \varepsilon$. Since all agents in $A_{1,2l}$ execute the
same algorithm, they all perform the same moves until either the
leftmost agent of $A_1$ meets $a_0$ (at step $t_{l,1}$), or the rightmost
agent of $A_{2l}$ meets $a_{2l+1}$ (at step $t_{r,2l}$). In the
first case, it shows that $t_{l,1} < t_{l,i}$ and $t_{l,1} < t_{r,i}$ for any $i\geq 2$.
By Lemma~\ref{lem-online-gap}, $R_1 \leq s_1' +
(2-\delta)P-2(s_1-R_0) \leq s_2-2P+2\varepsilon + 2P -4\varepsilon
-2(2\varepsilon) = s_2- 6\varepsilon$.  By symmetry, in the second case,
$t_{r,2l} < t_{r,i}$ and $t_{r,2l} < t_{l,i}$ for any $i\leq 2l-1$ and $L_{2l} \geq
s'_{2l-1}+6\varepsilon$. In both cases, properties for $(i)-(iii)$ hold for $t=0$.
Notice that for any $i,j\in [1,2l]$, no agent in $A_i$ has met an agent of $A_j$.
Hence, property $(iv)$ hold for $t=0$.

Suppose that the induction hypothesis holds for all $t' < t$ and let $i =
\max \{i' \mid t_{l,i'} < t\}+1$ and $j = \min \{j' +1 \mid t_{r,j'} <
t\} - 1$. Note that by $(iv)$, we have $i \leq l-1$ and $j \geq l+2$.
By $(i)$ and $(ii)$, before step $t$, no agent in $A_{i'}$, $i \leq i'
\leq j$ has met any other agent from a set $A_{i''}$, $i' \neq
i''$. Thus, since all agents in $A_{ij}$ execute the same deterministic distributed algorithm starting simutaneously,
they have performed exactly the same moves and they have not met any
other agent before step $t$.  Suppose that an agent from $A_{ij}$
meets another agent at step $t$. Then, either the leftmost agent $a_i$
from $A_i$ meets an agent $a_{i'}$ from $A_{i'}$ with $i' < i$, or the
rightmost agent from $A_j$ meets an agent from $A_{j'}$ with $j' > j$.

By symmetry, it is enough to consider only one case. In the following,
we assume that $a_i \in A_i$ meets an agent $a_{i'} \in A_{i'}$ with
$i' < i$ at step $t$.  In this case, $t = t_{l,i}$ and thus $t_{l,i} <
t_{l,i'}$ and $t_{l,i} < t_{r,i'}$ for each $i < i'\leq j$; consequently, properties $(i)$ and
$(ii)$ hold for $t$.  Moreover, by induction hypothesis, the meeting between
$a_i$ and $a_{i'}$ occurs at a point $p \leq R_{i'} \leq R_{i-1} \leq
s_{i}-(2^{i+1}-2)\varepsilon$. First suppose that $i \leq l-1$.  By
Lemma~\ref{lem-online-gap}, we have $R_i \leq s'_{i} + 2P -\delta P
-2(2^{i+1}-2)\varepsilon = s_{i+1} - 2P + 2\varepsilon +2P -
4\varepsilon - 2^{i+2}\varepsilon + 4\varepsilon = s_{i+1} -
(2^{i+2}-2)\varepsilon$, and thus property $(iii)$ and $(iv)$ holds for $t$. Then suppose that $i =
l \geq \log(8/\delta)-1$. We have $R_i' \leq s_{l}-(8/\delta-2)\delta
P/4 = s_{l}-2P+\delta P/2 < s_{l}-2P+\delta P$. But this is impossible
since the initial position of the leftmost agent $a$ of $A_{l}$ is
$Pos[lk+1]\geq s_l$ and the power available to $a$ is $2P-\delta P$.
This concludes the proof by induction. In particular, no agent from $A_{1,l}$ ever meets any agent from
$A_{l+1,2l+1}$ and consequently, $\cA$ is neither a distributed
{\convergecast} strategy nor a distributed
{\broadcast} strategy for any source agent.
\end{proofof}

Theorems~\ref{thm:FourComp} and \ref{thm:TwoAprOpt} show that for the distributed convergecast problem on the class of trees, the competitive ratio 2 is optimal. 

\section{Conclusion and open problems}\label{s:open}


In the centralized setting, we showed that the breaking point in complexity between polynomial and NP-hard, both for the convergecast and for the broadcast problem, is already present inside the class of trees. Namely, agents' optimal power and the strategy using it can be found in polynomial time for the class of lines but it is NP-hard for the class of arbitrary trees. Nevertheless, we found polynomial approximation algorithms for both these problems. It remains open if better approximation constants can be found.

The problem of a single {\em information transfer} by mobile agents between two stationary points of the network, which we called \emph{carry} in the case of lines, is also interesting. In particular, it is an open question whether the problem of finding optimal power for this task is NP-hard for arbitrary tree networks or if a polynomial-time algorithm is possible in this case. Our reduction from 3-partition is no longer valid for this problem. 

In the distributed setting, we showed that 2 is the best competitive ratio for the problem of convergecast. However, our distributed algorithm for the broadcast problem is only 4-competitive. It remains open to find the best competitive ratio for the broadcast problem.

Additional natural questions related to our research include
other variations of the agent model, e.g., agents with unequal power, agents with non-zero visibility, labeled agents in the distributed setting, as well as fault-tolerant issues, such as unreliable agents or networks with possibly faulty components.


\end{document}